\newtheorem{theorem}{Theorem}[section]
\newtheorem{definition}[theorem]{Definition}
\newtheorem{observation}[theorem]{Observation}
\newtheorem{lemma}[theorem]{Lemma}
\newtheorem{claim}[theorem]{Claim}
\newtheorem{fact}[theorem]{Fact}
\newcounter{note}[section]
\newcommand{\iab}{\ensuremath{(I \setminus A) \cup B}}
\newcommand{\jop}[1]{\textsc{#1}}
\newcommand{\np}{\textsf{NP}}
\newcommand{\old}[1]{{}}
\DeclareMathOperator{\NEXT}{NEXT}
\DeclareMathOperator{\NULL}{NULL}
\newcommand{\repeattheorem}[1]{%
  \begingroup
  \renewcommand{\thetheorem}{\ref{#1}}%
  \expandafter\expandafter\expandafter\theorem
  \csname reptheorem@#1\endcsname
  \endtheorem
  \endgroup
}
\title{Dynamic Geometric Independent Set}
\date{}
\author{Sujoy Bhore\thanks{Universit\'e libre de Bruxelles (ULB), Belgium. Emails: \texttt{sujoy.bhore@gmail.com, jcardin@ulb.ac.be, john@johniacono.com, gregkoumoutsos@gmail.com }} \thanks{Supported by the Fonds de la Recherche Scientifique-FNRS under Grant no MISU F 6001 1}
\qquad
Jean Cardinal\footnotemark[1] 
\qquad
John Iacono\footnotemark[1] \footnotemark[2]
\qquad
Grigorios Koumoutsos\footnotemark[1] \footnotemark[2]
}
\renewcommand{\thenote}{\thesection.\arabic{note}}
\newcommand{\sbnote}[1]{\textcolor[rgb]{0,0.5,0}{$\ll${\bf Sujoy~\thenote:} {\sf #1}$\gg$}}
\newcommand{\jinote}[1]{\textcolor{blue}{$\ll${\bf John~\thenote:} {\sf #1}$\gg$}}
\newcommand{\gknote}[1]{\textcolor{blue}{$\ll${\bf Greg~\thenote:} {\sf #1}$\gg$}}
\newcommand{\lrbrace}[1]{\lbrace#1\rbrace}
\newcommand{\jmark}[1]{#1}
\newcommand{\jparen}[1]{(#1)}
\newcommand{\lrp}[1]{\left( #1 \right)}
\newcommand{\paragraphh}[1]{\paragraph{{#1}} }
\newcommand{\manysquares}{\raisebox{1pt}{\resizebox{8pt}{!}{$_{\square \square}^{\square \square}$}}}
\newcommand{\Centered}{\jmark{\boxbox (\Sinput)}}
\newcommand{\Sinput}{\jmark{S}}
\newcommand{\Iopt}[1]{\jmark{I_\text{opt}}\jparen{#1}}
\newcommand{\OPT}[1]{\jmark{\alpha}\jparen{#1}}
\newcommand{\Iapprox}{\jmark{I}}
\newcommand{\Ipath}[1]{\jmark{I^{#1}}}
\newcommand{\Ippath}[1]{\jmark{\hat{I}^{#1}}}
\newcommand{\Iintopt}[1]{\jmark{I^{#1}_{\text{$\delta$-opt}}}}
\newcommand{\Iintapprox}[1]{\jmark{I^{#1}_{\text{$\delta$-approx}}}}
\newcommand{\node}{\jmark{\eta}}
\newcommand{\Qinf}{\jmark{\stackrel{\infty}{Q}}}
\newcommand{\Qtree}{\jmark{Q}}
\newcommand{\Qleaves}{\jmark{Q_{\text{leaves}}}}
\newcommand{\Qinternal}{\jmark{Q_{\text{internal}}}}
\newcommand{\Qpaths}{\jmark{Q_{\text{paths}}}}
\newcommand{\Interval}[2]{\jmark{\delta}_{#1}\jparen{#2}}
\newcommand{\Squarecenter}[1]{\jmark{\boxdot}\jparen{#1}}
\newcommand{\Squares}[1]{{\jmark{\manysquares}\jparen{#1}}}
\newcommand{\Path}{\jmark{P}}
\newcommand{\Pathtop}{\jmark{P_{\text{top}}}}
\newcommand{\Pathbottom}{\jmark{P_{\text{bottom}}}}
\newcommand{\Pathq}[1]{\jmark{{P}}_{#1}}
\newcommand{\Pathextend}[1]{\jmark{\hat{P}}_{#1}}
\newcommand{\Square}[1]{#1_{\jmark{\square}}}
\newcommand{\Depth}[1]{\jmark{d}\jparen{#1}}
\newcommand{\Depthmax}[2]{\jmark{d'}_{#1}\jparen{#2}}
\newcommand{\Depths}[1]{\jmark{D}\jparen{#1}}
\newcommand{\Node}[1]{\jmark{\node}\jparen{#1}}
\newcommand{\Marked}{\jmark{M}}
\newcommand{\qa}{\jmark{\textsc{I}}}
\newcommand{\qb}{\jmark{\textsc{II}}}
\newcommand{\qc}{\jmark{\textsc{III}}}
\newcommand{\qd}{\jmark{\textsc{IV}}}
\newcommand{\Half}[2]{\jmark{\moo}\jparen{#2}} 
\DeclareMathOperator{\opt}{OPT}
\DeclareMathOperator{\alg}{ALG}
\begin{document}

\maketitle

\thispagestyle{empty}
\begin{abstract}

We present fully dynamic approximation algorithms for the Maximum Independent Set problem on several types of geometric objects: intervals on the real line, arbitrary axis-aligned squares in the plane and axis-aligned $d$-dimensional hypercubes.\\

It is known that a maximum independent set of a collection of $n$ intervals can be found in $O(n\log n)$ time, while it is already \textsf{NP}-hard for a set of unit squares. Moreover, the problem is inapproximable on many important graph families, but admits a \textsf{PTAS} for a set of arbitrary pseudo-disks. Therefore, a fundamental question in computational geometry is whether it is possible to maintain an approximate maximum independent set in a set of dynamic geometric objects, in truly sublinear time per insertion or deletion. 
In this work, we answer this question in the affirmative for intervals, squares and hypercubes.\\

First, we show that for intervals a $(1+\varepsilon)$-approximate maximum independent set can be maintained with logarithmic worst-case update time. This is achieved by maintaining a locally optimal solution using a constant number of constant-size exchanges per update.\\

We then show how our interval structure can be used to design a data structure for maintaining an expected constant factor approximate maximum independent set of axis-aligned squares in the plane, with polylogarithmic amortized update time. Our approach generalizes to $d$-dimensional hypercubes, providing a $O(4^d)$-approximation with polylogarithmic update time.\\ 

Those are the first approximation algorithms for any set of dynamic arbitrary size geometric objects; previous results required bounded size ratios to obtain polylogarithmic update time. Furthermore, it is known that our results for squares (and hypercubes) cannot be improved to a $(1+\varepsilon)$-approximation with the same update time.  
\end{abstract}

\newpage
\thispagestyle{empty}
\tableofcontents

\clearpage
\setcounter{page}{1}
\section{Introduction}

We consider the maximum independent set problem on dynamic collections of geometric objects. We wish to maintain, at any given time, an 
approximately maximum subset of pairwise nonintersecting objects, under the two natural update operations of insertion and deletion of 
an object. Before providing an outline of our results and the methods that we used, we briefly summarize the background and state of the art related to the independent set problem and dynamic algorithms on geometric inputs.

In the maximum independent set (MIS) problem, we are given a graph $G = (V,E)$ and we aim to produce a subset $I \subseteq V$ of maximum cardinality, such that no two vertices in $I$ are adjacent. This is one of the most well-studied algorithmic problems and it is among the Karp's 21 classic \textsf{NP}-complete problems \cite{karp1972reducibility}. Moreover, it is well-known to be hard to approximate: no polynomial time algorithm can achieve an approximation factor $n^{1-\epsilon}$, for any constant $\epsilon>0$, unless $\P = \np$~\cite{Zuck07,Hastad1999}.  

\paragraph{Geometric Independent Set.} 
Despite those strong hardness results, for several restricted cases of the MIS problem better results can be obtained. We focus on such cases with geometric structure, called \textit{geometric independent sets}. Here, we are given a set $S$ of geometric objects, and the graph $G$ is their intersection graph, where each vertex corresponds to an object, and two vertices form an edge if and only if the corresponding objects intersect.  

A fundamental and well-studied problem is the 1-dimensional case where all objects are intervals. This is also known as the \textit{interval scheduling} problem and has several applications in scheduling, resource allocation, etc. This is one of the few cases of the MIS problem which can be solved in polynomial time; it is a standard textbook result (see e.g.~\cite{KleinTardos}) that the greedy algorithm which sweeps the line from left to right and at each step picks the interval with the leftmost right endpoint produces always the optimal solution in time $O(n \log n)$. 

Independent sets of geometric objects in the plane such as axis-aligned squares or rectangles have been extensively studied due to their various applications in e.g., VLSI design~\cite{HM85}, map labeling~\cite{AKS97} and data mining~\cite{KMP98,BDMR01}. However, even the case of independent set of unit squares is \textsf{NP}-complete~\cite{fowler1981optimal}. On the positive side several geometric cases admit a polynomial time approximation scheme (PTAS). One of the first results was due to Hochbaum and Maass who gave a PTAS for unit $d$-cubes in $R^d$~\cite{HM85} (therefore also for unit squares in 2-d). Later, PTAS were also developed for arbitrary squares and more generally hypercubes and fat objets~\cite{chan2003polynomial, erlebach2005polynomial}. More recently, Chan and Har-Peled~\cite{chan2012approximation} showed that for all pseudodisks (which include squares) a PTAS can be achieved using local search.

Despite this remarkable progress, even seemingly simple cases such as axis-parallel rectangles in the plane, are notoriously hard and no PTAS is known. For rectangles, the best known approximation is $O(\log \log n)$ due to the breakthrough result of Chalermsook and Chuzhoy~\cite{Parinya09}. Recently, several QPTAS were designed~\cite{aw-asmwir-13,ce-amir-16}, but still  no polynomial $o(\log \log n)$-approximation is known. 

\paragraph{Dynamic Independent Set.} In the dynamic version of the Independent Set problem, nodes of $V$ are inserted and deleted over time. The goal is to achieve (almost) the same approximation ratio as in the offline (static) case while keeping the update time, i.e.,  the running time required to compute the new solution after insertion/deletion, as small as possible. Dynamic algorithms have been a very active area of research and several fundamental problems, such as Set-Cover have been studied in this setting (we discuss some of those results in Section~\ref{sec:related}).

\paragraph{Previous Work.} Very recently, Henzinger et al.~\cite{Henz20} studied geometric independent set for intervals, hypercubes and hyperrectangles. They obtained several results, many of which extend to the substantially more general weighted independent set problem where objects have weights (we discuss this briefly in Section~\ref{sec:related}). Here we discuss only the results relevant to our context.

Based on a lower bound of Marx~\cite{Marx07} for the offline problem, Henzinger et al.~\cite{Henz20} showed that any dynamic $(1+\epsilon)$-approximation for squares requires $\Omega(n^{1/\epsilon})$ update time, ruling out the possibility of sublinear dynamic approximation schemes.

As for upper bounds, Henzinger et al.~\cite{Henz20} considered the setting where all objects are located in $[0,N]^d$ and have minimum length edge of 1, hence therefore also bounded size ratio of $N$. They presented dynamic algorithms with update time $\mathrm{polylog}(n,N)$. We note that in general, $N$ might be quite large such as $\exp{n}$ or even unbounded, thus those bounds are not sublinear in $n$ in the general case. In another related work, Gavruskin et al.~\cite{gavruskin2015dynamic} considered the interval case under the assumption that no interval is fully contained in other interval and obtained an optimal solution with $O(\log n )$ amortized update time.

Quite surprisingly, no other results are known. In particular, even the problem of efficiently maintaining an independent set of intervals, without any extra assumptions on the input, remained open.

\subsection{Our Results}

In this work, we present the first dynamic algorithms with polylogarithmic update time for geometric versions of the independent set problem.

First, we consider the 1-dimensional case of dynamic independent set of intervals. 

\begin{theorem}
\label{thm:intervals_result}
There exist algorithms for maintaining a $(1+\epsilon)$-approximate independent set of intervals under insertions and deletions of intervals, in $O_{\epsilon}(\log n)$ worst-case time per update, where $\varepsilon>0$ is any positive constant and $n$ is the total number of intervals.
\end{theorem}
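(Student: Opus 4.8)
The plan is to maintain a solution that is locally optimal with respect to a bounded-size local search, show that any such solution is a $(1+\varepsilon)$-approximation, and then argue that after each update only a constant number of constant-size local improvements suffice to restore local optimality, each implementable in $O(\log n)$ time using an appropriate balanced search structure over the interval endpoints. Concretely, fix $k = k(\varepsilon) = \Theta(1/\varepsilon)$ and call a solution $\Iapprox$ \emph{$k$-locally optimal} if there is no way to remove at most $k$ intervals from $\Iapprox$ and add at least $k+1$ intervals of $\Sinput$ while keeping the set independent; equivalently, no ``$(\le k)$-for-$(\ge k{+}1)$'' exchange improves it. The first step is to prove that $k$-local optimality implies $|\Iapprox| \ge \frac{k}{k+1}\,\OPT{\Sinput}$, i.e.\ a $(1+1/k)$-approximation. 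For intervals this should follow from a clean exchange argument: sort the optimal solution $\Iopt{\Sinput}$ and the maintained solution $\Iapprox$ by left endpoint, and observe that because intervals on a line have a natural linear order, any ``long'' maximal run of consecutive optimal intervals not dominated by $\Iapprox$ yields a local improvement; charging optimal intervals to nearby intervals of $\Iapprox$ in blocks of size $k{+}1$ gives the bound. (One can alternatively phrase this through the known PTAS-by-local-search machinery for pseudodisks cited above, but the 1-D case admits a direct proof.)

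The second step is the dynamic maintenance. I would store all intervals currently in $\Sinput$ in a balanced binary search tree keyed by left endpoint, augmented so that each subtree records, say, the minimum right endpoint (to support the classic greedy-style ``leftmost right endpoint'' queries) and enough summary information to detect a violated local-improvement window in logarithmic time. The key structural claim to establish is \emph{locality of repair}: when an interval is inserted or deleted, the set of intervals that can possibly participate in a newly-available improving exchange lies within a window of $O(k)$ intervals of $\Iapprox$ around the affected region, measured in the left-to-right order of the current solution; hence a single improving exchange, if one exists, can be found and applied in $O(f(k)\cdot\log n)$ time, and applying it only creates new improvement opportunities in an $O(k)$-sized window again. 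The second piece of this claim is a \emph{potential/termination} argument: each applied exchange strictly increases $|\Iapprox|$ or strictly decreases some secondary lexicographic potential (e.g.\ the sorted vector of right endpoints), so only $O(g(k))$ exchanges are triggered per update before $k$-local optimality is restored, giving $O_\varepsilon(\log n)$ worst-case time.

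The main obstacle I anticipate is making the ``locality of repair'' claim literally true and bounding the \emph{cascade} of exchanges. It is not a priori obvious that a single insertion cannot set off a long chain of improvements propagating far along the line; controlling this requires the right invariant. A natural candidate is to maintain not merely $k$-local optimality but a canonical $k$-locally-optimal solution — e.g.\ the lexicographically-smallest one under the ``sort by right endpoint'' order, which is exactly what the static greedy produces in the $k{=}\infty$ case — so that the repaired solution is uniquely determined and differs from the old one only in a bounded window; then a careful case analysis (insertion that is added vs.\ not added; deletion of an interval in $\Iapprox$ vs.\ not) shows the symmetric difference has size $O(k)$ and is confined near the update. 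The remaining routine work is the augmentation of the search tree so that ``is there a violated improving window, and where'' is an $O(\log n)$ query, which is standard once the window has bounded combinatorial description. I would also handle the trivial bookkeeping (the constants hidden in $O_\varepsilon$ blow up polynomially in $1/\varepsilon$, which is fine since $\varepsilon$ is constant) and note that worst-case rather than amortized time follows because the number of exchanges per update is bounded by a function of $\varepsilon$ alone.
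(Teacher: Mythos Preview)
Your high-level plan matches the paper's: maintain a $k$-maximal independent set (your ``$k$-locally optimal'' solution), invoke the Chan--Har-Peled local-search bound for the approximation ratio, and use an augmented balanced BST supporting leftmost-right-endpoint queries to find improving exchanges in $O(k\log n)$ time. That part is fine.

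The genuine gap is precisely at the point you flag as ``the main obstacle.'' Your two proposed fixes for the cascade problem do not work as stated. First, the potential argument (``each applied exchange strictly increases $|\Iapprox|$'') bounds the \emph{total} number of exchanges over a sequence of insertions, but not per update: the paper exhibits an instance (their Figure~\ref{fig:maximal_global}) where a single insertion triggers $\Omega(n/k)$ cascading $k$-for-$(k{+}1)$ exchanges, each legitimately increasing $|\Iapprox|$ by one, so you get at best an amortized bound on insertions only, and even that collapses once deletions are allowed. Second, your fallback of maintaining a canonical (e.g.\ lex-smallest) $k$-locally optimal solution does not help: in that same instance the canonical solution itself changes by $\Omega(n)$ after one insertion, so the claim that ``the repaired solution \ldots\ differs from the old one only in a bounded window'' is false without a further structural invariant.

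The paper's key idea, which you are missing, is to maintain a \emph{stronger} invariant than $k$-maximality: in addition, no interval of $S\setminus I$ may be strictly contained in an interval of $I$ (the ``no-containment'' property). The cascading examples all rely on such containments; once they are forbidden, the paper shows (and this is the bulk of the technical work, Lemmas~\ref{lem:higher_exchange} and~\ref{lem:j-to-j}) that after any single insertion or deletion, restoring both invariants requires only $O(1)$ exchanges of size $O(k)$, searched for via carefully specified ``proper alternating paths'' satisfying a leftmost-right-endpoint condition. That is what gives the worst-case $O_\varepsilon(\log n)$ bound. Your proposal is on the right track but stops short of identifying this invariant, and without it the locality-of-repair claim is simply false.
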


This is the first algorithm yielding such a guarantee in the comparison model, in which the only operations allowed on the input are comparisons between endpoints of intervals.

To achieve this result we use a novel application of local search to dynamic algorithms, based on the paradigm of Chan and Har-Peled~\cite{chan2012approximation} for the static version of the problem. At a very high-level (and ignoring some details) our algorithms can be phrased as follows: Given our current independent set $I$ and the new (inserted/deleted) interval $x$, if there exists a subset of $t \leq k$ intervals which can be replaced by $t+1$, do this change. We show that using such a simple strategy, the resulting independent set has always size at least a fraction $(1 - \frac{c}{k})$ of the maximum. The main ideas and the description of our algorithms is in Section~\ref{sec:intervals_main}. The detailed analysis and proof of running time are in Section~\ref{sec:intervals_details}.

Next, we consider the problem of maintaining dynamically an independent set of squares. A natural question to ask is whether we can again apply local search. The problem is not with local search itself: an $(1+\epsilon)$-approximate MIS can be obtained if there are no local exchanges of certain size possible (due to the result of Chan and Har-Peled~\cite{chan2012approximation}); the problem is algorithmically implementing these local exchanges, which comes down to the issue that the 2-D generalization of maximum has linear size and not constant size. Note that the lower bound of Henzinger et. al.~\cite{Henz20} also implies that local search on squares cannot be implemented in polylogarithmic time. 

To circumvent this, we adopt a completely different technique, reducing the case of squares to intervals while losing a $O(1)$ factor in the approximation. We conjecture that one could implement local search to yield a $(1+\epsilon)$-approximation by using some kind of sophisticated range search to find local exchanges, at a cost of $O(n^c)$ for some $c>1$, which is another tradeoff that conforms to the lower bound of~\cite{Henz20}.

\begin{theorem}
\label{thm:squares_result}
There exist algorithms for maintaining an expected $O(1)$-approximate independent set of axis-aligned squares in the plane under insertions and deletions of squares, in $O(\log^5 n)$ amortized time per update, where $n$ is the total number of squares.
\end{theorem}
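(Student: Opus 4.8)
The plan is to reduce the dynamic squares problem to the dynamic intervals problem of Theorem~\ref{thm:intervals_result}, losing only a constant factor in the approximation and a polylogarithmic factor in the update time. The key geometric observation is that the family of arbitrary axis-aligned squares behaves, up to constant factors, like a union of a bounded number of ``one-dimensional'' families once we organize the squares by size classes and grid location. Concretely, partition the squares into size classes, where class $i$ contains squares with side length in $[2^i, 2^{i+1})$, and impose for each class $i$ a grid of cell width $2^{i+1}$. Each square in class $i$ then overlaps a constant number (at most four) of grid cells of its class; charge each square to one such cell, say the one containing its center. Within a single cell of class $i$, all squares have comparable size and their centers lie in a region of bounded diameter relative to that size, so any independent set restricted to one cell has $O(1)$ size; more usefully, squares assigned to the same \emph{column} of cells across one coordinate can be handled by a one-dimensional structure on the other coordinate. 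The standard trick (going back to Chan's shifted-grid / hierarchical-grid approach, and used for fat objects) is a randomly shifted quadtree-like decomposition: pick a random shift, and consider the squares that are ``local'' to a cell at the appropriate level, i.e. small compared to the cell and not cut by the cell boundary. A random shift guarantees that each square survives (is not cut at its natural level) with constant probability, so the expected loss is a constant factor.

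First I would set up the randomly-shifted hierarchical grid and define, for each square $s$, the level $\ell(s)$ at which $s$ is small relative to cell size; argue via the random shift that with probability $\Omega(1)$ the square is not crossed by any grid line at level $\ell(s)$, and condition on a maximum independent set $I_{\mathrm{opt}}$ to conclude that the expected number of squares of $I_{\mathrm{opt}}$ that are uncrossed is $\Omega(|I_{\mathrm{opt}}|)$. Second, for the uncrossed squares, observe that inside each grid cell the squares living there form, after projecting onto one axis, essentially an interval instance: two such squares of comparable size inside a bounded cell intersect iff their projections onto (say) the $x$-axis intersect, up to a constant blow-up, so a $(1+\epsilon)$-approximate independent set of those projected intervals (maintained by Theorem~\ref{thm:intervals_result}) yields an $O(1)$-approximate independent set of the squares in that cell. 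Third, combine solutions across cells: cells at the same level with appropriate parity are pairwise non-adjacent for uncrossed squares, so we can take the union over a constant number of ``color classes'' of cells at each level and over all levels, and a greedy/recursive merge up the hierarchy (à la the quadtree dynamic-programming merge, but approximate) loses only another constant factor. Fourth, handle updates: an inserted/deleted square touches $O(\log n)$ levels and $O(1)$ cells per level, each triggering one interval update costing $O(\log n)$ by Theorem~\ref{thm:intervals_result}, plus the cost of re-merging along one root-to-leaf path in the hierarchy; with the data structures for maintaining the per-cell interval instances and the merge values organized as balanced trees, this is $O(\operatorname{polylog} n)$ per update, and a careful accounting gives the claimed $O(\log^5 n)$ amortized bound.

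The main obstacle I anticipate is the \emph{merging step} and its dynamic maintenance, not the geometric reduction per se. Taking independent sets computed independently inside many cells and splicing them into one global independent set requires that squares assigned to different cells that nonetheless intersect (this can happen across adjacent cells at different levels of the hierarchy, since a large square at a high level can sit near many small squares at low levels) are reconciled; the clean way is to only ever ``keep'' a square at its own level and forbid, at lower levels, the region near higher-level chosen squares — but doing this \emph{dynamically}, so that one square insertion does not cascade into re-solving many cells, is delicate. I expect the resolution is to make the cross-level interaction one-directional (a square only conflicts with strictly larger squares in a bounded neighborhood, of which there are $O(1)$ per level and $O(\log n)$ levels total), to maintain for each cell a ``restricted'' interval instance that already excludes the forbidden regions, and to show that updating these restricted instances along the $O(\log n)$ affected cells is enough; the amortization then absorbs the occasional larger rebuild. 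A secondary subtlety is that the approximation guarantee is only \emph{in expectation} over the random shift, so one must be careful that the shift is chosen once and the expectation is over that single choice (not re-randomized per update), and that the per-cell $(1+\epsilon)$ interval guarantee composes with the constant-factor grid loss to give a fixed $O(1)$ overall — tracking these constants and the probability bookkeeping is where the real work lies.
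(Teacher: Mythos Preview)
Your proposal has a genuine gap that makes it unworkable as stated, and it misses the central idea of the paper.

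\paragraph{The $O(\log n)$-levels assumption is false.} You write that ``an inserted/deleted square touches $O(\log n)$ levels and $O(1)$ cells per level.'' This is exactly the assumption that the paper is designed to \emph{avoid}. With arbitrary axis-aligned squares there is no bound on the aspect ratio, so the number of size classes (levels of your hierarchical grid) can be $\Theta(\log N)$ where $N$ is the ratio of the largest to the smallest side length, and $N$ is unbounded in $n$. Your accounting of update cost as ``$O(\log n)$ levels times $O(\log n)$ per interval update'' therefore does not hold. This is precisely the limitation of the prior work of Henzinger et~al.\ that the present paper overcomes; a shifted-grid argument that enumerates levels reproduces their $\mathrm{polylog}(n,N)$ bound, not a $\mathrm{polylog}(n)$ one.

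\paragraph{The projection-to-intervals reduction is not what is needed.} Inside a single grid cell at a single level, squares of comparable size do \emph{not} intersect iff their $x$-projections intersect (two such squares can overlap in $x$ yet be separated in $y$), and in any case the maximum independent set restricted to one such cell has size $O(1)$, so no interval machinery is needed there. The paper's reduction to intervals is of a completely different nature: the intervals live along \emph{monochild paths of a compressed quadtree}. Each centered square $s$ is associated with its quadtree node, long chains of monochild nodes are grouped into paths, each path is split into four \emph{monotone subpaths} by the quadrant of the child, and the interval of $s$ records the range of \emph{depths} along its monotone subpath whose node-centers $s$ covers. This is what turns a potentially unbounded-depth quadtree into a collection of one-dimensional problems whose total size is $O(n)$.

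\paragraph{What the paper actually does, and why your merge worry is answered differently.} Because the quadtree is unbalanced, a link--cut tree is layered on top of the compressed quadtree to support searches and bulk pointer updates in $O(\log n)$ time. Structural changes to the quadtree cause paths to split and merge, so the dynamic interval structure of Theorem~\ref{thm:intervals_result} must be extended to support split, merge, clip, and extend operations while preserving $k$-validity; this is the content of Section~\ref{sec:intervals_extend}. Cross-path conflicts are handled not by ``forbidding regions'' at lower levels but by requiring the per-path independent set to be \emph{protected}, i.e.\ to avoid the square of the node just below the path; protected sets from different paths and single squares from leaves are then automatically pairwise disjoint (Lemma~\ref{l:combine}). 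Finally, the leftmost/rightmost queries needed by the interval algorithm are answered globally via a single $4$-dimensional range-search structure on all centered squares, which is where the $\log^5 n$ arises. None of these ingredients appears in your sketch, and the first one (handling unbounded depth via monochild paths plus split/merge on the interval structure) is not recoverable from the shifted-grid-per-level framework you describe.
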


To obtain this result, we reduce the case of squares to intervals using a random quadtree and decomposing it carefully into relevant paths. First, we show that for the static case, given a $c$-approximate solution for intervals we can obtain a $O(c)$-approximate solution for squares (Section~\ref{s:quadtree}). To make this dynamic, more work is needed: we need a dynamic interval data structure supporting extra operations such as split, merge and some more. For that reason, we extend our structure from Theorem~\ref{thm:intervals_result} to support those additional operations while maintaining the same approximation ratio (Section~\ref{sec:intervals_extend}). Then, we dynamize our random quadtree approach to interact with the extended interval structure and obtain a $O(1)$-approximation for dynamic squares (Section~\ref{sec:squares_dynamic}).

We then show in Section~\ref{s:hyper} that our approach naturally extends to axis-aligned hypercubes in $d$ dimensions, providing a $O(4^d)$-approximate independent set in $O(2^d \log ^{2d+1} n)$ time.

\subsection{Other Related Work}
\label{sec:related}

\paragraph{Dynamic Algorithms.} Dynamic graph algorithms has been a continuous subject of investigation for many decades; see~\cite{EGI99}.  Over the last few years there has been a tremendous progress and various breakthrough results have been achieved for several fundamental problems. Among others, some of the recently studied problems are set cover~\cite{AAGPS19,BHN19,GKKP17}, geometric set cover and hitting set~\cite{DBLP:conf/compgeom/AgarwalCSXX20}, 
vertex cover~\cite{DBLP:conf/soda/BhattacharyaK19}, planarity testing~\cite{DBLP:conf/stoc/HolmR20} and graph coloring~\cite{DBLP:journals/algorithmica/BarbaCKLRRV19,DBLP:conf/soda/BhattacharyaCHN18,DBLP:conf/stacs/Henzinger020}.

A related problem to MIS is the problem of maintaining dynamically a \textit{maximal independent set}; this problem has numerous applications, especially in distributed and parallel computing. Since maximal is a local property, the problem is ``easier'' than MIS and allows for better approximation results even in general graphs. Very recently, several remarkable results have been obtained in the dynamic version of the problem~\cite{DBLP:conf/stoc/AssadiOSS18, DBLP:conf/soda/AssadiOSS19, DBLP:conf/focs/BehnezhadDHSS19, DBLP:conf/focs/ChechikZ19}.

\paragraph{Weighted Independent Set.} The maximum independent set problem we study here is special case of the more general weighted independent set (WIS) problem where each node has a weight and the goal is to produce an independent set of maximum weight. Clearly, MIS is the special case of WIS where all nodes have the same weight. 

The WIS problem has also been extensively studied. Usually stronger techniques that in MIS are needed. For instance, the greedy algorithm for intervals does not apply and obtaining the optimal solution in $O(n \log n)$ time requires a standard use of dynamic programming~\cite{KleinTardos}. Similarly, for squares the local-search technique of Chan and Har-Peled~\cite{chan2012approximation} does not provide a PTAS. This is the main reason that our approach here does not extend to the dynamic WIS problem.  


We note that dynamic WIS was studied in the recent work of Henzinger et al.~\cite{Henz20}. Authors provided dynamic algorithms for intervals, hypercubes and hyperrectangles lying in $[0,N]^d$ with minimum edge length 1, with update time polylog$(n,N,W)$, where $W$ is the maximum weight of an object.

\clearpage
\section{Outline of our Contributions}

In this section, we give a concise overview of the techniques involved in our algorithms.
The details of the proofs are deferred to the following sections.

\subsection{Intervals}
\label{sec:intervals_main}

We now give an overview of our dynamic algorithms for intervals achieving the bounds of Theorem~\ref{thm:intervals_result}.

\paragraph{Notation.} In what follows, $S$ denotes the current set of intervals, $n=|S|$ is the number of intervals, and $\alpha (S)$ denotes the size of a maximum independent set of $S$. We will show that our algorithms maintain a dynamic independent set $I$ such that $|I| \geq (1-\epsilon) \alpha(S)$ in time $O_{\varepsilon}(\log n)$, for $0 < \epsilon <1$. 

Note that while stating the results in the Introduction section, we used $a>1$ to denote the approximation ratio of an algorithm, meaning that $\opt / \alg \leq a$. Showing that $|I| \geq (1-\epsilon) \alpha(S)$ is equivalent to showing a $(1+\epsilon')$-approximation for $\epsilon' = \frac{1}{1-\epsilon} -1$ and Theorem~\ref{thm:intervals_result} follows.

\paragraph{Intuition.} We begin with some intuition and high-level ideas. Let us first mention, as observed by Henzinger et. al.~\cite{Henz20}, that trying to maintain maximum independent sets
exactly is hopeless, even in the case of intervals. Indeed, there are instances where $\Omega(n)$ changes are required, 
as illustrated in Figure~\ref{fig:maximum_global}.

\begin{figure}[ht]
    \centering
    \includegraphics[scale=1]{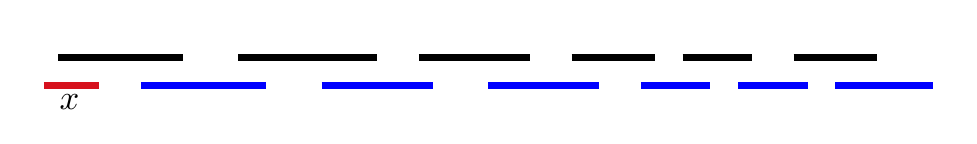}
    \caption{Example where a single insertion causes $\Omega(n)$ changes to the maximum independent set. If interval $x$ is not in the set, then the black intervals define a maximum independent set. Once $x$ gets inserted, then $x$ together with the blue intervals form the new maximum independent set.}
    \label{fig:maximum_global}
\end{figure}

Since we only aim at maintaining an approximate solution, we can focus on maintaining a $k$-maximal independent set. An independent set is $k$-maximal, if for any $t \leq k$, there is no set of $t$ intervals  that can be replaced by a set of $t+1$ intervals. Maintaining a $k$-maximal independent set implies that all changes will involve $O(k)$ intervals. 

\begin{definition}
\label{def:k-maximal}
  A $k$-maximal independent set $I \subseteq S$ for some integer $k\geq 0$ is a subcollection of disjoint intervals of $S$ such that for every positive integer $t \leq k$, there is no pair $A\in {\binom{I}{t}}$ and $B\in {S\setminus {\binom{I}{t+1}}}$ such that $(I\setminus A)\cup B$ is an independent set of $S$.
\end{definition}

Note that for $k=0$, this corresponds to the usual notion of inclusionwise maximality.
The following lemma states that local optimality provides an approximation guarantee. It is a special case of a much more general result of Chan and Har-Peled~\cite{chan2012approximation}~(Theorem 3.9).

\begin{lemma}
\label{lem:chan}
There exists a constant $c$ such that for any $k$-maximal independent set $I\subseteq S$, $|I|\geq (1-\frac ck)\cdot \alpha (S)$. 
\end{lemma}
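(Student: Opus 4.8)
The plan is to prove Lemma~\ref{lem:chan} directly for the one-dimensional interval case rather than invoking the general machinery of Chan and Har-Peled, since the interval structure makes the argument elementary. Fix a $k$-maximal independent set $I$ and a maximum independent set $I^* = \Iopt{S}$, so $|I^*| = \alpha(S)$. The goal is to exhibit a charging scheme that maps each interval of $I^*$ to a nearby interval of $I$ in such a way that no interval of $I$ receives more than roughly $1 + O(1/k)$ charge on average. Equivalently, I want to show that the intervals of $I^*$ can be partitioned into groups, each group being ``close to'' a contiguous block of $I$, and that within each group the ratio $|I^* \text{ in group}| / |I \text{ in block}|$ is at most $1 + c/k$; summing over groups gives $|I^*| \le (1+c/k)|I|$, hence $|I| \ge (1 - c'/k)\alpha(S)$ after adjusting the constant.

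The key construction I would use is the following. Sort $I$ by left endpoint (equivalently by right endpoint, since the intervals in $I$ are pairwise disjoint) as $J_1, J_2, \dots, J_m$. These intervals partition the line into $m+1$ ``gaps'' (the region before $J_1$, between consecutive $J_i$'s, and after $J_m$). Every interval of $I^*$ either is disjoint from all of $I$ — but then $I \cup \{\text{that interval}\}$ would be independent, contradicting $0$-maximality, so this cannot happen — or it intersects a set of consecutive intervals $J_i, J_{i+1}, \dots, J_j$ of $I$. Now I would look at a maximal run $J_i, \dots, J_j$ of consecutive intervals in $I$ such that the set of intervals of $I^*$ that touch this run does not touch anything outside it; more precisely, define an auxiliary bipartite ``interval overlap'' graph between $I^*$ and $I$ and take its connected components. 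In each connected component, suppose there are $p$ intervals of $I$ and $q$ intervals of $I^*$. Because both $I$ and $I^*$ are independent (disjoint) families of intervals, a component restricted to the real line is an interval graph of a very simple shape, and one can show that $q \le p + 1$: the $p$ intervals of $I$ leave at most $p+1$ gaps inside the component's span, and the $q$ disjoint intervals of $I^*$ that live there can be ``separated'' by the $J$'s into at most $p+1$ classes, with the standard exchange argument showing you cannot pack two $I^*$-intervals strictly between two consecutive $J$'s without violating maximality. The main quantitative step is then: if a component has $q = p+1$ (the worst case, a genuine ``gain of one''), then this component must have $p \ge k$, because otherwise $A = \{J_i,\dots,J_j\}$ (size $p \le k$) could be swapped for the $q = p+1$ intervals of $I^*$ in that component — and since the component is ``closed'' (no $I^*$-interval in it touches any $J$ outside, and vice versa), $(I \setminus A) \cup B$ is independent, contradicting $k$-maximality. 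So every component with a strict gain has at least $k$ intervals of $I$, and each such component contributes an excess of at most $1$ to $|I^*|$ over $|I|$; summing, $|I^*| - |I| \le (\text{number of such components}) \le |I|/k$, giving $|I| \ge \frac{k}{k+1}\alpha(S) = (1 - \frac{1}{k+1})\alpha(S)$, which is Lemma~\ref{lem:chan} with $c = 1$ (or any $c \ge 1$).

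The main obstacle I expect is making the notion of a ``closed component'' fully rigorous and verifying that the exchange it licenses is legitimate — that is, proving carefully that if $A$ is the set of $I$-intervals in one connected component of the overlap graph and $B$ is the set of $I^*$-intervals in that same component, then $(I\setminus A)\cup B$ is again an independent set. This requires checking that no interval of $B$ intersects any interval of $I \setminus A$, which is exactly what being a connected component of the overlap graph buys us, but one has to be slightly careful about boundary cases (intervals of $I^*$ nested inside a single $J_i$, intervals sharing endpoints, and the degenerate components of size $0$ or $1$). A secondary, more routine obstacle is the bound $q \le p+1$ within a component: the clean way is to order everything left to right and argue that between (the right endpoint of) $J_\ell$ and (the left endpoint of) $J_{\ell+1}$ there is at most one interval of $I^*$ fully contained, and at most one interval of $I^*$ straddling each $J_\ell$, and then observe these cannot all be distinct — or, more slickly, use $(k{=}1)$-maximality applied to each single $J_\ell$ to rule out two $I^*$-intervals both squeezed between consecutive $J$'s. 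I would present the connected-component / closed-exchange argument as the heart of the proof and relegate the $q \le p+1$ counting to a short self-contained claim.
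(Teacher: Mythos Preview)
The paper does not give its own proof of this lemma; it simply invokes Theorem~3.9 of Chan and Har-Peled.  Your attempt at a self-contained one-dimensional argument is therefore a genuine addition, and the overall skeleton (pass to connected components of the $I$--$I^*$ overlap graph, then exhibit a forbidden swap) is sound.  Unfortunately the crucial intermediate claim ``in every connected component one has $q\le p+1$'' is false, and the rest of your counting hinges on it.

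A concrete counterexample with $k=1$: take $a_1=[0,1]$, $a_2=[3,4]$, $a_3=[6,7]$ and $b_1=[-1,0.5]$, $b_2=[0.7,3.5]$, $b_3=[3.6,3.8]$, $b_4=[3.9,6.5]$, $b_5=[6.7,8]$.  Then $I=\{a_1,a_2,a_3\}$ is $1$-maximal (for each $a_i$ exactly one $b_j$ touches only $a_i$, so no $1$-to-$2$ swap is available), yet the overlap graph is a single connected component with $p=3$ and $q=5>p+1$.  The slip is in the sentence ``the $q$ disjoint intervals of $I^*$ \dots\ can be separated by the $J$'s into at most $p+1$ classes'': an interval of $I^*$ may sit strictly inside some $a_i$ (as $b_3$ does here), and then it belongs to no gap at all.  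Your ``standard exchange argument'' rules this out only when the implied swap has size $\le k$, which fails once $p>k$; so for large components you have no bound on $q-p$, and the final summation collapses.

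The strategy is salvageable, but you need a different counting step.  One clean route: sort $I=\{a_1,\dots,a_m\}$ and $I^*=\{b_1,\dots,b_M\}$ left to right, and for every $j$ observe that the $k{+}1$ intervals $b_j,\dots,b_{j+k}$ must together touch at least $k{+}1$ distinct $a_i$'s (otherwise remove those $\le k$ $a_i$'s and insert any $|A|{+}1$ of the $b$'s, a forbidden swap).  Combined with the elementary fact that the leftmost $a$ touched by $b_{j+1}$ has index at least one less than the rightmost $a$ touched by $b_j$ (disjointness of the $a$'s and of the $b$'s), this gives a recurrence of the form $f(j{+}k{+}1)\ge f(j)+k$ for the function $f(j)=\#\{i:r(a_i)<\ell(b_j)\}$, from which $M\le \tfrac{k+1}{k}m+O(1)$ follows.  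That yields the lemma with an explicit constant.  Alternatively you can keep the component decomposition but prove the correct per-component bound (which is of order $q_c-p_c\le p_c/k+O(1)$, not $q_c-p_c\le 1$); either way, the step that needs to be replaced is precisely the $q\le p+1$ claim.
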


Thus, we set as our goal the dynamic maintenance of a $k$-maximal independent set. 
It turns out, however, that even this is not easy and there might be cases where $\Omega(n/k)$ changes of $\Theta(k)$ intervals (therefore $\Omega(n)$ overall changes) are needed to maintain a $k$-maximal independent set. This is illustrated in Figure~\ref{fig:maximal_global}.

\begin{figure}[ht]
    \centering
    \includegraphics[scale=1]{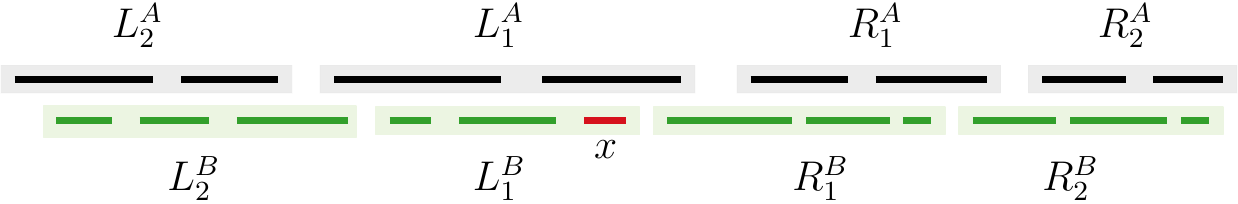}
    \caption{Example where a $k$-maximal independent set changes completely after a single insertion, for $k=2$. Before the insertion of $x$, the black intervals define a $k=2$-maximal independent set. Once $x$ gets inserted, then a 2-to-3 exchange is possible: the set $L^A_1$ of two intervals can be replaced by the set $L^B_1$ of three intervals. Once this exchange is made, other 2-to-3 exchanges are possible: the set $L^A_2$  can be replaced by $L^B_2$. Moreover,  the set $R^A_1$ can be replaced by $R^B_1$, which in turn enables the replacement of $R^A_2$ by $R^B_2$. The same changes percolate to the left and right for arbitrarily long instances. Observe that in all exchanges, one green interval is strictly contained in a black interval.}
    \label{fig:maximal_global}
\end{figure}

\paragraph{Our Approach.} To overcome those pathological instances, we observe that those occur because in a $k$-maximal independent set $I$, there might be intervals $y \in S \setminus I$ which are strictly contained in an interval $a \in I$. It turns out that if we eliminate this case, we can indeed maintain a $k$-maximal independent set in logarithmic update time. Thus our goal is to maintain a $k$-maximal independent set $I$, where there are no intervals of $S \setminus I$  that are strictly contained in intervals of $I$. We will call such independent sets \textit{$k$-valid}, as stated in the following definition.

\begin{definition}
\label{def:valid}
An independent set of intervals $I \subseteq S$ is called \textit{$k$-valid}, if it satisfies the following two properties:

\begin{enumerate}
\item \textit{No-containment:} No interval of $S \setminus I$ is completely contained in an interval of $I$.
\item \textit{$k$-maximality:} The independent set $I$ is $k$-maximal, according to definition~\ref{def:k-maximal}. 
\end{enumerate}
\end{definition}

Our main technical contribution is maintaining a $k$-valid independent set subject to insertions and deletions in time $O(k^2 \log n)$ (in fact for insertions our time is even better, $O(k \log n)$). 
Since by definition all $k$-valid independent sets are $k$-maximal, this combined with Lemma~\ref{lem:chan} implies the result. More precisely, for $\epsilon = c/k$, we get $|I| \geq (1 - \epsilon) \cdot \alpha(S)$ with update time  $O( \frac{\log n}{\epsilon})$ for insertions and $O( \frac{\log n}{\epsilon^2})$ for deletions.

\paragraph{Our Algorithm.} We now give the basic ideas behind our algorithm. 
Let $I$ be the current independent set we maintain. Suppose that there exists a pair $(A,B)$ of sizes $t$ and $t+1$, for $t \leq k$, such that $A \subseteq I$ and $B \cap I = \emptyset$ and $\iab$ is an independent set. Such a pair is a certificate that $I$ is not a $k$-maximal independent set. We call such a pair an \textit{alternating path}, since (as we show in Section~\ref{sec:intervals_details}, Lemma~\ref{lem:mes}) it induces an alternating path in the intersection graph of $I \cup B$. 


Our main algorithm is essentially based on searching alternating paths of size at most $k$. This can be done in time $O(k \log n)$ using our data structures (Section~\ref{sec:intervals_alg}).

\paragraph{Insertions.} 
Suppose a new interval $x$ gets inserted. 
Our insertion algorithm will be the following: 

\medskip

\noindent \textbf{Case 1:} $x$ is strictly contained in $a \in I$ (Figure~\ref{fig:insert_subset}). Then 
\begin{enumerate}
\item Replace $a$ by $x$.
\item Check on the left for an alternating path; if found, do the corresponding exchange. Same for right.  
\end{enumerate}

\noindent \textbf{Case 2:} $x$ is not contained. Then check if there exists alternating path $(A,B)$ involving $x$. If so, do this exchange.

\begin{figure}[t]
    \centering
    \includegraphics[scale=1]{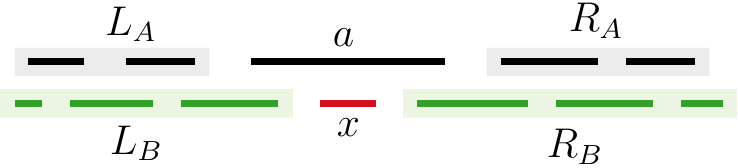}
    \caption{Case 1 of our insertion algorithm. Inserted interval $x$ is a subset of interval $a \in I$. After replacing $a$ by $x$, at most two alternating paths might be found, one to the left of $x$, namely $(L_A,L_B)$ and one to the right, $(R_A,R_B)$.}
    \label{fig:insert_subset}
\end{figure}

The proof of correctness (that means, showing that after this single exchange of the algorithm, we get a $k$-valid independent set) requires a more careful and strict characterization of the alternating paths that we choose. 
The details are deferred to Section~\ref{sec:intervals_details}.

\paragraph{Deletions.} We now describe the deletion algorithm. Suppose interval $x \in I$ gets deleted. We check for alternating paths to the left and to the right of $x$. Let $L$ be the alternating path found in the left and $R$ the one found in the right (if no such path is found, set $L$ or $R$ to $\emptyset$). We then check if they can be merged, that is, if the two corresponding exchanges can be performed simultaneously (see Figure~\ref{fig:del_alternating_both}).

\begin{figure}[ht]
    \centering
    \includegraphics[scale=1]{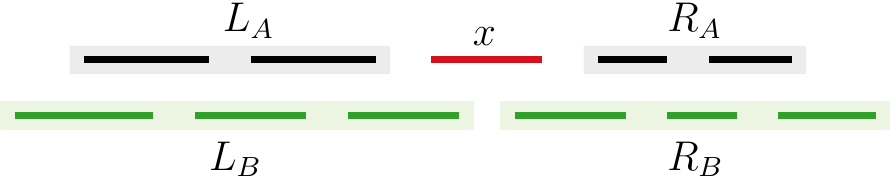}
    \caption{After deletion of interval $x$, alternating paths $L = (L_A,L_B)$ and $R = (R_A,R_B)$ are formed to the left and right of $x$ respectively. If they can be merged, we do both exchanges.}
    \label{fig:del_alternating_both}
\end{figure}

\begin{enumerate}
\item Both $L$ and $R$ are non-empty and they can be merged (Figure~\ref{fig:del_alternating_both}). We perform both exchanges. 
\item Both $L$ and $R$ are non-empty but cannot be merged. In this case perform only one of the two exchanges (details deferred to following sections). 
\item Only one of $L$ and $R$ are non-empty: Do this exchange.
\item Both $L$ and $R$ are empty. In this case, we check whether there exists an alternating path involving an interval $y$ containing $x$. If yes, then do the exchange. Otherwise do nothing.
\end{enumerate}

Again, proving correctness requires some effort. The important operation is to search for alternating paths starting from a point $x$, which can be done in time $O(k \log n)$. From this, the whole deletion algorithm can be implemented to run in time $O(k^2 \log n)$ in the worst case.


\subsection{Squares}
\label{sec:outline_squares}

Our presentation for how to maintain an approximate maximum dynamic independent set of squares is split into two sections. In Section~\ref{s:quadtree} we show how to do this statically, which is not new, but allows a clean presentation of our main novel ideas. In Section~\ref{sec:squares_dynamic}, we show how to make this dynamic, mostly using standard but cumbersome data structuring ideas.


We define a randomly scaled and shifted infinite quadtree. Associate each square with the smallest enclosing node of the quadtree. Call squares that intersect the center of their quadtree node \emph{centered} and discard all noncentered squares, see Figure~\ref{f:centered}. Nodes of the quadtree associated with squares are called the \textit{marked nodes} of the infinite quadtree, and call the quadtree $\Qtree$ the union of all the marked nodes and their ancestors. Note that multiple squares may be associated with one quadtree node.

\paragraph{High-level overview.} We will show that given a $c$-approximate solution for intervals, we can provide a $O(c)$-approximation for squares. To do that we proceed into a four-stage approach. We first focus on the static case and then discuss the modifications needed to support insertions/deletions. 

 \begin{enumerate}
     \item \label{item:centered_lose} We show that by losing a factor of $16$ in expectation, we can restrict our attention to centered squares (Lemma~\ref{l:bds}), thus we can indeed discard all non-centered squares.
    \item \label{item:path_lose} Then we focus on the quadtree $\Qtree$. We partition $\Qtree$ into leaves, internal nodes, and monochild paths, which will be stored in a compressed format. We show that given a linearly approximate solution for monochild paths of $\Qtree$, we can combine these solutions with a square from each leaf to obtain an $O(1)$-approximate solution for $\Qtree$ (Lemma~\ref{l:combine}). Roughly, if each monochild path our solution has size $\geq (1/d) \cdot \opt - \gamma $ (for some parameter $\gamma$), we get a $(2+d \cdot (\gamma+1))$-approximate solution for $\Qtree$. Thus, it suffices to solve the problem for squares stored in monochild paths. 
    
    To obtain intuition behind this, observe in Figure~\ref{f:thequad}(c) that each path has a pink region which corresponds to the region of the top quadtree node of the path minus the region of quadtree node which is the child of the bottom node of the path. We call a protected independent subset of the squares of path an independent set of squares that stays entirely within the protected region. All regions of protected paths and leaves (orange) are disjoint and thus their independent sets may be combined without risk of overlap.
This is what we do, we prove that a $O(1)$-approximate maximum independent set can be obtained with a single square associated with each leaf node and a linear approximate maximum protected independent set of each path. No squares associated with internal nodes of the quadtree form part of our independent set.

     \item \label{item:monochild_lose4} To obtain an approximate independent set in monochild paths, we partition each monochild path into four monotone subpaths, and show (Fact~\ref{f:max4}) that by loosing a factor of 4, it suffices to use only the independent set of monotone subpath with the independent set of maximum size.
     
     Let us see this a bit more closely. Figure~\ref{f:patha} illustrates such a path of length 30. Each node on the path has by definition only one child. The quadrant of a node is the quadrant where that single child lies.
   We partition the marked nodes of each path into four groups based on the quadrant's child, we call these monotone subpaths, each group is colored differently in the figure. 
   We observe that the the centers of the nodes on each monotone subpath are monotone.
   We proceed separately on each and use the one with largest independent set, losing a factor of four. 
   
     \item \label{item:monotone_lose2}  We show that independent set of centered squares in monotone subpaths reduces to the maximum independent set of intervals by losing roughly a factor of 2. More precisely, given a $c$-approximate solution for intervals, we can get a solution for monotone subpaths of size $\geq \frac{1}{2c} \opt - 1$.  (Lemma~\ref{l:indsetsize}).
     
     This is achieved as follows. As illustrated in Figure~\ref{f:pathb}, we associate each square on a monotone subpath with an interval which corresponds to the depth of its node to the depth of the deepest node on the subpath that it intersects the center of.
    For each subpath, we take the squares associated with the nodes of the subpath, and compute an independent set (red and orange intervals in the figure) with respect to the intervals associated with each square.
    We observe that while the set of squares in the previous step have independent intervals, the squares may nevertheless intersect, and may intersect the gray region, which would violate the protected requirement. However, only adjacent squares can intersect and thus by taking every other square from the independent set with respect to the intervals this new set of squares is independent with respect to the squares.
    
    By beginning this removal process with the deepest interval, the gray region in the figure, which is not part of the protected region, is also avoided. Observe the red set of squares in the figure is an independent set and avoids the gray region. 
 
 \end{enumerate}

\paragraph{Putting everything together.} Combining all those parts, we get that due to (\ref{item:monotone_lose2}), a $c$-approximate solution for intervals gives a solution for squares of monotone subpaths that is at least half the interval solution minus one. A factor of 4 is lost in the conversion from monotone paths to paths due to (\ref{item:monochild_lose4}), thus for monotone paths our solution has size $\frac{1}{8c} \opt - 1$. Consequently, due to (\ref{item:path_lose}), we get $d = 8c$ and $\gamma=1$ and this gives a $(2+16 \cdot c)$-approximation for centered squares. Finally, due to (\ref{item:centered_lose}), a $(2+16c)$-approximation for centered squares implies an $(2+16c)\cdot16=256c+32$-approximation for squares.


\begin{figure}[t]
\begin{minipage}[c]{0.4\textwidth}
    \includegraphics[width=\textwidth]{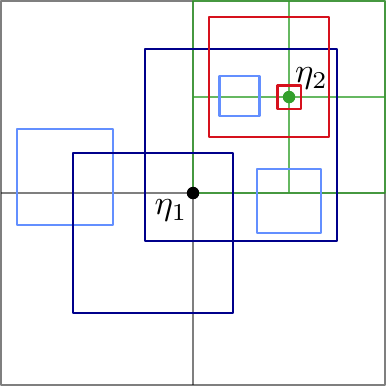}
    \end{minipage}\hspace{0.1\textwidth}
\begin{minipage}[c]{0.4\textwidth}
  \caption{Two quadtree nodes are illustrated, labelled at their center point. The dark blue squares are centered and have $\node_1$ as their node, the red squares are centered and have $\node_2$ as their node, and the light blue squares are not centered.} \label{f:centered}
\end{minipage}
\end{figure}



\begin{figure}[ht]
  \begin{center}
    \includegraphics[scale=.7]{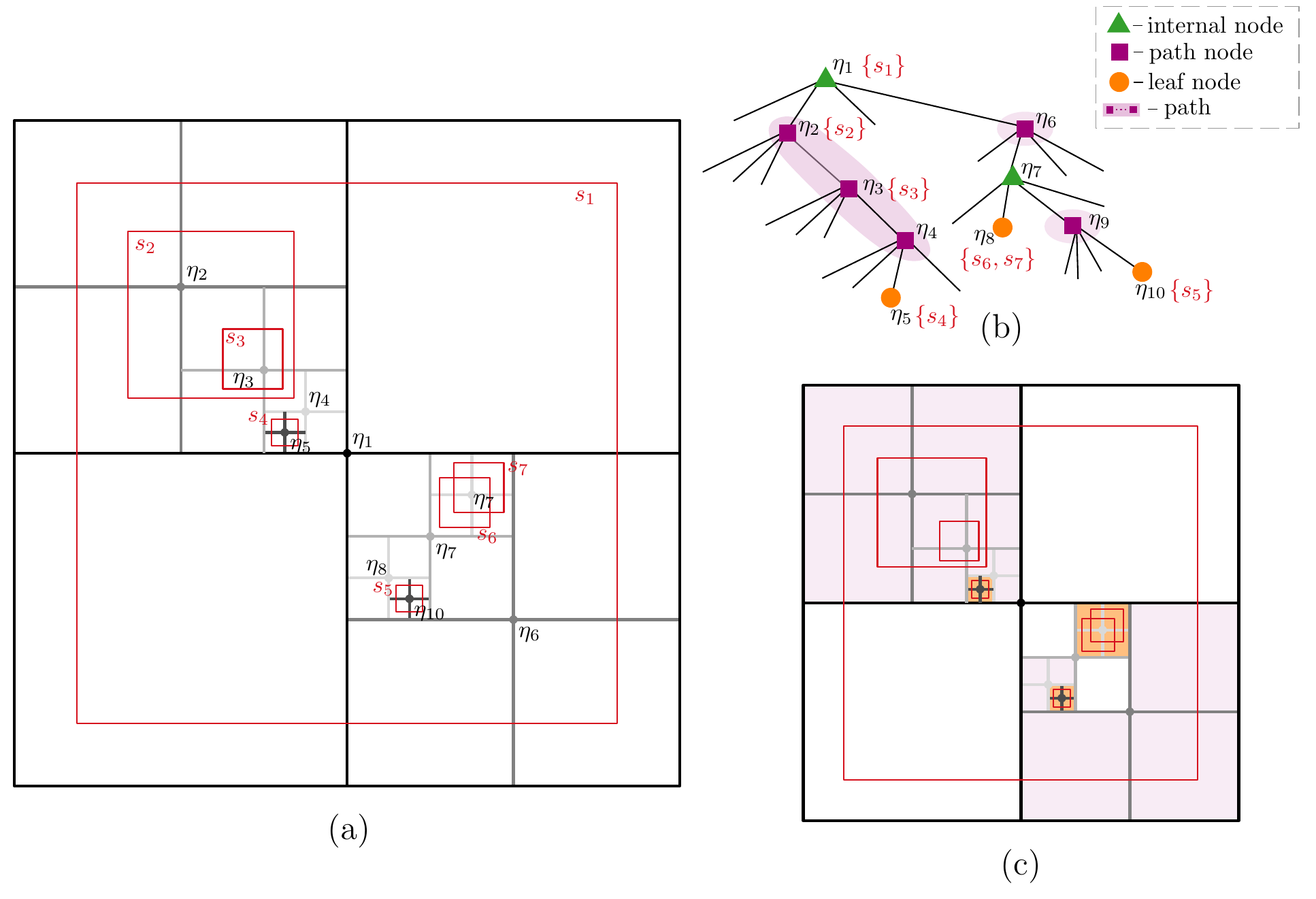}
  \end{center}
  \caption{The quadtree of the illustrated squares drawn normally (a) and as a tree (b). In (b) each node is categorized as a leaf, an internal node, or part of a monochild path. In (c) the protechted regions of each path and leaf are illustrated, and are pairwise disjoint.} \label{f:thequad}
\end{figure}




\begin{figure}
\begin{minipage}[c]{0.4\textwidth}
    \includegraphics[width=\textwidth]{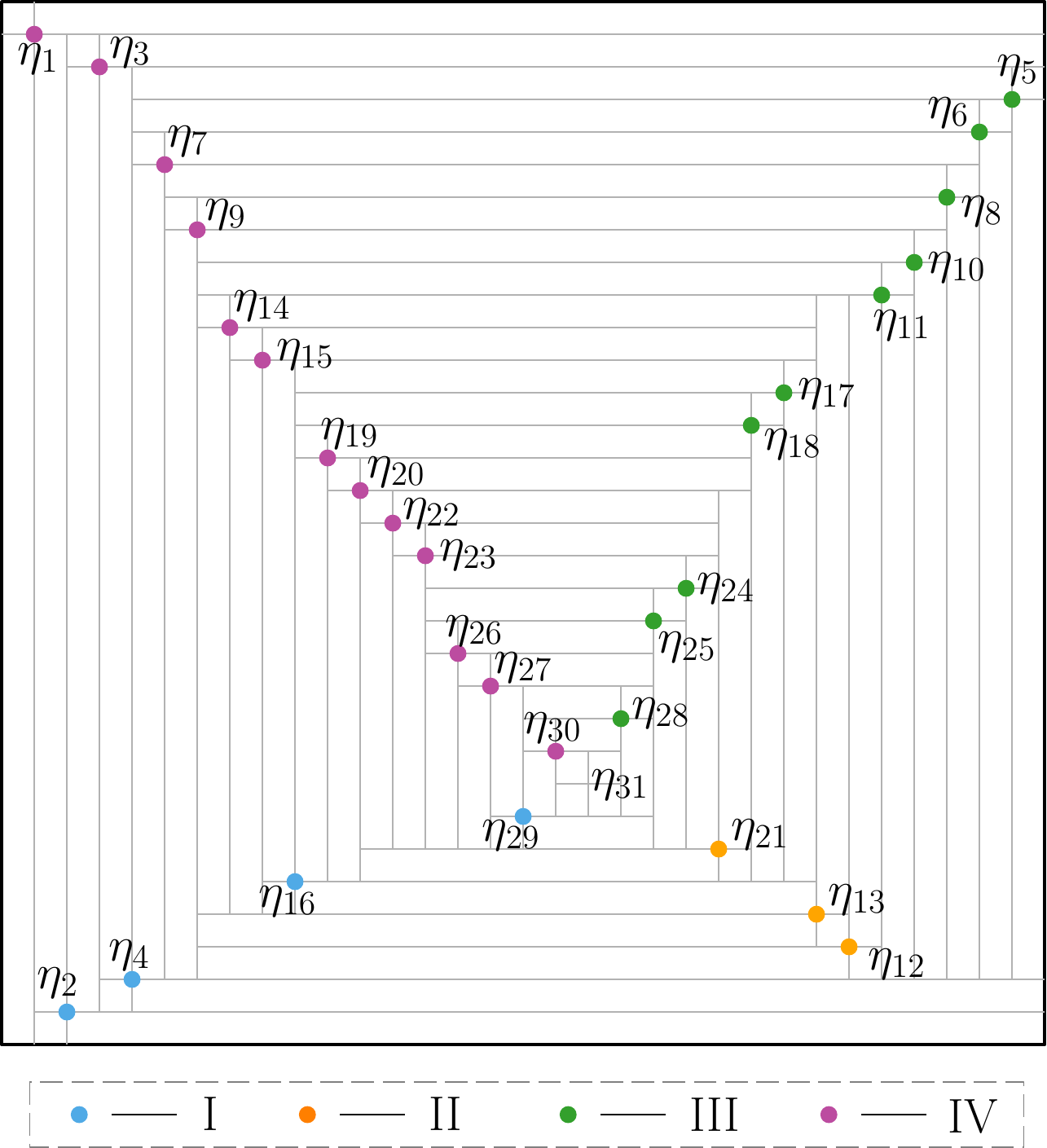}
\end{minipage} \hspace{0.15\textwidth}
\begin{minipage}[c]{0.4\textwidth}
\caption{A single monochild path in a quadtree is illustrated. Figure is not to scale, for example, if drawn to scale $\node_1$ would be in the center and the rest of the figure would be in the lower-right corner. The nodes on the path are labelled by depth, and fall into four groups based on which quadrant their child lies in. Crucially, each of these four groups is a monotone path.}
\label{f:patha}
\end{minipage}
\vspace{3pc}

\begin{minipage}[c]{0.45\textwidth}
    \includegraphics[width=\textwidth]{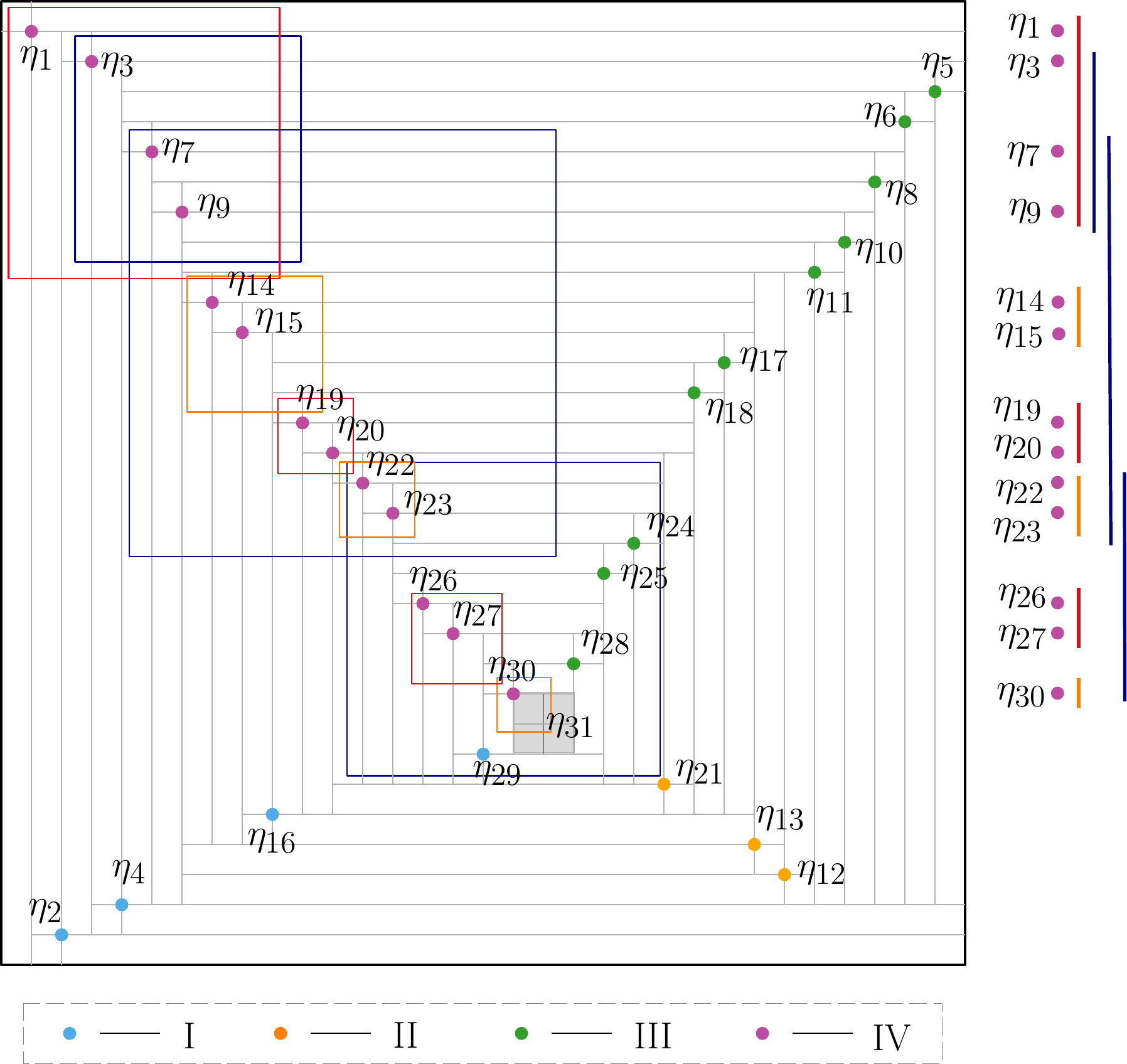}
\end{minipage}\hspace{0.1\textwidth}
\begin{minipage}[c]{0.4\textwidth}
\caption{Several squares are illustrated which are associated with the nodes of type $\qd$. They each are associated with an interval which spans the depths of the nodes of type $\qd$ that the contain the centers of. These intervals are drawn vertically on the right. Observe that the union of the orange and red intervals is an independent set of intervals, but the orange and red squares intersect. However, by taking every other interval of this union one obtains the red intervals which correspond to the red squares and which are disjoint. }
\label{f:pathb}
\end{minipage}
\end{figure}

\paragraph{Going Dynamic.} 
In order to make this basic framework dynamic we need a few additional ingredients, which are the subject of Section~\ref{sec:squares_dynamic}:

\begin{itemize}
    \item Use a link-cut structure \cite{DBLP:journals/jcss/SleatorT83} on top of the quadtree, as it is not balanced, this is needed for searching where to add a new node and various bulk pointer updates
    \item Use our dynamic interval structure within each path.
    \item Support changes to the shape of the quadtree, this can cause the paths to split and merge, and thus this may cause the splitting and merging of the underlying dynamic interval structures, which is why we needed to support these operations (see extensions of intervals, Section~\ref{sec:intervals_extend}).
    
    \item For the purposes of efficiency, all squares are stored in a four-dimensional labelled range query structure. This will allow efficient, $O(\log^5 n)$, computation of the local changes needed by the dynamic interval structure.
\end{itemize}

Those differences worsen the approximation ratio at some places.

\begin{itemize}
    \item In step~\ref{item:monochild_lose4} of the description above we said that for the static structure, we divide the monochild path into four monotone subpaths and by loosing a factor of 4, we pick among them the one whose maximum independent set has the largest size. However, in the dynamic case, this path might change very frequently; as the monotone subpath maximum independent set is unstable, we do not change from using the independent set from one subpath until it falls to being less than half of the maximum. This causes the running time bound to be amortized instead of worst-case and increases the bound by a factor of 2. That is, instead of losing a factor of 4 by focusing on monotone paths, we loose 8.
    \item In step~\ref{item:monotone_lose2} for the static case, we lose a factor of 2 due to picking every other square from the independent intervals. Dynamically we need more flexibility, we will ensure that there is between 1-3 squares between each one that was taken, and we show how a red-black tree can simply serve this purpose. Thus, given a $c$-approximate dynamic independent set of intervals structure (supporting splits and merges) we get a solution for monotone paths of size at least $4c \opt - 3$.
\end{itemize}

 Putting everything together in a similar way as in the static case, we get for monochild paths a solution of size at least $(1/32c) \opt - 3 $, i.e., having $d=32c$ and $\gamma=3$. Therefore due to step \ref{item:path_lose}, we get $(2+ d (\gamma+1))$-approximation. By replacing and using $c=2$ as the approximation factor for dynamic intervals (an easy upper bound on $1+\epsilon$) we get that our method maintains an approximate set of independent squares that is expected to be within a 4128-factor of the maximum independent set, and supports insertion and deletion in $O(\log^5 n)$ amortized time. 
 
 While 4128 seems large, it is simply a result of a combination of a steady stream of steps which incur losses of a factor of usually 2 or 4.
We note that we have chosen clarity of presentation over optimizing the constant of approximation, had we made the opposite choice, factors of two could be reduced to $1+\epsilon$. However, this is not the case everywhere, and the constant-factor losses having to do with using centered squares and not using any squares associated with internal nodes are inherent in our approach.

There is also nothing in our structure that would prevent implementation. It has many layers of abstraction, but each is simple, and probably the hardest thing to code would be the link-cut trees \cite{DBLP:journals/jcss/SleatorT83} if one could not find an implementation of this swiss army knife of operations on unbalanced trees (see \cite{DBLP:journals/jea/TarjanW09} for a discussion of the implementation issues in link-cut trees and related structures).

\clearpage
\section{Dynamic Independent Set of Intervals}
\label{sec:intervals_details}

As it is clear from discussion of Section~\ref{sec:intervals_main}, in order to maintain a $(1-\epsilon)$-approximation of the maximum independent set, it suffices to maintain an independent set which (i) is $k$-maximal and (ii) satisfies the property that no interval is contained in an interval of the independent set. This latter property is referred to as the no-containment property. In this section we describe how to maintain dynamically such an  independent set of intervals subject to insertions and deletions.

In Section~\ref{sec:intervals_def}, we start by introducing all definitions and background which will be necessary to formally define and analyse our algorithm. The formal description of our algorithm and  data structures, as well the proof of running time is in Section~\ref{sec:intervals_alg}. The proof of correctness for our insertions/deletions algorithms, which is the most technical and complicated part is in Section~\ref{sec:intervals_cor}. In Section~\ref{sec:intervals_extend} we present some extensions of our results (maintaining a $k$-valid independent set under splits and merges) which will be used in Section~\ref{sec:squares_dynamic} to obtain our dynamic structure for squares.

\subsection{Definitions and Background}
\label{sec:intervals_def}

We now define formally alternating paths (described in Section~\ref{sec:intervals_main}) and introduce some necessary background on them. In particular we will focus on specific alternating paths, called proper, defined below. 

\subsection*{Alternating Paths}

Let $(A,B)$ be a pair of independent sets of $S$ of sizes $t$ and $t+1$ for some $0\leq t\leq k$ such that $(I\setminus A)\cup B$ is an independent set of $S$. 
Hence such a pair is a certificate that the independent set $I$ is not $k$-maximal.
We observe that any inclusionwise minimal such pair induces an alternating path: a sequence of pairwise intersecting intervals belonging alternately to $B$ and $A$. 


\begin{lemma}[Alternating paths]
  \label{lem:mes}
  Let $(A,B)\in {{\binom{I}{t}}\times {S\setminus {\binom{I}{t+1}}}}$ be a pair such that $(I\setminus A)\cup B$ is an independent set, and there is no $A'\subset A$ and $B'\subset B$ such that $(A',B')$ also satisfies the property.
  Then the set $A\cup B$ induces an alternating path of length $2t+1$ in the intersection graph of $I\cup B$.
\end{lemma}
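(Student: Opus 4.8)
The goal is to extract, from a minimal pair $(A,B)$, an alternating path structure in the intersection graph of $I \cup B$. I will argue in two main steps: first, that minimality forces every interval in $A \cup B$ to lie on a single "chain" of intersections; second, that this chain alternates between $B$ and $A$ and has length exactly $2t+1$.

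\textbf{Step 1: structural consequences of minimality.} First I would set up the bipartite-style intersection structure. Since $I$ is independent, no two intervals of $A \subseteq I$ intersect; since $B$ is independent (being an independent set of $S$), no two intervals of $B$ intersect. So in the intersection graph of $A \cup B$, all edges go between $A$ and $B$. Next I would observe that because $(I \setminus A) \cup B$ is independent, every interval $b \in B$ that intersects some interval of $I$ must intersect only intervals of $A$ (otherwise $b$ would intersect an element of $I \setminus A$, contradicting independence of $(I\setminus A)\cup B$). The key claim is then: if we take the graph $H$ on vertex set $A \cup B$ with the intersection edges, then $H$ is connected, every vertex of $A$ has degree exactly $2$ in $H$ except... — actually more carefully, I would argue $H$ is a simple path. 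To see connectivity: if $H$ had a connected component $C$, let $A' = A \cap C$, $B' = B \cap C$. Then $(I \setminus A') \cup B'$ is still independent (the intervals of $B'$ only intersect intervals of $A'$ by the component property and the earlier observation), so $(A', B')$ satisfies the same property, and by minimality $C = A \cup B$, i.e. $H$ is connected. A counting/parity argument then pins down the shape: $|B| = |A| + 1$, and since each component must individually have one more $B$-vertex than $A$-vertex (else we could throw away a balanced component and stay independent — wait, we need $|B'| = |A'|+1$ for $(A',B')$ to be a valid certificate, so actually the real argument is that any sub-pair with $|B'| \le |A'|$ can be removed), I would show $H$ is a tree: if $H$ had a cycle, it would be an even cycle alternating $A/B$, and removing one $A$-vertex and one $B$-vertex on it keeps things connected and independent — contradicting minimality. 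So $H$ is a tree with $|A| + |B| = 2t+1$ vertices.

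\textbf{Step 2: the tree is a path of length $2t+1$.} Now I would show this tree is actually a path. In the tree $H$, leaves can be in $A$ or in $B$. A leaf $a \in A$ intersects exactly one $b \in B$; then $(A \setminus \{a\}, B \setminus \{b\})$ — hmm, this drops $|B|$ by one too, so it's still a valid-sized pair only if... Let me instead argue via the degree-$2$ property for $A$: suppose some $a \in A$ has degree $\ge 3$, intersecting $b_1, b_2, b_3 \in B$. Intervals $b_1, b_2, b_3$ are pairwise disjoint (subset of $B$) but all meet $a$; order them left to right, so $b_2$ lies "between" $b_1$ and $b_3$ and in particular $b_2 \subseteq \mathrm{hull}(b_1 \cup b_3)$... this should let me delete $a$ and $b_2$: the set $(A \setminus\{a\}) \cup (B \setminus \{b_2\})$ restricted appropriately stays independent because anything that met $a$ via the rest of the structure is still handled, and $b_2$'s neighbors — it only neighbored $a$ (if $b_2$ had another neighbor $a'$, then removing $a$ keeps $b_2$ attached, breaking connectivity argument differently). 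The cleanest route: in a tree where $A$-vertices all have degree $\le 2$ would already force a path if $B$-vertices also have degree $\le 2$; and a $B$-vertex $b$ of degree $\ge 3$ meeting $a_1, a_2, a_3$, with the $a_i$ pairwise disjoint and ordered, similarly lets us excise the middle $a_2$ together with... this needs care. I expect \textbf{this is the main obstacle}: ruling out high-degree vertices cleanly using only minimality plus the geometry of intervals (the fact that among three pairwise-disjoint intervals all meeting a common interval, the middle one is "squeezed"). Once both sides have max degree $2$, a connected tree on $2t+1$ vertices is a path; alternation between $A$ and $B$ is automatic since all edges are between $A$ and $B$; and the endpoints of the path must both be in $B$ (since $|B| = |A|+1$ and the path alternates, its two ends have the same color, which must be the majority color $B$), giving a path $b_0 a_1 b_1 a_2 \cdots a_t b_t$ of $2t+1$ vertices as claimed. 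Finally I would note this is a path in the intersection graph of $I \cup B$ and not just of $A \cup B$, which is immediate since $A \subseteq I$.
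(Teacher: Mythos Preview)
Your overall plan (bipartite structure, connectivity from minimality, then forest, then path via degree bounds) is sound and matches the paper's outline. But two of your key reductions share a common error.

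Both in the cycle argument and in the degree-$\geq 3$ argument you propose to remove a vertex $a$ from $A$ together with a single $B$-neighbor. The problem is that once $a$ leaves $A$, it re-enters $I \setminus A'$, and for $(I\setminus A')\cup B'$ to remain independent you must drop \emph{all} $B$-neighbors of $a$, not just one. On a cycle, $a$ still has its other cycle-neighbor in $B'$; in the degree-$3$ case $a$ still meets $b_1,b_3 \in B'$. So neither deletion produces a valid smaller pair.

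For acyclicity this gap is harmless, because no minimality argument is needed: the graph on $A\cup B$ is bipartite and, as an induced subgraph of an interval graph, chordal; a graph that is both chordal and bipartite is a forest. The paper phrases this as ``interval graph with clique number $2$, hence its connected components are caterpillars.''

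For the degree bound the fix is the real content of the paper's proof. From $b_1,b_2,b_3$ disjoint and all meeting $a$ one gets the stronger conclusion $b_2 \subseteq a$ (not merely $b_2\subseteq\mathrm{hull}(b_1\cup b_3)$), hence $a$ is $b_2$'s only neighbor in $I$. The paper then does \emph{not} remove $a$ from $A$. Instead it sweeps the intervals of $A$ from left to right: if the current leftmost $a$ has degree $3$, the prefix of the spine up through $a$, terminated by the pendant leaf $b_2$, is itself a strictly smaller valid pair; if $a$ has degree $1$, a counting argument forces some later $a'$ to have degree $3$, and one truncates from the other side at $a'$. This sweep-and-truncate step is exactly the piece your sketch was missing. (Once every $a$ has degree $2$, no $b$ can have degree $\geq 3$ either: the middle $a_2$ among three $A$-neighbors of such a $b$ would satisfy $a_2\subseteq b$ and hence have $b$ as its unique $B$-neighbor, contradicting $\deg(a_2)=2$.)
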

\begin{proof}
  In what follows, we identify the intervals of $I$ and $B$ with the corresponding vertices in their intersection graph.
  If $(A,B)$ is inclusionwise minimal, then its vertices must induce a connected component.
  The intersection graph of $I\cup B$ is an interval graph with clique number 2, hence its connected components are caterpillars.

  First note that in the caterpillar induced by $(A,B)$, every vertex $a\in A$ has degree at most three.
  Indeed, if $a$ has degree four or more, then it must fully contain two intervals of $B$, yielding a smaller pair $(A',B')$ with $t=1$.
  The intervals of $A$ are linearly ordered. Let us consider them in this order.
  
  If the first interval $a\in A$ is adjacent to three vertices in $B$, say $b_1,b_2,b_3$, then the interval $b_2$ must be fully contained in $a$, and we can end the alternating path
  with $a$ and $b_2$, and remove all their successors. This yields a smaller $(A',B')$, a contradiction.

  If $a$ has degree one, then it must be the case that an interval of $A$ further on the right has degree three, since otherwise $|B|\leq |A|$.
  Pick the first interval $a'$ of $A$ of degree three, adjacent to $b_1,b_2,b_3$.
  The interval $b_2$ must be fully contained in $a'$.
  Hence a smaller $(A',B')$ can be constructed by removing all predecessors of $b_2$ and $a'$, a contradiction.

  Therefore, $a$ must have degree two, with neighbors $b_1,b_2$.
  The vertices $b_1$ and $a$ are the first two in the alternating path, and we can iterate the reasoning with the next interval of $A$, if any.
\end{proof}


We will refer to such pairs $(A,B)\in \binom{I}{t}\times {S\setminus \binom{I}{t}}$ as {\em inducing an alternating path with respect to $I$}. Note that we allow $t=0$, in which case the pair has the form $(\emptyset, \{x\})$ and the alternating path has length 1.

\begin{observation}
\label{obs:alt_a_no_containment}
If $(A,B)$ is an alternating path with respect to an independent set $I$, then no interval of $A$ is strictly contained in an interval of $B$. 
\end{observation}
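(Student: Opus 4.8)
The plan is to argue by contradiction using the inclusionwise minimality built into the notion of an alternating path (Lemma~\ref{lem:mes}): if some $a\in A$ were strictly contained in some $b\in B$, I would delete both from the pair and show the resulting smaller pair still witnesses non-maximality, violating minimality. Concretely, write $A=\{a_1,\dots,a_t\}$ and $B=\{b_1,\dots,b_{t+1}\}$. If $t=0$ the statement is vacuous, so assume $t\ge 1$ and suppose $a\in A$ satisfies $a\subsetneq b$ for some $b\in B$. Consider the pair $(A',B')=(A\setminus\{a\},\,B\setminus\{b\})$, which is a strict sub-pair of $(A,B)$ (of sizes $t-1$ and $t$).

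The key step is checking that $(I\setminus A')\cup B' = (I\setminus A)\cup\{a\}\cup(B\setminus\{b\})$ is an independent set. The set $B\setminus\{b\}$ is pairwise disjoint and disjoint from $I\setminus A$ because $(I\setminus A)\cup B$ is independent. The interval $a$ lies in $I$, so it is disjoint from every other interval of the independent set $I$, in particular from every interval of $I\setminus A$. Finally $a\subsetneq b$ and $b$ is disjoint from $B\setminus\{b\}$, so $a$ is disjoint from $B\setminus\{b\}$ as well. Hence $(I\setminus A')\cup B'$ is independent, contradicting the minimality of $(A,B)$. This gives the Observation.

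Alternatively — and perhaps more in the spirit of the surrounding development — one can read it off the structure of Lemma~\ref{lem:mes} directly: the induced alternating path is $b_1,a_1,b_2,\dots,a_t,b_{t+1}$, so each $a_i$ sits at an interior (even) position and its only neighbours among $A\cup B$ are $b_i$ and $b_{i+1}$, which are themselves non-consecutive on the path and hence disjoint. If $a_i\subsetneq b$ for some $b\in B$ then $a_i\cap b\neq\emptyset$ forces $b\in\{b_i,b_{i+1}\}$, say $b=b_i$; but then $a_i\cap b_{i+1}\subseteq b_i\cap b_{i+1}=\emptyset$, contradicting that $a_i$ and $b_{i+1}$ intersect. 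I expect the only real subtlety in either route is being careful about what "alternating path with respect to $I$" entails — namely the inclusionwise minimality in the first proof, and the fact that the path is an \emph{induced} subgraph (so $a_i$ cannot intersect a far-away $b_j$) in the second; no appeal to the no-containment property of $I$ or to $k$-maximality is needed.
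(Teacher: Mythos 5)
The paper states Observation~\ref{obs:alt_a_no_containment} without supplying a proof, so there is nothing to compare against line by line; what matters is whether your argument is sound, and it is. Your first route correctly exploits the inclusionwise minimality that is built into the definition of an alternating path: if $a\subsetneq b$ with $a\in A$, $b\in B$, then $(I\setminus(A\setminus\{a\}))\cup(B\setminus\{b\}) = (I\setminus A)\cup\{a\}\cup(B\setminus\{b\})$ is still independent (since $a\in I$ is disjoint from $I\setminus A$, and $a\subseteq b$ is disjoint from $B\setminus\{b\}$), giving a strictly smaller witnessing pair and contradicting minimality. Your second route reads the claim directly off the induced-path structure of Lemma~\ref{lem:mes}: $a_i$'s only neighbours in the induced graph on $A\cup B$ are $b_i$ and $b_{i+1}$, and containment in either one would force $a_i$ to be disjoint from the other, contradicting adjacency. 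Both are correct and essentially equally short; the second is arguably more "in the spirit" of the surrounding text since the paper uses exactly this kind of local reasoning about the caterpillar/path structure in the proofs of Lemmas~\ref{lem:mes} and~\ref{lem:higher_exchange}, whereas the first is a clean minimality argument that does not require unpacking Lemma~\ref{lem:mes} at all. You are also right that no appeal to the no-containment property of $I$ or to $k$-maximality is needed.
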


Note that the inverse is not true in general. The leftmost and/or rightmost interval of $B$ might be strictly contained in an interval of $A$.

We focus on a particular class of alternating paths which we call \textit{smallest}.

\begin{definition}
\label{def:smallest-path}
An alternating path $(A,B)$ with respect to an independent set $I$ is called smallest if there is no alternating path $(A',B')$ such that $A' \subset A$.
\end{definition}

We make the following key observation.

\begin{lemma}
 \label{lem:pick}
  Consider a smallest alternating path induced by $(A,B)$, $A=\{a_1,\ldots ,a_t\}$, $B=\{b_1,\ldots ,b_{t+1}\}$ for some $0\leq t\leq k$, where the intervals in each set are indexed according to their order on the real line. Then every interval $b_i$ for $2\leq i\leq t$ can be assumed to be an interval with leftmost right endpoint among all intervals with left endpoint in the range $[r(b_{i-1}), r(a_{i-1})]$. Similarly, $b_1$ can be assumed to be an interval with leftmost right endpoint among all intervals with left endpoint in the range $[r(a'),r(a_1)[$, where $a'$ is the interval on the left of $a_1$ in $I$ if it exists, or in $]-\infty, r(a_1)[$ otherwise.
\end{lemma}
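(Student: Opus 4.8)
The plan is to argue by an exchange argument: starting from any smallest alternating path $(A,B)$, I would show that each $b_i$ (for $2 \le i \le t$) can be replaced by the interval $b_i^\star$ with leftmost right endpoint among those whose left endpoint lies in $[r(b_{i-1}), r(a_{i-1})]$, without destroying any of the three properties that make $(A,B)$ an alternating path: (i) $B$ is independent, (ii) $(I \setminus A) \cup B$ is independent, and (iii) the pair is still ``smallest'' — in particular that we are not shrinking $A$. The key structural facts I would lean on are Lemma~\ref{lem:mes} (each $a_i$ has degree exactly two in the caterpillar, with neighbors $b_i$ and $b_{i+1}$, and consecutive $b$'s are separated, i.e.\ the path looks like $b_1 - a_1 - b_2 - a_2 - \cdots - a_t - b_{t+1}$) and Observation~\ref{obs:alt_a_no_containment} (no $a_j$ is strictly contained in a $b_j$).

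First I would establish that the candidate interval $b_i^\star$ is well-defined: the interval $b_i$ itself has left endpoint in $[r(b_{i-1}), r(a_{i-1})]$, since in the caterpillar $b_i$ intersects both $a_{i-1}$ and $a_i$ but is disjoint from $b_{i-1}$, so $r(b_{i-1}) < \ell(b_i)$, and $b_i$ intersects $a_{i-1}$ with $b_{i-1}$ to its left so $\ell(b_i) \le r(a_{i-1})$. Hence the range is nonempty and $b_i^\star$ exists, with $r(b_i^\star) \le r(b_i)$ and $\ell(b_i^\star) \in [r(b_{i-1}), r(a_{i-1})]$. Next, replace $b_i$ by $b_i^\star$ and check independence. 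On the left, $b_i^\star$ has $\ell(b_i^\star) > r(b_{i-1})$ so it is disjoint from $b_{i-1}$, and since $\ell(b_i^\star) \le r(a_{i-1})$ while $\ell(a_{i-1}) < \ell(b_i) $... here one must be slightly careful, but the point is $b_i^\star$ still intersects $a_{i-1}$ (its left endpoint is inside $a_{i-1}$). On the right, $r(b_i^\star) \le r(b_i) < \ell(b_{i+1})$, so $b_i^\star$ is disjoint from everything to the right of $b_i$, and in particular the rest of $B$; and $r(b_i^\star) \le r(b_i)$ together with $\ell(b_i^\star) \le r(a_{i-1}) < \ell(a_i)$ shows $b_i^\star$ is disjoint from $a_i$ or intersects it the same way $b_i$ did — again the right endpoint only moved left, so no new intersections with $I \setminus A$ or with later $b_j$'s are created. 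The one genuinely new intersection risk is that $b_i^\star$ might now intersect $a_{i-1}$'s other structure or coincide with some interval already in the path; I would handle that by noting $r(b_i^\star) \le r(b_i)$ only shrinks the footprint on the right and $\ell(b_i^\star) \in [r(b_{i-1}), r(a_{i-1})]$ keeps it clear of $b_{i-1}$.

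The main obstacle I anticipate is arguing that the modified pair is still a \emph{smallest} alternating path and that we have not accidentally enabled a shortcut that contradicts minimality of $|A|$ — i.e.\ ruling out the possibility that after the swap some $a_j$ becomes contained in $b_j^\star$ or that the caterpillar breaks into a shorter valid exchange. Since $r(b_i^\star) \le r(b_i)$, the interval only gets ``smaller'' on the right, so if $b_i$ did not contain $a_{i-1}$ or $a_i$, neither does $b_i^\star$; containment in the other direction ($a_{i-1} \supseteq b_i^\star$) would make $b_i^\star$ fully inside $a_{i-1}$, but then $b_i$ (which has the same left endpoint region and a right endpoint $\ge r(b_i^\star)$) — this requires care, and is where I would spend the most effort, probably by invoking the ``no interval strictly contained'' no-containment property of the independent set $I$ itself, or by observing that if $a_{i-1} \supseteq b_i^\star$ then $(A', B')$ with $A' = \{a_{i-1}\}$, $B' = \{b_i^\star\}$... no — rather, that such containment would let us truncate the path and contradict smallness. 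Finally I would treat $b_1$ separately: the argument is identical except the left range is $[r(a'), r(a_1)[$ where $a'$ is the $I$-predecessor of $a_1$, because there is no $b_0$ and the relevant obstruction on the left comes from the previously-chosen interval of $I$ (which must not be re-covered, to preserve that $(I\setminus A)\cup B$ is independent). The phrase ``can be assumed to be'' in the statement signals that we only need to exhibit \emph{one} smallest alternating path of this canonical form, so the exchange argument applied left-to-right, $i = 1, 2, \ldots, t$, and each swap being independent of later ones (since $b_i^\star$ depends only on $b_{i-1}$ and $a_{i-1}$), terminates with the desired canonical path.
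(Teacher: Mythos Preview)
Your approach is the paper's --- an exchange argument replacing each $b_i$ by the candidate $b_i^\star$ with leftmost right endpoint --- but you overcomplicate the parts that are automatic and miss the one step that actually needs an argument.

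The crux is that after the swap, $(A,B')$ must still be an alternating path \emph{with the same $A$}, which requires $b_i^\star$ to intersect $a_i$. You write ``$b_i^\star$ is disjoint from $a_i$ or intersects it the same way $b_i$ did,'' treating both outcomes as acceptable. They are not: if $b_i^\star$ is disjoint from $a_i$, then $(\{a_1,\dots,a_{i-1}\},\{b_1,\dots,b_{i-1},b_i^\star\})$ is already a valid exchange with a strictly smaller $A$-part, contradicting that $(A,B)$ was smallest. This single observation is the paper's entire proof: the replacement interval \emph{must} still hit $a_i$, for otherwise smallness of the original pair fails.

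Your later worries --- containment $a_{i-1}\supseteq b_i^\star$, and whether the \emph{new} pair remains smallest --- are misdirected. Containment in $a_{i-1}$ forces $r(b_i^\star)\le r(a_{i-1})<\ell(a_i)$, so it is just a special case of $b_i^\star$ missing $a_i$ and is ruled out by the argument above. And once $b_i^\star$ does hit $a_i$, the pair $(A,B')$ has the identical $A$, so it inherits smallness from $(A,B)$ immediately; no further verification is needed.
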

\begin{proof}
If the interval $b_i$ does not have the leftmost right endpoint, then we can replace it with one that has.
For $i\leq t$, this new interval must intersect $a_i$ as well, for otherwise the pair $(A,B)$ is not smallest. 
\end{proof}

Note that the symmetric is also true: If $(A,B)$ is a smallest alternating path, then there exists a smallest alternating path $(A,B')$ which satisfies the leftmost right endpoint property, i.e, $b'_i$ is an interval with the rightmost left endpoint among all intervals with right endpoint in the range $[\ell(a_i),\ell(b_{i+1})]$; the proof is identical to the proof of Lemma~\ref{lem:pick} above by flipping the terms left and right.

Using this observation, we can proceed to the following definition.

\begin{definition}
\label{def:proper}
An alternating path $(A,B)$ is called \textit{proper} if it is a smallest alternating path and it satisfies the leftmost right endpoint property.
\end{definition}

 Clearly, by the discussion above, given a proper alternating path $(A,B)$, there exists also a smallest alternating path $(A,B')$ which satisfies the rightmost left endpoint property. 
 
 \begin{definition}
 \label{def:sibling}
 Let $(A,B)$ be a proper alternating path. The smallest alternating path $(A,B')$ that satisfies the rightmost left endpoint property is called   \textit{sibling} of  $(A,B)$.  
 \end{definition}
 
 All swaps made by our insertions and deletion algorithms will involve solely proper alternating paths or their siblings.

\subsection{Algorithm and Data Structures}
\label{sec:intervals_alg}

Here we get more closely on the details of the algorithm presented informally in Section~\ref{sec:intervals_main}.

\subsubsection*{The Interval Query Data Structure}

We will use a data structure which supports standard operations like membership queries, insert and delete in time $O(\log n)$. Moreover we need to answer queries of the following type: 
Given $a,b$, find an interval having the leftmost right endpoint, among all intervals whose left endpoint lies in the range $[a,b[$.
We refer to these queries as {\em leftmost right endpoint} queries. The symmetric queries (among all intervals whose right endpoint lies in $[a,b[$, find the one having the rightmost left endpoint) are referred as {\em rightmost left endpoint queries}.

\begin{lemma}[Interval Query Data Structure (IQDS)]
  \label{lem:ors}
  There exists a data structure storing a set of intervals $S$ and supporting:
  \begin{itemize}
  \item {\bf Insertions and deletions}: Insert an interval $x$ in $S$/ delete an interval $x$ from $S$.
  \item {\bf Leftmost right endpoint queries.} $\jop{Report-Leftmost}(a,b)$: Among intervals $y$ with $\ell(y) \in (a,b)$, report the one with the leftmost right endpoint (or return NULL).
  \item {\bf Rightmost left endpoint queries.} $ \jop{Report-Rightmost}(a,b)$: Among all intervals $y$ with $r(y) \in (a,b)$, report the one with the rightmost left endpoint.
  \item {\bf Endpoint Queries.} Given an interval $x$, return its left and right endpoints.
  \item {\bf Merge}:  Given two such data structures containing sets of intervals $S_1$ and $S_2$, and a number $t$ such that $\ell(s)\leq t$ for all $s\in S_1$
  and $\ell(s)> t$ for all $s\in S_2$, construct a new data structure containing $S_1\cup S_2$,
  \item {\bf Split}: Given a number $t$, split the data structure into two, one containing $S_1 :=\{s: \ell (s)\leq t\}$, and one containing $S\setminus S_1$,
  \end{itemize}
  in $O(\log n)$ time per operation in the worst case.
\end{lemma}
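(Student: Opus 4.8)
The plan is to build the IQDS as an augmented balanced binary search tree (BST) on the interval endpoints, and to argue each operation runs in $O(\log n)$ worst-case time. Concretely, I would keep the intervals of $S$ sorted by left endpoint in a balanced BST (say a red-black tree or a weight-balanced tree), with each node storing one interval $y$ together with $\ell(y)$ as the key. Membership, insertion, deletion, and endpoint queries are then immediate standard operations in $O(\log n)$. For the leftmost right endpoint query $\jop{Report-Leftmost}(a,b)$ we need, among all stored intervals whose left endpoint lies in $[a,b)$, the one minimizing the right endpoint; this is a classic ``min over a key-range'' augmentation: at each node store $m(v) = \min\{r(y) : y \text{ in the subtree of } v\}$ together with a pointer to the interval achieving it. Then a query decomposes the key-range $[a,b)$ into $O(\log n)$ canonical subtrees along two root-to-leaf search paths, and we take the minimum of the $m(\cdot)$ values of those canonical subtrees, returning the associated interval (or NULL if the range is empty). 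The symmetric $\jop{Report-Rightmost}$ query is handled by an analogous augmentation storing $M(v) = \max\{\ell(y) : y \text{ in subtree of } v\}$; note the query is phrased in terms of the right endpoint lying in a range, so for this one I would actually maintain a \emph{second} balanced BST keyed on right endpoints (or, more cheaply, observe that since no interval is contained in another in the use-cases, left-endpoint order equals right-endpoint order — but to keep the data structure self-contained and fully general I would just keep both trees and update both on each insert/delete). The augmented values $m(v), M(v)$ are recomputed bottom-up from children in $O(1)$ per node, so they survive rotations, and all updates stay $O(\log n)$.

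For the \textbf{Split} and \textbf{Merge} operations I would use a balanced-BST flavor that supports these in $O(\log n)$ worst case — for instance a (2,4)-tree or a red-black tree, both of which admit split and join in time $O(\log n)$. Since we split/merge at a value $t$ separating left endpoints ($\ell(s) \le t$ on one side, $\ell(s) > t$ on the other), this is exactly a split/join of the left-endpoint-keyed tree at key $t$; the right-endpoint-keyed tree, however, must be split/merged at the same \emph{set} of intervals, not at a single key value, which is the one subtlety. The clean way around it: when splitting the left tree at $t$ we obtain the set $S_1$; we then need the right-endpoint tree to be split into the intervals of $S_1$ and the rest. If left- and right-endpoint orders coincide (which holds under the no-containment hypothesis of how this structure is used), a single right-endpoint key suffices and we are done; in full generality one can pay an extra logarithmic factor, or one can store both orders in a single structure. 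I would flag this and adopt the no-containment convention, which is consistent with the rest of the paper, so that a single BST keyed simultaneously by (left endpoint $=$ rank $=$ right-endpoint rank) carries all the augmentation and split/merge works in $O(\log n)$ worst case.

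The main steps, then, are: (1) fix the underlying balanced BST to one supporting $O(\log n)$ worst-case split/join (e.g. weight-balanced or red-black with the standard join/split routines); (2) define the augmentation $m(v)$ (subtree-min of right endpoints, with witness pointer) and, symmetrically, $M(v)$; (3) verify the augmentation is maintainable in $O(1)$ per structural change, hence $O(\log n)$ per insert/delete and preserved by split/join; (4) implement the two range-extremum queries by canonical-subtree decomposition of $[a,b)$ and combining $O(\log n)$ stored extrema; (5) implement Split and Merge by the BST's native split/join at the separating key $t$, carrying the augmentation along. I would expect the only genuine obstacle to be item (5): reconciling the left-endpoint-ordered and right-endpoint-ordered views under split/merge so that both the leftmost-right-endpoint and rightmost-left-endpoint queries remain supported after a split — which I resolve by invoking the no-containment property already in force throughout Section~\ref{sec:intervals_details}, under which the two orders coincide and a single augmented tree suffices. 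Everything else is a routine assembly of textbook augmented-BST techniques.
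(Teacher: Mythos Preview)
Your core construction is exactly the paper's: an augmented red-black tree keyed on left endpoints, storing at each node the minimum right endpoint in its subtree, with queries answered by the canonical $O(\log n)$-subtree decomposition, and a duplicated symmetric structure for $\jop{Report-Rightmost}$. The paper's proof is one short paragraph saying precisely this.

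There is, however, a gap in your resolution of the split/merge subtlety for the right-endpoint-keyed tree. You invoke the ``no-containment property'' to claim that the left- and right-endpoint orders coincide, so a single tree suffices. But the no-containment property of Definition~\ref{def:valid} says only that no interval of $S\setminus I$ is contained in an interval of $I$; it places no restriction on nesting \emph{within $S$}. The IQDS is explicitly maintained for the full set $S$ (see the paragraph after the lemma: ``We will maintain two such data structures, one for storing the set of all intervals $S$ and one storing the current independent set $I$''), and $S$ may contain arbitrarily nested intervals. Hence a split at left-endpoint threshold $t$ need not correspond to any single right-endpoint threshold, and your single-tree reduction fails for the $S$-structure.

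For what it is worth, the paper's own proof glosses over this same point: it just says the structure ``can be duplicated'' for the symmetric query and never discusses how split/merge interacts with the duplicate. In the paper's actual use the issue is moot: split and merge are needed only for the squares extension (Remark~3 following the lemma), and there the IQDS is replaced wholesale by the global search structure of Section~\ref{sec:squaresiqds}, where ``split, merge, insert, and delete need not be directly implemented in the IQDS as used for dynamic squares, they can return having done nothing.'' So you have spotted a genuine gap in proving Lemma~\ref{lem:ors} as literally stated, but one that does not affect the paper's theorems.
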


\begin{proof}
We resort to augmented red-black trees, as described in Cormen et al.~\cite{CLRS09}. 
The keys are the left endpoints of the intervals, and we maintain an additional information at each node: the value of the leftmost endpoint of an interval in the subtree rooted at the node.
This additional information is maintained at a constant overhead cost.
Leftmost right endpoint queries are answered by examining the $O(\log n)$ roots of the subtrees corresponding to the searched range.
The structure can be duplicated to handle the symmetric rightmost left endpoint queries.
\end{proof}

\paragraph{Remarks.} Before proceeding to presenting our algorithms using the data structure of Lemma~\ref{lem:ors}, we make some remarks:

\begin{enumerate}
    \item In fact our data structure can be implemented in a comparison-based model where the only operations allowed are comparisons between endpoints of intervals. In particular, leftmost right endpoint queries (and symmetrically rightmost left endpoint queries) are used only for $a$ and $b$ being endpoints of intervals of $S$. Here, we present them as getting as input arbitrary coordinates just for simplicity of exposition. 
    
    \item For the context of this section, it is sufficient to use augmented red-black trees to support those operations in time $O(\log n)$. However, later we would need to use the intervals data structure as a tool to support independent set of squares, this will not be enough. The details will be described in Section~\ref{sec:squares_dynamic}.
    
    \item The split and merge operations are only needed to make our extensions to squares work (see Section~\ref{sec:intervals_extend}). The reader interested in intervals may ignore them.
\end{enumerate}

We will maintain two such data structures, one for storing the set of all intervals $S$ and one storing the current independent set $I$. 

\paragraph{Alternating paths in time $O(k \log n)$}
We show that using such a data structure, we can find alternating paths of size at most $k$ in time $O(k \cdot \log n)$. In particular, we are going to have the following procedure:
\begin{itemize}
    \item  $\jop{Find-Alternating-Path-Right} (I, k,(a,b)$): Find an alternating path, with respect to the independent set $I$, of size at most $k$, where the leftmost interval has left endpoint in $(a,b)$. This alternating path will satisfy the leftmost right endpoint property. 
\end{itemize}

The other, completely symmetric, procedure $\jop{Find-Alternating-Path-Left}$, does the same thing, only with left and right (and left endpoints and right endpoints) reversed. 
It therefore suffices to describe only $\jop{Find-Alternating-Path-Right}$. We let $A\gets\emptyset$ and $B\gets \emptyset$, and proceed as follows. Let $\NEXT$ be the leftmost interval of $I$ to the right of $b$ (if exists). If $\NEXT = \NULL$, let $r(\NEXT) = \infty$. 

\begin{enumerate}
  \item \label{step:min} Among all intervals in $S\setminus I$ with left endpoint in $[a, b[$, if any, let $y$ be the one such that $r(y)$ is minimum.
  \item If such a $y$ exists, then:
      \begin{enumerate}
      \item If $r(y) < \ell(\NEXT)$ then $B \leftarrow B \cup \lrbrace{y}$. Return $(A,B)$.
      
      \item If
        \begin{itemize}
        \item $\ell(\NEXT) \leq r(y)<r(\NEXT)$, 
        \item {\em and} $|A|<k$,
        \end{itemize}
        then $A \leftarrow A \cup \lrbrace{\NEXT}$, $B = B \cup \lrbrace{y}$ and iterate from step~\ref{step:min} with $(a,b)$ replaced by $(r(y),r(\NEXT))$ and $\NEXT$ replaced by the first interval of $I$ that follows on its right.
      \item Otherwise return fail.
      \end{enumerate}
      \item Otherwise return fail.
\end{enumerate}

By construction, this procedure performs at most $k$ iterations where in each iteration the only operations required are leftmost right endpoint queries and finding the next interval in independent set $I$, which can both be done in time $O(\log n)$. Therefore the overall running time is always $O(k \log n)$.

\paragraph{Some auxiliary operations} Sometimes we might need to transform an alternating path $(A,B)$ satisfying the leftmost right endpoint property to another path $(A,B')$ (possibly with $B = B' $) which satisfies the rightmost left end point property (or vice versa). We show that our data structure supports this in time $O(|A| \cdot \log n)$.

Let $I$ be a $k$-maximal independent set and let $A = \lrbrace{a_1,\dotsc,a_t}$ and $B = \lrbrace{b_1,\dotsc,b_{t+1}}$, such that $(A,B)$ is a smallest alternating path satisfying the rightmost left endpoint property. We will show how to transform it into a path $(A,B')$ satisfying the rightmost left endpoint property. 

The main idea is to start from $b_1$ and for all $i=1,\dotsc,t+1$, replace $b_i$ with another interval $b'_i$ which intersects both $a_{i-1}$ and $a_i$ and has the leftmost right endpoint property. 

Let $x \in I$ be the interval of $I$ to the left of $a_1$ (if any). We start by finding the interval with the leftmost right endpoint, among all intervals with left endpoint in $(r(x), \ell(a_1) )$ (set $r(x)$ to -$\infty$ if $x$ does not exist). This interval will be $b'_1$. Note that it might be possible that $b'_1 = b_1$. We continue in the same way for all $i \leq t+1$. Once interval $b'_{i-1}$ is fixed we answer the query Report-Leftmost$(r(b'_{i-1}),r(a_{i-1}))$ and the outcome will be the new interval $b'_i$. Overall we answer $t$ leftmost right endpoint queries, thus the total running time is $O(t \cdot \log n)$.

Note that in the algorithm above, all leftmost right endpoint queries will return for sure an interval and will never be NULL; this is because the interval $b_i$ satisfies the requirements, so there exists at least one interval to report. Moreover, there is the possibility that in step $i$, the interval $b'_i$ ends before interval $a_i$ starts. We will make sure that our algorithms use this procedure in instances which this does not happen (proven in Lemmata~\ref{lem:higher_exchange} and~\ref{lem:j-to-j}). 
\subsubsection*{Description of Algorithms}

We now describe our algorithms in pseudocode using our data structure and the operations it supports. 

Whenever we use $L$ or $R$ to denote alternating paths, we implicitly assume that those are defined by sets $(L_A,L_B)$, such that $L_A \subseteq I$ and $L_B \subseteq S \setminus I$ (resp. $(R_A,R_B)$). Whenever we say that we perform the exchange defined from alternating path $L$ (resp. $R$) we mean that we set $I \leftarrow (I \setminus L_A) \cup L_B$ (resp. $ I \leftarrow (I \setminus R_A) \cup R_B $). 

\paragraph{Insertions.} Interval $x$ gets inserted. Let $a_{\ell}$ be the interval of $I$ containing $\ell(x)$ (NULL if such interval does not exist) and $a_r$ the one containing $r(x)$.

\begin{enumerate}
\item If both $a_\ell$ and $a_r$ are NULL, then
\begin{enumerate}
\item \label{case:insert-one} If no interval of $I$ lies between $\ell(x)$ and $r(x)$ (that is, $x$ can be added), then $I \leftarrow I \cup \lrbrace{x}$.
\end{enumerate}

\item If both $a_\ell$ and $a_r$ are defined, then:
\begin{enumerate}
\item \label{case:strict_contain} If $a_{\ell} = a_r$, hence if $x$ is strictly contained in interval $a := a_\ell=a_r \in I$, then:

\begin{itemize}
\item Replace $a$ by $x$: $I \leftarrow (I \setminus \lrbrace{a})\cup x$.
\item $R \leftarrow \jop{Find-Alternating-Path-Right} (I,k,(r(x),r(a))$. If $R \neq \emptyset$, do this exchange.
\item $L \leftarrow \jop{Find-Alternating-Path-Left} (I,k,(\ell(a),\ell(x))$. If $L \neq \emptyset$, do this exchange.
\end{itemize}

\item \label{case:insert_alt_both} If $a_{\ell}$, $a_r$ are two consecutive intervals of $I$, then try to find an alternating path containing $x$:

\begin{itemize}
\item   $R \leftarrow \jop{Find-Alternating-Path-Right} ((I,k-2,(r(x),r(a)))$. 
\item If $R \neq \emptyset$, then set $L \leftarrow \jop{Find-Alternating-Path-Left} ((I,k-2-|R|,(\ell(x),\ell(a))$). \\ If both $L= (L_A,L_B)$ and $R = (R_A,R_B)$ are nonempty, then:

\begin{itemize}
\item Set $A \leftarrow L_A \cup \lrbrace{a_{\ell},a_r} \cup R_A$ and $B \leftarrow L_B \cup \lrbrace{x} \cup R_B$. $(A,B)$ is an alternating path of size at most $k$. Do this exchange.
\end{itemize}  
\end{itemize}

\end{enumerate}

\item \label{case:insert_alt_one} If only $a_r$ exists (the case where only $a_{\ell}$ exists is symmetric), then try to find an alternating path of size at most $k-1$ to the right:

\begin{itemize}
\item $R \leftarrow \jop{Find-Alternating-Path-Right} ((I,k-1,(r(x),r(a))$). 
\item If $R= (R_A,R_B)$ non empty, then set $A \leftarrow R_A \cup \lrbrace{a_r}$, $B \leftarrow R_B \cup \lrbrace{x}$. \\ $(A,B)$ is an alternating path. Do this exchange. 
\end{itemize} 
\end{enumerate}

\paragraph{Deletions.}Interval $x$ gets deleted. If $x \notin I$, which can be checked in time $O(\log n)$, then we do nothing. So we focus on the case $x \in I$.  Let $a_{\ell}$ be the interval of $I$ to the left of $x$ (if it exists) and $a_r$ the interval to the right of $I$ (if it exists). We first delete $x$ and then search for alternating paths to the right and left of $x$:

\medskip

\noindent $R \leftarrow \jop{Find-Alternating-Path-Right} ((I,k,(r(a_{\ell}),\ell(a_r)))$). 

\medskip

\noindent $L \leftarrow \jop{Find-Alternating-Path-Left} ((I,k,(r(a_{\ell}),\ell(a_r)))$). 

\medskip

\noindent $L$ has the rightmost left endpoint property. If nonempty, we replace $L$ by its sibling which satisfies the leftmost right endpoint property, as explained above. 

\begin{enumerate}
\item If $L$ and $R$ are nonempty, then check whether they can be merged, that is, whether the right endpoint of the rightmost interval of $L_B$, say $r(L)$, is to the left of the left endpoint of the leftmost interval of $R_B$, $r(L) < \ell(R)$.

\begin{enumerate}
\item \label{case:del-both-sides} If yes, then do the exchanges defined by $R$ and $L$.
 
\item \label{case:del-both-one} Otherwise do the exchange defined either from $L$ or $R$ (arbitrarily)
\end{enumerate} 

\item \label{case:del-one-side} If only one of $L$ and $R$ is nonempty, do this exchange.

\item If both $L$ and $R$ are empty, then search for an alternating path including an interval $y$ containing $x$: 
$y \leftarrow \jop{Report-Leftmost}(r(a_{\ell}),\ell(x))$) ($y$ contains $x$).
\begin{enumerate}
\item \label{case:del-superset-one} If $r(y) < \ell(a_r)$ ($y$ can be added), then $I \leftarrow I \cup \lrbrace{y}$.
\item \label{case:del-superset-path} Otherwise, check for alternating paths including intervals strictly containing $x$ (if any): Let $a_1,\dotsc,a_{k}$ be the $k$ intervals of $I$ to the right of $x$, ordered from left to right (note $a_1=a_r$). If some interval does not exist, set it to $\NULL$. Let also $a_0 = x$. For $i=1$ to $k$, use $\jop{Find-Alternating-Path-Left} (I,k,(r(a_{i-1}),\ell(a_i)))$ to search to the left for an alternating path of length at most $k$. Whenever a path $(A,B)$ is found, do this exchange and stop.
\end{enumerate}
\end{enumerate}

\paragraph{Running time.} It is easy to see that  for insertion all operations used require time $O(k \log n)$ and for deletion $O(k^2 \log n)$; this increase in deletion time comes solely due to case (\ref{case:del-superset-path}) where we need to search at most $k$ times for alternating paths of size at most $k$, which requires $O(k \log n)$ time.

\subsection{Correctness}
\label{sec:intervals_cor}

We now prove correctness of our algorithms. Recall that by Definition~\ref{def:valid} a $k$-valid independent set of intervals is $k$-maximal and satisfies the no-containment property. We show that our algorithms always maintain a $k$-valid independent set of intervals.

\paragraph{Some easy observations.}
We begin with some easy, yet useful, observations.

\begin{observation}
\label{obs:insert_one}
Let $I$ be a $k$-valid independent set of $S$. If an interval $x$ gets inserted such that $I \cup \lrbrace{x}$ is an independent set, then $I \cup \lrbrace{x}$ is also $k$-valid. 
\end{observation}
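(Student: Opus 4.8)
\textbf{Proof plan for Observation~\ref{obs:insert_one}.}

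The plan is to verify the two defining properties of $k$-validity directly for $I \cup \{x\}$, exploiting the fact that adding a single interval that is independent from all of $I$ cannot create any containment and cannot create any new local exchange that was not essentially already available. First I would record the hypothesis precisely: $I$ is $k$-valid, and $x$ is such that $I \cup \{x\}$ is an independent set of $S$; in particular $x$ intersects no interval of $I$, so in the intersection graph of $S$ the vertex $x$ is isolated from all of $I$.

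For the no-containment property, I would argue as follows. Suppose for contradiction some interval $y \in S \setminus (I \cup \{x\})$ is strictly contained in an interval of $I \cup \{x\}$. That containing interval is either some $a \in I$ or $x$ itself. It cannot be an interval of $I$: since $I$ was $k$-valid and $y \in S\setminus I$, no interval of $S\setminus I$ is strictly contained in an interval of $I$ — and $y \in S \setminus (I\cup\{x\}) \subseteq S \setminus I$, so this case is already excluded. It cannot be $x$ either: if $y \subsetneq x$, then since $x$ is independent from $I$, the interval $y$ is also independent from $I$ (any interval intersecting $y$ must intersect $x$), so $(\emptyset, \{y\})$ would be an alternating path with respect to $I$ of length $1$, contradicting the $0$-maximality (hence $k$-maximality, as $k \ge 0$) of $I$. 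Wait — one must be slightly careful that $y \neq x$ and $y \notin I$, but both hold by the choice $y \in S\setminus(I\cup\{x\})$. Hence no such $y$ exists and the no-containment property holds for $I\cup\{x\}$.

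For $k$-maximality, suppose toward a contradiction that $I\cup\{x\}$ admits an exchange: there are $A \in \binom{I\cup\{x\}}{t}$ and $B \in \binom{S\setminus(I\cup\{x\})}{t+1}$ with $t\leq k$ such that $((I\cup\{x\})\setminus A)\cup B$ is independent. I would split on whether $x \in A$. If $x \notin A$, then $A \subseteq I$ and $B \subseteq S\setminus I$ (note $B$ is disjoint from $A$ and from $\{x\}$), and since $x$ is independent from everything in $I$ and from everything in $B$ (as $B \cup (I\setminus A)\cup\{x\}$ is independent), removing $x$ shows $(I\setminus A)\cup B$ is independent, contradicting $k$-maximality of $I$. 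If $x \in A$, write $A = A' \cup \{x\}$ with $A' \subseteq I$, $|A'| = t-1$; then $(I\cup\{x\})\setminus A = I\setminus A'$, so $(I\setminus A')\cup B$ is independent with $|A'| = t-1 < t \le k$ and $|B| = t+1 = (t-1)+2 > |A'|+1$. By pigeonhole two intervals of $B$ must ``cover'' the gap left by removing a smaller set, so $(I\setminus A')\cup B$ contains an independent set of size $|I| - (t-1) + (t+1) = |I| + 2 > |I|+1$ that differs from $I$ in at most $t-1 \le k$ removed intervals; more carefully, one extracts from $(A', B)$ an inclusionwise-minimal subpair which is an alternating path with respect to $I$ of size at most $t-1 \le k$, again contradicting $k$-maximality of $I$. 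Either way we reach a contradiction, so $I\cup\{x\}$ is $k$-maximal, and combined with the no-containment part, $I\cup\{x\}$ is $k$-valid.

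The main obstacle is the $x \in A$ subcase of the $k$-maximality argument: one has a set $A'$ of size $t-1$ whose removal is compensated by a set $B$ of size $t+1$, a ``$t{-}1$ to $t{+}1$'' exchange, and one must massage this into a genuine ``$t'$ to $t'{+}1$'' alternating path of bounded size witnessed against $I$. The clean way is to take an inclusionwise-minimal subpair $(A'',B'')$ with $A''\subseteq A'$, $B''\subseteq B$, $(I\setminus A'')\cup B''$ independent and $|B''|>|A''|$; such a subpair exists, has $|B''| = |A''|+1$ by minimality, and $|A''|\le t-1 \le k$, so it is exactly the certificate forbidden by $k$-maximality of $I$. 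I would cite Lemma~\ref{lem:mes} only for the structural description and otherwise keep this combinatorial extraction self-contained.
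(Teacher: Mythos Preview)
Your proof is correct. The paper states this as an ``easy observation'' and gives no proof at all, so there is nothing to compare against; your argument simply supplies the details the authors considered routine. The no-containment case and the $x\notin A$ subcase are exactly as one would expect, and your handling of the $x\in A$ subcase via extraction of an inclusionwise-minimal subpair $(A'',B'')$ with $|B''|>|A''|$ is the right move --- minimality forces $|B''|=|A''|+1$ with $|A''|\le t-1\le k$, yielding the required witness against $k$-maximality of $I$.

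Two small remarks. First, your informal ``pigeonhole'' sentence in the $x\in A$ case is vague and not quite right as stated; fortunately you immediately supersede it with the clean minimal-subpair extraction, so just drop the pigeonhole sentence. Second, both the no-containment argument (the $y\subsetneq x$ branch) and the minimal-subpair extraction can bottom out at a $0$-to-$1$ exchange, i.e.\ plain maximality of $I$; the paper's Definition~\ref{def:k-maximal} literally quantifies over \emph{positive} $t$ but then remarks that $k=0$ recovers inclusionwise maximality, so maximality is clearly intended. It would be cleanest to state explicitly that $k$-validity entails ordinary maximality (which it does, via no-containment plus $1$-maximality, or simply by the authors' intended reading) and then invoke that.
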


\begin{observation}
\label{obs:delete_notinset}
If an interval $x \notin I$ gets deleted, then $I$ remains $k$-valid.
\end{observation}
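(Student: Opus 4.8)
The final statement to prove is Observation~\ref{obs:delete_notinset}: if an interval $x \notin I$ gets deleted, then $I$ remains $k$-valid. Let me sketch a proof plan.

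=== PLAN ===

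\textbf{The plan.} The goal is to argue that deleting an interval $x$ that does not belong to the maintained independent set $I$ cannot destroy $k$-validity, i.e. it preserves both the no-containment property and $k$-maximality (Definition~\ref{def:valid}). The key structural observation is that removing $x$ from the ground set $S$ only \emph{shrinks} the universe: the new set is $S' = S \setminus \{x\}$, and since $x \notin I$, we still have $I \subseteq S'$, so $I$ is a legitimate independent set of $S'$. Neither property is ``monotone upward'' in a problematic direction here — both are violated only by the \emph{presence} of some offending interval, and deleting an interval can only remove potential offenders, never create one. I would carry the argument out property by property.

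\textbf{No-containment.} Suppose, for contradiction, that after deleting $x$ there is an interval $y \in S' \setminus I$ strictly contained in some interval $a \in I$. But $S' \setminus I \subseteq S \setminus I$, so $y \in S \setminus I$ already before the deletion, and $a \in I$ was unchanged; hence $I$ already violated no-containment in $S$, contradicting $k$-validity of $I$ before the update. So no-containment is preserved.

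\textbf{$k$-maximality.} Similarly, suppose for contradiction that $I$ is not $k$-maximal in $S'$: there exist $t \le k$, $A \in \binom{I}{t}$ and $B \in \binom{S' \setminus I}{t+1}$ with $(I \setminus A) \cup B$ an independent set of $S'$. Again $B \subseteq S' \setminus I \subseteq S \setminus I$ and $(I\setminus A)\cup B$ being pairwise disjoint is a property intrinsic to those intervals, unaffected by whether $x$ is in the universe. Hence the same pair $(A,B)$ witnesses that $I$ was not $k$-maximal in $S$ — a contradiction. Therefore $I$ remains $k$-maximal, and combined with the previous paragraph, $I$ remains $k$-valid. This also tacitly uses that our algorithm, upon deleting $x \notin I$, performs no exchange (which the pseudocode stipulates: ``if $x \notin I$ \dots then we do nothing''), so $I$ is literally unchanged.

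\textbf{Main obstacle.} There is essentially no obstacle — this observation is deliberately the ``easy'' base case, in contrast to the genuinely technical deletion case $x \in I$ handled later by Lemmata~\ref{lem:higher_exchange} and~\ref{lem:j-to-j}. The only thing to be careful about is making explicit that ``independent set'' and the containment relation are defined purely in terms of the intervals themselves, so shrinking $S$ to $S'$ cannot introduce a violation; once that is said, the proof is a two-line contradiction argument for each of the two defining properties.
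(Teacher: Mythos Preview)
Your proof is correct and follows exactly the natural line of reasoning: shrinking the ground set from $S$ to $S\setminus\{x\}$ can only remove potential witnesses of a violation (for either property in Definition~\ref{def:valid}), never create one. The paper itself does not spell out a proof for this observation---it is listed among the ``easy observations'' and left without argument---so your write-up is simply a careful expansion of what the paper treats as immediate.
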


\begin{observation}
\label{obs:del_superset}
Let $I$ be a $k$-valid independent set. Assume that for an interval $x \in I$ there exists $y \in S$, such that $y$ contains $x$ and $(I \setminus \lrbrace{x}) \cup \lrbrace{y}$ is also an independent set. Then, $(I \setminus \lrbrace{x}) \cup \lrbrace{y}$ is $k$-maximal.
\end{observation}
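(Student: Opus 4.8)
The statement asks: if $I$ is $k$-valid and $x \in I$ can be "enlarged" to some $y \supseteq x$ with $(I \setminus \{x\}) \cup \{y\}$ still independent, then the new set $I' := (I \setminus \{x\}) \cup \{y\}$ is $k$-maximal. Note carefully that we are only asked for $k$-\emph{maximality} of $I'$, not $k$-validity — indeed no-containment may genuinely fail for $I'$, since some interval formerly contained in nothing could now sit inside $y$. So the plan is to show purely that $I'$ admits no $t$-to-$(t+1)$ exchange for $t \leq k$.

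First I would observe that $|I'| = |I|$ and, more importantly, that any independent set of $S$ is also an independent set when we "shrink" $y$ back to $x$ in the following sense: since $x \subseteq y$, any interval disjoint from $y$ is disjoint from $x$. I would argue by contradiction: suppose $(A,B)$ is an alternating path with respect to $I'$ with $|A| = t \leq k$, $|B| = t+1$, and $(I' \setminus A) \cup B$ independent. Split into two cases depending on whether $y \in A$ or not.

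If $y \notin A$, then $A \subseteq I' \setminus \{y\} = I \setminus \{x\} \subseteq I$, and $B \cap I' = \emptyset$. I need $B \cap I = \emptyset$ as well; the only possible culprit is $x \in B$, but $x \subseteq y$ and $y \in I'$, and $B$ is independent and disjoint from $I' \setminus A \ni y$ (as $y\notin A$), so $x$ (being contained in $y$) cannot be in $B$. Hence $B \cap I = \emptyset$. Finally $(I \setminus A) \cup B = ((I' \setminus \{y\}) \setminus A) \cup B \cup \{x\} \setminus \{y\}$; since $x \subseteq y$ and $(I' \setminus A) \cup B$ is independent, replacing $y$ by the smaller interval $x$ keeps it independent, so $(I \setminus A) \cup B$ is independent — contradicting $k$-maximality of $I$. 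If instead $y \in A$, set $A' := (A \setminus \{y\}) \cup \{x\}$, so $A' \subseteq I$, $|A'| = t$. Then $(I \setminus A') \cup B = ((I' \setminus A) \cup B)$ exactly (both equal $(I \setminus \{x\}) \setminus (A \setminus \{y\}) \cup B$), which is independent; and again $B \cap I = \emptyset$ follows since $B \cap I' = \emptyset$ and $x \in A'$ means we only need $B$ disjoint from $I \setminus \{x\}= I' \setminus \{y\}$, which holds. So $(A', B)$ is a forbidden exchange for $I$, a contradiction.

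The only slightly delicate point — and the one I would write most carefully — is the bookkeeping that $B \cap I = \emptyset$ in both cases, i.e. ruling out $x \in B$ via the containment $x \subseteq y$ together with independence; everything else is a direct substitution $x \leftrightarrow y$ exploiting monotonicity of "disjointness" under shrinking. No use of the leftmost-right-endpoint structure or of no-containment of $I$ is needed beyond plain $k$-maximality of $I$, so the proof is short.
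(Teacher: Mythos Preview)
The paper states this observation without proof, so there is no argument to compare against; your contradiction argument via swapping $x \leftrightarrow y$ is the natural one and is essentially correct.

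One point in Case~2 needs tightening. You assert that ``$B \cap I = \emptyset$ follows since $B \cap I' = \emptyset$ and $x \in A'$ means we only need $B$ disjoint from $I \setminus \{x\}$.'' This does not actually establish $B\cap I=\emptyset$: the definition of $k$-maximality requires $B \subseteq S \setminus I$, and nothing so far rules out $x \in B$ (indeed $x \in S \setminus I'$, and since $y \in A$ the obstruction you used in Case~1, namely $y\in I'\setminus A$, is gone). If $x \in B$, the pair $(A', B)$ is not a licit witness against $k$-maximality of $I$. The easy repair: when $x \in B$, take instead $(A'', B'') := (A \setminus \{y\}, B \setminus \{x\})$. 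You already verified $I \setminus A' = I' \setminus A$, and since $x\in A'\cap B$ both occurrences of $x$ cancel, giving $(I \setminus A'') \cup B'' = (I' \setminus A) \cup B$, which is independent; moreover $A''\subseteq I$, $B''\subseteq S\setminus I$, $|A''| = t-1 \leq k$ and $|B''| = t$. This is now a valid $(t{-}1)$-to-$t$ exchange for $I$, and the contradiction goes through.
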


\paragraph{Main Technical Lemmas.} It turns out that the most crucial technical part of our approach is the following two lemmas, which are used both to the insertion and deletion algorithms. The first lemma has to do with $j$ to $j+1$ exchanges and the second with the $j$ to $j$ exchanges.

\begin{lemma}
\label{lem:higher_exchange}
Let $I$ be a $k$-valid independent set of intervals. Assume there exists a proper alternating path $(A,B)$, such that $|A| = t$, $|B|=t+1$ for $k < t \leq 2k+1$. Then, $(I \setminus A) \cup B$ is also a $k$-valid independent set. 
\end{lemma}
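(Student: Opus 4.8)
The plan is to show that $I' := (I \setminus A) \cup B$ satisfies both properties of Definition~\ref{def:valid}: it is an independent set with no containment, and it is $k$-maximal. Independence is given by hypothesis. For the no-containment property, suppose some interval $z \in S \setminus I'$ is strictly contained in some $a' \in I'$. If $a' \in I \setminus A$, then $z$ was already strictly contained in an interval of $I$, so $z \in I$ (since $I$ was $k$-valid), which forces $z \in A$ (as $z \notin I'$); but then $z$ is an interval of $A$ strictly contained in an interval of $I \setminus A \subseteq I$, contradicting $k$-validity of $I$ again since $z \in I$ — so this case is impossible. Hence $a' \in B$. Now I would use Observation~\ref{obs:alt_a_no_containment}: no interval of $A$ is strictly contained in an interval of $B$, so $z \notin A$; combined with $z \notin I'$ this gives $z \notin I$. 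The crucial point is then that $B$ was chosen as a \emph{proper} alternating path, so by Lemma~\ref{lem:pick} each $b_i$ (for the relevant range of indices) has the leftmost right endpoint among intervals with left endpoint in a prescribed window; an interval $z$ strictly inside $b_i$ would have its left endpoint inside that window and a smaller (or equal) right endpoint, so it could replace $b_i$ and — because $z \subsetneq b_i$ — would still intersect the neighbors of $b_i$ in the path, contradicting that $(A,B)$ is \emph{smallest} (Definition~\ref{def:smallest-path}). The boundary intervals $b_1$ and $b_{t+1}$ need the analogous argument using the window definitions from Lemma~\ref{lem:pick} and its symmetric counterpart.

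For $k$-maximality of $I'$, suppose toward contradiction there is a pair $(A^*, B^*)$ with $|A^*| = s \le k$, $|B^*| = s+1$, $A^* \subseteq I'$, $B^* \cap I' = \emptyset$, and $(I' \setminus A^*) \cup B^*$ independent. I would first pass to a smallest/proper such pair and invoke Lemma~\ref{lem:mes} so that $A^* \cup B^*$ induces an alternating path. The idea is to combine $(A,B)$ and $(A^*,B^*)$ into a single pair witnessing that $I$ itself is not $k$-maximal — but only if the combined $A$-side has size at most $k$, which is where the hypothesis $t > k$ bites against us, so a direct merge cannot work. Instead, the right move is to localize: the alternating path $(A^*,B^*)$ lives on the real line and interacts with $B$ only through intervals of $B$ that $A^*$ contains or that $B^*$ intersects. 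I would argue that since $A^* \subseteq I' = (I \setminus A) \cup B$, we can write $A^* = A^*_{\mathrm{old}} \cup A^*_{\mathrm{new}}$ with $A^*_{\mathrm{old}} \subseteq I \setminus A$ and $A^*_{\mathrm{new}} \subseteq B$. If $A^*_{\mathrm{new}} = \emptyset$, then $(A^*, B^*)$ is a valid $\le k$-exchange already with respect to $I$ — we must check that $B^*$ avoids $I$ and that independence with $I$ is preserved, using that $B^*$ sits in the region vacated when we replaced $A$ by $B$, i.e. $B^*$ is disjoint from $A$ because $B^* \subseteq S \setminus I'$ and $B^*$ intersects none of $I \setminus A$; intersection with $A$ is ruled out because any interval of $A$ intersecting $B^*$ must have been "covered" by the $B$-intervals in its alternating-path neighborhood, contradicting either properness of $(A,B)$ or the leftmost-right-endpoint structure. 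This contradicts $k$-maximality of $I$. If $A^*_{\mathrm{new}} \neq \emptyset$, the alternating path $(A^*,B^*)$ "chains into" the path $(A,B)$ through shared intervals of $B$, and I would splice the two: remove the portion of $(A,B)$ that $(A^*,B^*)$ overlaps and graft $(A^*,B^*)$ in its place, producing a new pair $(\tilde A, \tilde B)$ with respect to $I$. The bookkeeping must show $|\tilde A| \le |A|$ while $|\tilde B| = |\tilde A| + 1$ (each $b \in B$ dropped is compensated by the $a^*$ that replaced it, and $B^*$ brings one extra interval) — this would contradict that $(A,B)$ is \emph{smallest}, since $\tilde A \subsetneq A$ or at least $\tilde A$ is no larger while still being a valid alternating-path pair with a strictly smaller $A$-side after minimization.

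The main obstacle, I expect, is exactly this splicing/localization argument for $k$-maximality: making precise how an alternating path with respect to $I'$ decomposes relative to $B$, ensuring the spliced object is again a legitimate (minimal) alternating path with respect to $I$, and tracking the cardinalities so that we land a contradiction with \emph{smallest} rather than merely with $k$-maximality (which the size bound $t \le 2k+1$ would otherwise obstruct). I would handle this by a careful case analysis on whether $B^*$'s extreme intervals fall inside intervals of $A$ and by repeatedly appealing to Lemma~\ref{lem:pick} to "canonicalize" the $B$-intervals so that no shorter path can hide inside them. The no-containment part, by contrast, should be routine once Observation~\ref{obs:alt_a_no_containment} and the properness of $(A,B)$ are in hand.
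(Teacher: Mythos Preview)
Your no-containment argument has a concrete gap. You assert that if $z \subsetneq b_i$ then $z$ ``would still intersect the neighbors of $b_i$ in the path'', but this is false in general: a strict subinterval of $b_i$ need not touch $a_{i-1}$ or $a_i$ at all. Relatedly, the claim that $\ell(z)$ lands in the window $[r(b_{i-1}), r(a_{i-1})]$ fails once $\ell(z) > r(a_{i-1})$, which is entirely possible. The paper handles this by a four-case analysis on how $z$ intersects $a_{i-1}$ and $a_i$: if it intersects neither, $I \cup \{z\}$ is independent (contradicting maximality of $I$); if it is contained in one of them, no-containment of $I$ is violated; if it intersects exactly one, you get a strictly shorter alternating path (contradicting smallest); and only if it intersects both do you invoke the leftmost-right-endpoint property to replace $b_i$. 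You need all four cases.

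Your $k$-maximality plan is headed in a much harder direction than necessary. The splicing/grafting of $(A^*,B^*)$ into $(A,B)$ that you identify as the ``main obstacle'' is avoided entirely in the paper. The paper's argument is local: first show $C \cap B \neq \emptyset$ (otherwise $(C,D)$ would already witness non-$k$-maximality of $I$, using no-containment of $I$), then without loss of generality the leftmost interval $c_1$ of $C$ equals some $b_i \in B$ (the mirror case is handled via the sibling path). Now look only at $d_1$ and its intersection pattern with $a_{i-1}, a_i$. If $d_1$ misses $a_i$, then $b_1, a_1, \ldots, a_{i-1}, d_1$ is a shorter alternating path, contradicting that $(A,B)$ is smallest; if $d_1$ hits $a_i$, then $d_1$ could have replaced $b_i$ with a strictly smaller right endpoint, contradicting the leftmost-right-endpoint property. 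No global splicing, no cardinality bookkeeping across two paths---the contradiction is obtained from a single interval $d_1$. Your case $A^*_{\text{new}} = \emptyset$ also needs the check that $B^*$ does not geometrically intersect $A$, which you hand-wave; the paper's route sidesteps this by never needing more of $(C,D)$ than its first two intervals.
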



To state the second lemma we need the following definition. 

\begin{definition}
Let $I$ be a $k$-valid independent set. A set $B = \lrbrace{b_1, \dotsc, b_j} \subseteq S \setminus I$ is called a  left/right \textit{substitute} of a set $A = \lrbrace{a_1,\dotsc,a_j} \subseteq I$ if the following  holds: 

\begin{enumerate}
 \item There is no way to extend $A$ and $B$ to create alternating paths of size $t$ to $t+1$, for any  $t \leq 2k$.
 \item Left substitute: If interval $a_j$ was not there, then $(A \setminus \lrbrace{a_j},B)$ would be a proper alternating path. Symmetrically for right substitute, if $a_1$ was not there, then $(A \setminus \lrbrace{a_1},B)$ would be a proper alternating path. 
\end{enumerate}
\end{definition}

Another important lemma, concerning exchanges with the same number of intervals.

\begin{lemma}
\label{lem:j-to-j}
Let $I$ be a $k$-valid independent set. Let $A \subseteq I$ and $B$ a (left or right) substitute of $A$. Then, $(I \setminus A) \cup B$ is a $k$-valid independent set.
\end{lemma}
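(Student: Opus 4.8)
The plan is to show that $I' := (I \setminus A) \cup B$ satisfies both conditions of $k$-validity: no-containment and $k$-maximality. Throughout I would rely heavily on the structural machinery already set up: since $B$ is a left or right substitute of $A$, if we drop the appropriate extreme interval of $A$ (say $a_j$ in the left case) then $(A \setminus \{a_j\}, B)$ is a \emph{proper} alternating path, so $B$ fits tightly inside the gaps vacated by $A$ except that $B$ may stick out past where $a_j$ used to be — in particular $b_j$ (the last interval of $B$) may overhang to the right of $a_{j-1}$ but is blocked from forming a new exchange. By the leftmost-right-endpoint property, each $b_i$ for $2 \le i \le j$ has minimal right endpoint among intervals with left endpoint in $[r(b_{i-1}), r(a_{i-1})]$, and $b_1$ is chosen analogously at the left frontier. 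First I would establish that $I'$ is genuinely an independent set — this is essentially Observation~\ref{obs:alt_a_no_containment} applied to $(A\setminus\{a_j\},B)$ plus the fact that the newly exposed region near $a_j$ is handled because $b_j$ does not cross into $a_{j+1}$'s territory (the interval of $I$ just past $A$), which is exactly what condition~1 of the substitute definition forbids.

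Next, the no-containment property. The only intervals of $S \setminus I'$ that could newly be contained in an interval of $I'$ are those contained in some $b_i \in B$. I would argue this is impossible by the proper/smallest property: if some $z \in S$ were strictly inside $b_i$, then $z$ has a right endpoint smaller than $r(b_i)$ while its left endpoint lies in the same range, contradicting the leftmost-right-endpoint minimality of $b_i$ — unless $z$'s left endpoint is to the right of $r(a_{i-1})$, i.e., $z$ lives entirely in the interior of the "last part" of $b_i$ beyond $a_{i-1}$. For $i \le j-1$ this last part is empty (those $b_i$ are sandwiched), so the only danger is $z \subsetneq b_j$ inside the overhang. Here I would invoke condition~1 of the substitute definition: such a $z$ would let us extend $(A,B)$ by adding $a_{j+1}$ (or whatever interval of $I$ the overhang meets) and $z$, building a larger alternating path — contradiction. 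Intervals of $A$ leaving $I$ cannot newly contain anything either, since any interval strictly inside some $a_i$ was already strictly inside $a_i \in I$, violating $k$-validity of $I$; and by Observation~\ref{obs:alt_a_no_containment} no interval of $A$ is contained in an interval of $B$.

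For $k$-maximality of $I'$, suppose for contradiction there is a proper alternating path $(A', B')$ with respect to $I'$ of size $t \le k$. The key point is locality: outside a bounded window around $A \cup B$, the independent set $I'$ agrees with $I$, and $I$ is $k$-maximal, so $(A',B')$ must touch the window. I would then splice $(A',B')$ together with $(A,B)$: intervals of $A'$ that lie in $B$ get "uncontracted" back through the exchange, producing an alternating path with respect to the \emph{original} $I$ whose $A$-side has size at most $|A| + |A'| \le k + k = 2k$. This contradicts condition~1 of the substitute definition, which precisely rules out extending $A,B$ to an alternating path of size $t$-to-$t+1$ for any $t \le 2k$. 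The careful bookkeeping — checking that the spliced object really is a valid alternating path with respect to $I$ and that the $A$-side size stays within $2k$, handling the left-versus-right cases and the interface at $a_j$ — is where the real work lies.

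\textbf{Main obstacle.} I expect the splicing argument for $k$-maximality to be the hard part: one must show that \emph{any} would-be improving exchange with respect to $I'$ can be merged with $(A,B)$ into a single alternating path with respect to $I$ without the $A$-side blowing past $2k$, and in particular that the overhanging interval $b_j$ does not create a subtle new exchange that is "local" to $I'$ but invisible when viewed through $I$. Getting the endpoint inequalities right at the $a_j$ interface — and arguing that the substitute conditions were stated with exactly the right slack ($2k$, not $k$) to absorb a second exchange — is the crux, and is presumably why the companion Lemma~\ref{lem:higher_exchange} was phrased for the range $k < t \le 2k+1$.
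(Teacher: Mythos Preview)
Your overall architecture is right, and your instinct that the $k$-maximality half is where the real work lies is correct. But there are two genuine gaps.

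\medskip

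\textbf{No-containment.} The claim that for $i \le j-1$ the portion of $b_i$ past $r(a_{i-1})$ is ``empty'' is false: every such $b_i$ extends past $r(a_{i-1})$ into $a_i$ (that is exactly why $(A\setminus\{a_j\},B)$ is an alternating path). So a hypothetical $z \subsetneq b_i$ with $\ell(z)>r(a_{i-1})$ is perfectly possible for any $i$, not only $i=j$. The paper does not try to rule this out via the leftmost-right-endpoint property alone; instead it runs the four-case analysis of Lemma~\ref{lem:higher_exchange} on the pair $a_{i-1},a_i$: if $z$ misses both, $I\cup\{z\}$ is independent; if $z$ is contained in one, $I$ violates no-containment; if $z$ hits exactly one, you get a strictly shorter alternating path (contradicting smallest); if $z$ hits both, $z$ beats $b_i$ on the leftmost-right-endpoint criterion. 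Your argument collapses the last two cases and misplaces where the danger lies.

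\medskip

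\textbf{$k$-maximality.} The uniform ``splice into one path of $A$-side $\le 2k$ and invoke the substitute condition'' plan does not go through. In the paper's proof this splicing idea is used in \emph{exactly one} sub-case, namely when the hypothetical path $(C,D)$ satisfies $C\cap B=\emptyset$: then indeed $(A\cup C,\,B\cup D)$ is an alternating path of size at most $2k$, and condition~1 of the substitute definition kills it. But when $C\cap B\neq\emptyset$ the spliced object you have in mind is not in general an alternating path with respect to $I$, and the contradiction must come from elsewhere. Concretely:
\begin{itemize}
    \item If $c_1\in B$ (say $c_1=b_i$), the contradiction is with the \emph{proper} property of $(A\setminus\{a_j\},B)$: one analyzes how $d_1$ sits relative to $a_{i-1},a_i$ and obtains either a strictly shorter alternating path (violating ``smallest'') or a replacement for $b_i$ with smaller right endpoint (violating leftmost-right-endpoint). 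This is verbatim the Lemma~\ref{lem:higher_exchange} argument.
    \item If $c_1\notin B$ but $C\cap B\neq\emptyset$ (so $C$ overlaps $B$ from the left), one shows there is a smallest index $t$ with $d_{i+t}$ not intersecting $a_{t+1}$, and then $\{c_1,\ldots,c_{i-1},a_1,\ldots,a_t\}$ together with $\{d_1,\ldots,d_{i+t}\}$ is an alternating path \emph{with respect to $I$} of size strictly less than $k$ --- contradicting $k$-maximality of $I$ directly, not the substitute condition. The existence of such a $t$ itself requires a further sub-case split (on whether $A\cap D=\emptyset$ and on whether $C$ is a superset of $B$ or merely overlaps it).
\end{itemize}
So the ``$2k$'' in the substitute definition is there for precisely one scenario, and the other scenarios are disposed of by recycling the machinery of Lemma~\ref{lem:higher_exchange} and by exhibiting a short path against $I$ itself. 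Your plan to funnel everything through the $2k$ bound will not work; the honest proof is a case analysis, and the cases are distinguished by the relative position of $(C,D)$ and $B$.

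A minor additional point: you write $|A|+|A'|\le k+k$, but the definition of substitute places no bound on $|A|$; in the applications $|A|\le k+1$, but the lemma as stated is more general, so even in the one case where splicing works you should track the size carefully.
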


Proofs of lemmata~\ref{lem:higher_exchange} and~\ref{lem:j-to-j} are deferred to the end of this subsection. We first show how they can be combined with the observations above to show correctness of our dynamic algorithm. 

\vspace{0.16cm}

\noindent \textbf{Correctness of the Insertion Algorithm.}  We need to perform a case analysis depending on the change made by our algorithm after each insertion. However, in all cases our approach is the same: we show that the overall change is equivalent to (i) either a $j$-to-$j+1$ exchange for $j \leq 2k+1$ or a $j$-to-$j$ substitution before insertion of $x$, plus (ii) adding $x$ in the independent set. The resulting independent set remains valid after step (i) due to Lemma~\ref{lem:higher_exchange} or~\ref{lem:j-to-j} respectively and after step (ii) using observation~\ref{obs:insert_one}.

 We now begin the case analysis. First, observe that we need only to consider the case where the algorithm performs exchanges. If no exchanges are made, then it is easy to see that $I$ remains $k$-valid: both $k$-maximality and no-containment can only be violated due to $x$ and if this is the case we fall into one of the cases where the algorithm makes changes. Thus we assume that the algorithm does some change.

 In case the new interval $x$ does not intersect any other interval of $I$ and gets inserted (case~\ref{case:insert-one} of the algorithm) then the new independent set is $k$-valid due to Observation~\ref{obs:insert_one}. In case where the inserted interval $x$ is strictly contained in an interval $a \in I$, which corresponds to case~\ref{case:strict_contain} in the pseudocode of Section~\ref{sec:intervals_alg}  (case 1 in the description of Section~\ref{sec:intervals_main}),  three subcases might occur:

\begin{enumerate}
\item Alternating paths were found in both directions: $L = (L_A,L_B) $ and $R = (R_A,R_B)$ (see Figure~\ref{fig:insert_subset}). Let $A = L_A \cup \lrbrace{a} \cup R_A$ and $B = L_B \cup R_B$. Observe that $(A,B)$ is an alternating path of size $j \leq 2k+1$ in the intersection graph of $S$ before the insertion of $x$. Thus, the overall change is equivalent to (i) doing a $j$ to $j+1$ exchange in the previous graph, for $j \leq 2k+1$, then (ii) adding $x$. Thus using Lemma~\ref{lem:higher_exchange} and Observation~\ref{obs:insert_one}, we get that the new independent set is $k$-valid.

\item An alternating path was found only in one direction: Assume that it is found only to the left, i.e., $L = (L_A,L_B) \neq \emptyset$ and $R = \emptyset$ (see Figure~\ref{fig:insert_1b}). Note that before $x$ was inserted, $L_B$ was a left substitute of $L_A \cup \lrbrace{a}$. Thus the overall change made from our algorithm is equivalent to (i) performing a substitution of $L_A \cup \lrbrace{a} $ by $L_B$ in the previous graph, then (ii) adding $x$ in $I$. $I$ remains $k$-valid after step (i) due to Lemma~\ref{lem:j-to-j} and after step (ii) due to Observation~\ref{obs:insert_one}.

\item No alternating path is found neither to the left nor to the right: $L = R = \emptyset$. Here it is easy to show that the new independent set is $k$-valid; clearly it satisfies the no-containment property. It remains to show the $k$-maximality. Assume for contradiction that there exists an alternating path $(A,B)$ of size at most $k$; this alternating path should involve $x$ (otherwise $I$ was not $k$-maximal which contradicts the induction hypothesis) and since $x$ is subset of $a$, it should involve $a$, thus it was an alternating path before insertion of $x$, contradiction.
\end{enumerate}    

\begin{figure}[ht]
    \centering
    \includegraphics[scale=1]{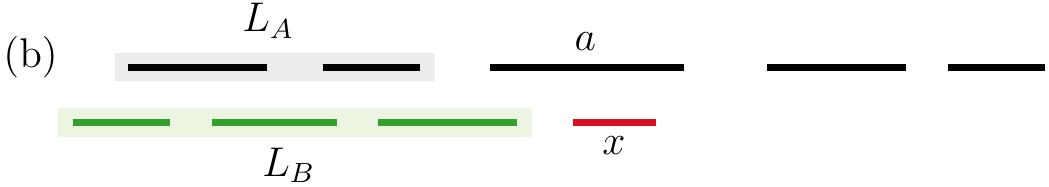}
    \caption{$x$ is contained in $a$, an alternating path $(L_A,L_B)$ is found.}
    \label{fig:insert_1b}
\end{figure}

It remains to show correctness for the cases where an alternating path involving $x$ is found and an exchange is made, that is, cases \ref{case:insert_alt_both} and \ref{case:insert_alt_one} of the insertion algorithm. The two cases are similar. In case \ref{case:insert_alt_both} (an alternating path extends both to the left and to the right of $x$ --see Figure~\ref{fig:insertion_alternating_both}), let $(L_A,L_B)$ be the alternating path found in the left and $(R_A,R_B)$ the one found at the right of $x$. Note that before insertion of $x$, $L_B$ was a left substitute of $L_A \cup \lrbrace{a_\ell}$ and $R_B$ was a right substitute of $ R_A \cup \lrbrace{a_r} $. Thus the the overall change made by the algorithm, removing $L_A \cup \lrbrace{a_\ell} \cup R_A \cup \lrbrace{a_r}$ from $I$ and adding $L_B \cup \lrbrace{x} \cup R_B$, is equivalent to (i) performing two substitutions before insertion of $x$ and (ii) adding $x$; thus by Lemma~\ref{lem:j-to-j} and Observation~\ref{obs:insert_one} we get that the new independent set is $k$-valid. In case~\ref{case:insert_alt_one}, the analysis is the same, just $L$ or $R$ is empty and $a_\ell$ or $a_r$ respectively is null, thus the same arguments hold. 

\begin{figure}[ht]
    \centering
    \includegraphics[scale=1]{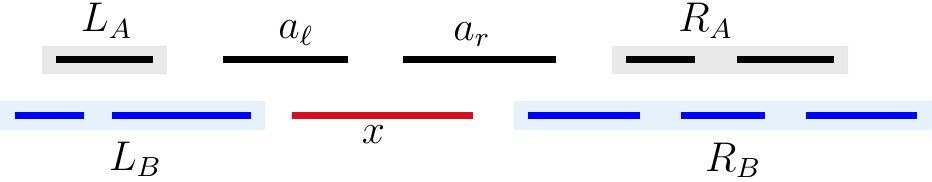}
    \caption{After insertion of $x$, an alternating path $(A,B)$ is formed, where $A = L_A \cup \lrbrace{a_{\ell},a_r} \cup R_A$ and $B = L_B \cup \lrbrace{x} \cup R_B $}
    \label{fig:insertion_alternating_both}
\end{figure}

\paragraph{Correctness of the Deletion Algorithm.} Recall that if the deleted interval $x$ is not in the current independent set $I$, then we do not make any change and by Observation~\ref{obs:delete_notinset} $I$ remains $k$-valid. So we focus on the case $x \in I$. Same as for insertion, it is easy to show that is the algorithm does not make any change other than deleting $x$, then $I$ remains $k$-maximal. We proceed to a case analysis, assuming algorithm did some change.  

\begin{enumerate}
    \item Alternating paths found in both directions and they can be merged (Case \ref{case:del-both-sides} of the deletion algorithm). In this case we have two alternating paths $L = (L_A,L_B)$ and $R = (R_A,R_B)$ (see Figure~\ref{fig:del_alternating_both}) with $|L_A|, |R_A| \leq k$. Let $A = L_A \cup \lrbrace{x} \cup \lrbrace{R_A}$ and $B = L_B \cup R_B$. Observe that, before deletion of $x$, $(A,B)$ was an alternating path of size $|L_A| + |R_A| +1 \leq 2k+1$.  The exchange made by our algorithm (deleting $x$, removing $L_A, R_A$ from $I$ and adding $L_B, R_B$ to $I$) is equivalent to performing the exchange $(A,B)$ before deletion of $x$; then when $x$ is deleted, $I$ is not affected (by Observation~\ref{obs:delete_notinset}). By Lemma~\ref{lem:higher_exchange} we get that the new independent set is $k$-valid.

    \item An exchange is performed only to the left (right) of $x$ (cases \ref{case:del-both-one} and \ref{case:del-one-side} of the deletion algorithm). We show the case of left; the one for right is symmetric. $L = (L_A,L_B)$ is an alternating path on the left of $x$. Note that before deletion of $x$, $L_B$ was a left substitute of $L_A \cup \lrbrace{x}$. Thus the performed exchange is equivalent to a (i) performing a $j$-to-$j$ substitution before the deletion of $x$, then (ii) deleting $x$ from $S$. In step (i) we remain $k$-valid due to Lemma~\ref{lem:j-to-j} and in step (ii) due to Observation~\ref{obs:delete_notinset}.
    \item If an interval $y$ containing $x$ gets added to $I$ after deletion of $x$ (case \ref{case:del-superset-one} of deletion algorithm), then clearly $I$ satisfies the no-containment property: this is because the interval $y$ we use is the one with leftmost right endpoint among intervals containing $x$. Moreover, the new independent set $I' = I \setminus \lrbrace{x} \cup \lrbrace{y}$ is $k$-maximal due to Observation~\ref{obs:del_superset}. Overall, $I'$ is a $k$-valid independent set. 
    \item In case we find an alternating path $(A,B)$ including an interval $y$ containing $x$ (case \ref{case:del-superset-path} of deletion algorithm), let $L_A \subseteq A$ be the intervals of $A$ to the left of $x$ and $R_A \subseteq A$ the ones to the right of $x$. Similarly let $L_B$ and $R_B$ be the intervals of $B$ to the left/right of $y$ (see Figure~\ref{fig:del_superset_alter}). Note that before the deletion of $x$,  $L_B$ is a left substitute of $L_A$ and $R_B$ is a right substitute of $R_A$.  Thus the exchange made by the deletion algorithm is equivalent to (i) substituting $L_A$ by $L_B$ and $R_A$ by $R_B$ before the deletion of $x$ and (ii) after the deletion of $x$ replacing it by $y$. After the substitutions of step (i) we remain $k$-valid due to Lemma~\ref{lem:j-to-j} and for step (ii) we use Observation~\ref{obs:del_superset}.   
\end{enumerate}


\begin{figure}[ht]
    \centering
    \includegraphics[scale=1]{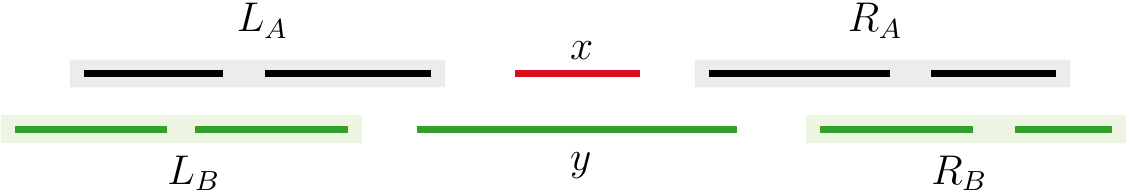}
    \caption{Case \ref{case:del-superset-path} of deletion algorithm: After deletion of $x$, a new alternating path $(A,B)$ is formed with $A = L_A \cup R_A$ and $B = L_B \cup \lrbrace{y} \cup R_B$.}
    \label{fig:del_superset_alter}
\end{figure}

\paragraph{Missing proofs.} In the remainder of this section, we give the full proofs of lemmata~\ref{lem:higher_exchange} and~\ref{lem:j-to-j} which were omitted earlier.

\medskip

\noindent {\bf  Lemma~\ref{lem:higher_exchange}}  {\em (restated) Let $I$ be a $k$-valid independent set of intervals. Assume there exists a proper alternating path $(A,B)$, such that $|A| = t$, $|B|=t+1$ for $k < t \leq 2k+1$. Then, $(I \setminus A) \cup B$ is also a $k$-valid independent set.}

\begin{proof}
We prove the lemma by contradiction. First we show that the no-containment property is true for $(I \setminus A) \cup B$. We then proceed to $k$-maximality. All proofs are shown using a contradiction argument. Let $A = \lrbrace{a_1,\dotsc,a_{t}}$ and $B = \lrbrace{b_1,\dotsc,b_{t+1}}$.

\medskip

\noindent \textit{No containment:} Assume for contradiction that there exists an interval $y \in S \setminus ((I \setminus A) \cup B)$ that is strictly contained in an interval $x \in (I \setminus A) \cup B$. Clearly, $x \in B$, since for all intervals of $I \setminus A$, the no-containment property is true (because $I$ is $k$-valid). Moreover, $y \notin A$, since $(A,B)$ is an alternating path, thus by Observation~\ref{obs:alt_a_no_containment} no interval of $A$ is strictly contained in an interval of $B$. Thus $y \in S \setminus (I \cup B) $. Overall, we have $y \in S \setminus (I \cup B) $ and $x \in B$ such that $y$ is strictly contained in $x$. 

 Let $i$ be the integer such that $b_i=x$; clearly $1 \leq i \leq t+1$. There are 4 cases to consider depending on how $y$ intersects with $a_{i-1}$ and $a_i$\footnote{Corner cases: If $i=1$ then $a_{i-1} = a_0$ does not exist; similarly, if $i=t+1$, then $a_i = a_{t+1}$ does not exist. In case an interval $a_{i-1}$ or $a_i$ does not exist, we simply assume that it exists and does not intersect with $y$.}, illustrated in Figure~\ref{fig:no-containment}.
\begin{enumerate}
\item $y$ does not intersect with none of $a_{i-1}$,  $a_i$. This contradicts $k$-maximality of $I$, since $I \cup \lrbrace{y}$ would be an independent set.

\item $y$ strictly contained in $a_{i-1}$ or $a_i$. This contradicts the fact that $I$ is a $k$-valid independent set (no-containment violated).

\item $y$ intersects only one interval of $A$ but it is not strictly contained in it. Assume it intersects $a_{i-1}$;  we construct a contradicting alternating path from left to right (proof for $a_i$ will be symmetric, i.e., constructing a contradicting alternating path from right to left). Then, $b_1, a_1, \dotsc,b_{i-1}, a_{i-1}, y$ is an alternating path of size $i-1 \leq t$, contradicting that $(A,B)$ is a smallest alternating path. Corner case: if $i = t+1$, then this contradiction does not hold: the new alternating path has the same size. But in that case, $r(y) < r(x)$, meaning that $x= b_i$ is not the interval with the leftmost right endpoint that could be added in the alternating path, thus $(A,B)$ is not proper. Contradiction. For the case $y$ intersects $a_i$, the same corner case appears if $i=1$; same way, this will contradict that $(A,B)$ satisfies the leftmost right endpoint property.

\item $y$ intersects both $a_{i-1}$ and $a_i$. In that case, $y$ could replace $b_i$ in the alternating path; contradicts the fact that $(A,B)$ is a proper alternating path: here $r(y) < r(b_i)$, yet $b_i$ was included in the alternating path. 
\end{enumerate}

\begin{figure}[ht]
    \centering
    \includegraphics[scale=1]{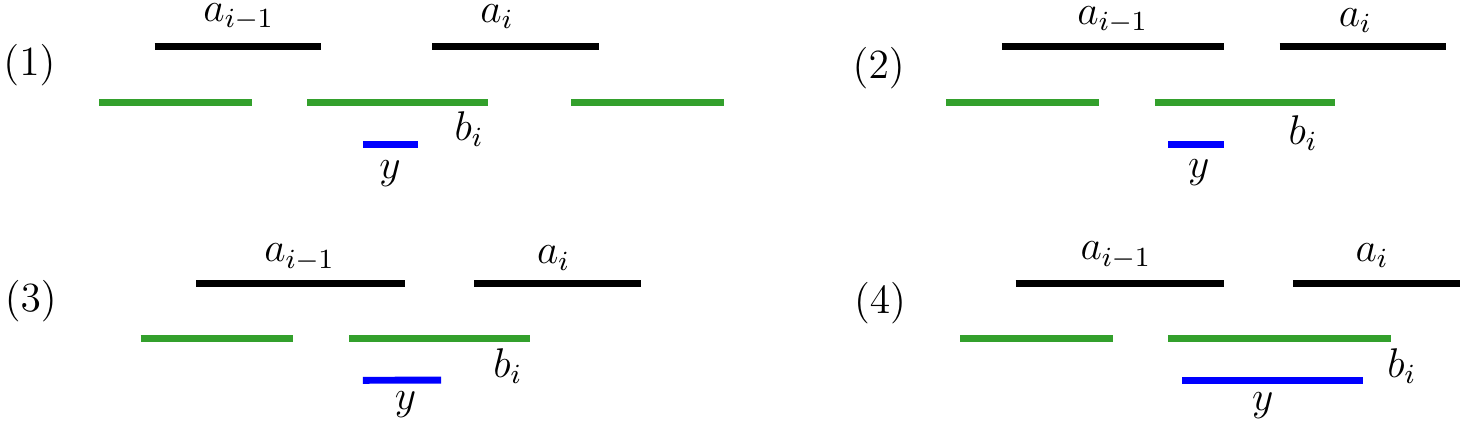}
    \caption{Obtaining contradiction in all cases for the no-containment property.}
    \label{fig:no-containment}
\end{figure}

Overall, in all cases we obtained a contradiction, implying that $(I \setminus A) \cup B$ satisfies the no-containment property.

\medskip

\textit{$k$-maximality:} We now show that $(I \setminus A) \cup B$ is a $k$-maximal independent set. Assume for contradiction that there exists a pair $(C,D)$ of size at most $k$ that induces an alternating path with respect to $(I\setminus A)\cup B$.  We will show that this contradicts either that $I$ is $k$-maximal or that $(A,B)$ is a proper alternating path. Let $C=\{c_1,\ldots ,c_{t'}\}$ and $D=\{d_1,\ldots ,d_{t'+1}\}$, with $t'\leq k$.

First observe that $C\cap B\not=\emptyset$; this can be easily shown by contradiction: If $C \cap B = \emptyset $, then $(C,D)$ is an alternating path with respect to $I$ of size at most $k$, contradicting that $I$ is $k$-maximal\footnote{Note that here we use crucially that $I$ satisfies the no-containment property. For an arbitrary $k$-maximal set with the no-containment property, this is not true, and such alternating paths could exist.}. Since $C \cap B \neq \emptyset$, we have that $C$ is non-empty.

Since $|C| \leq k$ and $|B| >k$, $C$ cannot be a strict superset of $B$. Either $C$ will be a contiguous subsequence of $B$ or it will extend it in one direction (left or right). As a result, one extreme interval of $C$ (either the leftmost or the rightmost) will belong to $B$. We give the proof for the case that the leftmost interval of $C$, namely $c_1$, belongs to $B$. In case $c_1 \notin B$, then $c_{t'} \in B$ and the proof is essentially the same by considering the mirror images of the intervals and obtaining the contradiction for the sibling alternating path $(A,B')$ that satisfies the rightmost left endpoint property (see Definition~\ref{def:sibling}). 

From now on we focus on the case where $c_1 \in B$.
There exists some $1 \leq i \leq t+1$ such that $b_i = c_1$. Since $(A,B)$ induces an alternating path, there are (at most) two intervals $a_{i-1}$ and $a_i$ intersecting $b_i = c_1$ (in case $i=1$ there is only one interval, $a_i = a_1$ and in case $i=t+1$ there exists only $a_{i-1} = a_t$).

We consider the intersection pattern of $a_{i-1},a_i,d_1$.  Note that since $(I \setminus A) \cup B$ satisfies the no-containment property, we have that $d_1$ can not be strictly contained in $c_1 = b_i$.

Note that if $i >1$, then $d_1$ should intersect with $a_{i-1}$: they both contain the left endpoint of $c_1 = b_i$. Moreover, it must be that $d_1 \neq a_{i-1}$, since $a_{i-1}$ contains the point $r(b_{i-1})$, but $d_1$ does not. In case $i=1$, then $a_{i-1} = a_0$ does not exist; for convenience in the proof we assume that $a_{i-1} = a_0$ exists and does not intersect with $d_1$. We need to consider two separate cases depending on the intersection between $d_1$ and $a_i$. 

\medskip

\noindent \textbf{Case 1: $d_1$ does not intersect $a_i$.} We distinguish between two subcases.

\begin{enumerate}
\item \textit{$d_1$ does not intersect $a_{i-1}$}. Recall this can happen only if $i=1$. In that case, we have that $d_1$ does not intersect any interval of $I$, thus $I \cup \lrbrace{d_1}$ is an independent set, therefore $I$ is not maximal, contradiction. 

\item \textit{$d_1$ intersects $a_{i-1}$} (see Figure~\ref{fig:higher_exchange_contr}): In that case, $b_1,a_1, \dotsc, b_{i-1},a_{i-1},d_1$ is an alternating path. Equivalently, for $A' = \lrbrace{a_1,\dotsc,a_{i-1}}$ and $B' = \lrbrace{b_1,\dotsc,b_{i-1}, d_1}$, $(A',B')$  is an alternating path. Since $A' \subset A$, we get that the alternating path $(A,B)$ is not proper. Contradiction. 
\end{enumerate}

\medskip

\noindent \textbf{Case 2: $d_1$ intersects $a_i$} (see Figure~\ref{fig:higher_exchange_contr}). Here we do not need any subcases. Note that $r(d_1) < r(c_1) = r(b_i)$. Thus, if $\ell(d_1) > \ell(c_1) = \ell(b_i)$, then $d_1$ is a strict subset of $c_1  =b_i$, which contradicts the no-containment property of $(I \setminus A) \cup B$. So it must be the case that $\ell(d_1) < \ell(c_1) = \ell(b_i)$. But then, in the alternating path $(A,B)$, the interval $b_i$ could have been replaced by $d_1$; this contradicts the assumption that $(A,B)$ is a proper alternating path, since it does not satisfy the leftmost right endpoint property.

\medskip

\begin{figure}[ht]
    \centering
    \includegraphics[scale=1]{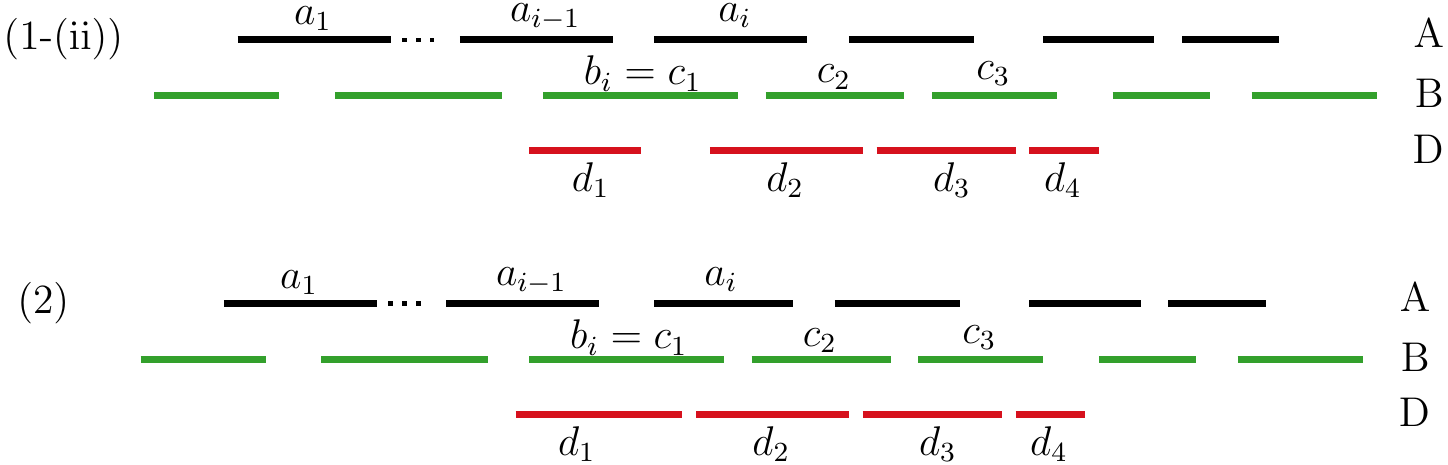}
    \caption{Showing the contradiction. On top, case 1(ii): If $d_1$ does not intersect $a_i$, then $b_1,a_1, \dotsc, a_{i-1}, d_1$ is an alternating path. Down, case 2: If $d_1$ intersects $a_i$, then it could replace $b_i$ and give an alternating path satisfying the leftmost right endpoint property.}
    \label{fig:higher_exchange_contr}
\end{figure}

We crucially note that the proof holds even if $A \cap D \neq \emptyset$: in all cases the only interval of $D$ used to obtain contradiction was $d_1$; since $d_1 \notin A$, as explained above, then the proof holds even if $d_j \in A$ for some $j > 1$.\qedhere

\end{proof}

We conclude with the proof of Lemma~\ref{lem:j-to-j}.

\medskip 

\noindent {\bf  Lemma~\ref{lem:j-to-j}}  {\em (restated)  Let $I$ be a $k$-valid independent set. Let $A \subseteq I$ and $B$ a (left or right) substitute of $A$. Then, $(I \setminus A) \cup B$ is a $k$-valid independent set.}

\begin{proof}
The proof of the no-containment property is the same as in Lemma~\ref{lem:higher_exchange}, by considering four cases and proving contradiction to all of them. The only difference is that the corner case with $i=t+1$ in case 3 cannot appear.

\textit{k-maximality:} Without loss of genrality, we only give the proof for the case $B$ is a left substitute of $A$. Part of the proof carries over from Lemma~\ref{lem:higher_exchange}. Suppose for contradiction that $\iab$ is not $k$-maximal and there exists an alternating path $(C,D)$ for $C=\{c_1,\ldots ,c_{t'}\}$ and $D=\{d_1,\ldots ,d_{t'+1}\}$, with $t'\leq k$.

We note that, in contrast to Lemma~\ref{lem:higher_exchange}, now it is not obvious that $C \cap B \neq \emptyset$ (see Figure~\ref{fig:C_disj_B}). Thus we first give the proof for this case and later we consider the case $C \cap B = \emptyset$.

We focus on the case $C \cap B \neq \emptyset$. We distinguish between two sub-cases, depending on whether $c_1 \in B$ or not. 

\medskip

\noindent \textbf{Case 1:  $c_1 \in B$}. Note that in this case, either $C$ is a strict subset of $B$, or $C$ extends $B$ to the right. Let $c_1 = b_i$. The proof is same as the proof of Lemma~\ref{lem:higher_exchange} based on intersections between $d_1$ and $a_{i}, a_{i-1}$ and showing the exact same contradiction in all cases.  

\noindent \textbf{Case 2: $c_1 \notin B$.} Note that in that case, $C$ is either a strict superset of $B$, or extends $B$ to the left. Observe that $c_1$
is on the left side of intervals of $B$. Let $b_1=c_i$. Observation: For all $1 \leq t \leq \min \lrbrace{j,t'+1-i} $, interval $d_{i+t}$ intersects $a_{t}$ (they both contain the right endpoint of $b_{t}$).  Let $t \leq j-1$ be the smallest index such that $d_{i+t}$ does not intersect $a_{t+1}$. We claim that such $t$ always exists. Then, $d_1,c_1,\dotsc,d_i,a_1,d_{i+1}, \dotsc, d_{i+t}$ is an alternating path of size less than $k$; equivalently $C' = \lrbrace{c_1, \dotsc,c_{i-1},a_1, \dotsc, a_{t}}$ and $D' = \lrbrace{d_1,\dotsc,d_{i+t}}$ is an alternating path with respect to $I$, of size $i+t-1 \leq i+t'+1-i-1 = t' < k$, a contradiction. It remains to show that such a $t$ always exist. To this end, we distinguish between two subcases to conclude the proof:

\begin{enumerate}
\item Case $A \cap D \neq \emptyset$: Let $j'$ be the smallest index such that $d_{j'} \in A$. Since for all $t \leq \min \lrbrace{j,t'+1-i}$ each interval $d_{i+t}$ intersects interval $a_t$, we get that $d_{j'}$ intersects $a_{j'-i}$, thus $d_{j'} = a_{j'-i}$.  That means, interval $d_{j'-1}$, does not intersect $a_{j'-i}$. Thus for $t = j' - i -1$, we have that interval $d_{i+t}$ does not intersect $a_{t+1}$.

\item Case $A \cap D = \emptyset$. In that case we need some further case analysis. 

\begin{enumerate}
\item $t'+1-i \geq j$: Note that in this case, $C$ is a strict superset of $B$. Assume such $t$ does not exist. Then, $((C \setminus B)\cup A ,D) $ is an alternating path of size at most $k$, contradicting that $I$ is $k$-maximal.

\item $t'+1-i < j$: Note that in this case, $C$ extends $B$ on its left. Assume such $t$ does not exist. Then, we have that interval  $d_{t'} = d_{i+(t'-i)}$ intersects both $a_{t'-i}$ and $a_{t'-i+1}$, therefore  $\ell(d_{t'+1}) > r(d_{t'}) = r(d_{i+(t'-i)}) \geq \ell(a_{t'-i+1})$. Also, interval $b_{t'-i+2}$ exists (since $t'-i+2 \leq j$) and has $\ell(b_{t'-i+2}) < r(a_{t'-i+1}) $. Also $r(d_{t'+1}) < \ell(b_{t'-i+2}) $ (because the alternating path ends at $d_{t'+1}$), therefore  $r(d_{t'+1}) < r(a_{t'-i+1})$. 

Overall we have that  $\ell(d_{t'+1}) > \ell(a_{t'-i+1})$ and $r(d_{t'+1}) < r(a_{t'-i+1})$, i.e, $d_{t'+1}$ is strictly contained in $a_{t'-i+1}$, contradicting that $I$ is a $k$-valid independent set. 
\end{enumerate}

\end{enumerate}

\begin{figure}[ht]
    \centering
    \includegraphics[scale=1]{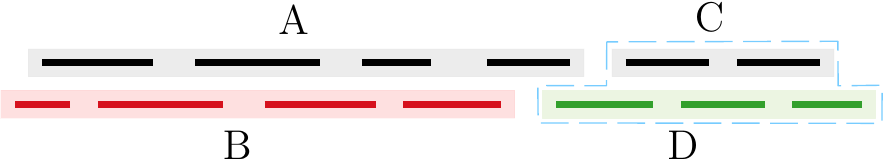}
    \caption{$B$ is a left substitute of $A$, and it might be that $(C,D)$ is an alternating path. But in that case, $(A \cup C,B \cup D)$ is an alternating path.}
    \label{fig:C_disj_B}
\end{figure}

It remains to consider the case $C \cap B = \emptyset$ (Figure~\ref{fig:C_disj_B}) and obtain again a contradiction. First, show that the only way this can happen is if $C$ contains consecutive intervals to the right of $A$. But in this case, observe that $(A \cup C, B \cup D)$ is an alternating path of size $j+t' \leq 2k$, which contradicts that $B$ is a substitute of $A$. \qedhere

\end{proof}

\subsection{Extensions}
\label{sec:intervals_extend}

We now extend our data structure to also support, in the same running time $O(k^2 \log n)$, the following operations:

\begin{enumerate}
    \item {\bf Merge:} Given two sets of intervals $S_1$ and $S_2$ and such that for all $x \in S_1$ $\ell(x) \leq t$ and for all $y \in S_2$, $\ell(y) > t$ and two $k$-valid independent sets $I_1$ and $I_2$ of $S_1$ and $S_2$ respectively, get a $k$-valid independent set $I$ of $S= S_1 \cup S_2$.
    \item {\bf Split:} Given a set of intervals $S$, a $k$-valid independent set $I$ of $S$, and a value $t$, split $S$ into $S_1$ and $S_2$ such that $x \in S_1$ $\ell(x) \leq t$ and for all $y \in S_2$, $\ell(y) > t$ and produce $k$-valid independent sets $I_1, I_2$ of $S_1$ and $S_2$ respectively. 
    \item {\bf Clip($t$).}  Assume we store a set $S$ of intervals and a $k$-valid independent set $I$ and let $t$ be a point to the left of the rightmost left endpoint of all intervals of $S$. This operation shrinks all intervals $x$ with $r(x) > t$, such that $r(x) = t$. 
\end{enumerate}

Furthermore we show that some types of changes in the input set $S$ do not affect our solution.

\begin{itemize}
\item {\bf Extend($y$):} Assume we store a set $S$ of intervals and a $k$-valid independent set $I$. Then, if an interval $y \in S \setminus I$ gets replaced by $y'$ such that $y$ is a strict subset of $y'$, then $I$ remains $k$-valid.
\end{itemize}

The operations on the interval query data structure can be done using red-black trees as explained in Lemma~\ref{lem:ors}. The non-trivial part is to show how to maintain $k$-valid independent sets under splits and merges. For example, when merging, the leftmost interval of $I_2$ might intersect (or even be strictly contained in) the rightmost interval of $I_1$.
We show that we can reduce those operations to a constant number of insertions and deletions.

\paragraph{1. Merge:} We now describe our merge algorithm. We start by inserting a fake tiny interval $b$ in $S_1$, with $\ell(b) >t$ such that $b$ is strictly contained in any interval of $L$ it intersects, and it does not intersect the leftmost interval of $S_2$. Our algorithm can be described as follows.

\begin{enumerate}
    \item \label{merge:ins_b} Insert $b$ in $S_1$. Update $I_1$ to $I'_1$. 
    \item \label{merge:merge} $S \leftarrow S_1 \cup S_2$ and $I \leftarrow I'_1 \cup I_2 $.
    \item \label{merge:del_b} Delete $b$ from $S$. Update $I$ to $I'$.
\end{enumerate}

\paragraph{Running time.} Note that this algorithm runs in time $O(k^2 \log n)$, where $n = |S|$. This is because steps~\ref{merge:ins_b} and~\ref{merge:merge} require $O(k \log n)$ due to our insertion algorithm and Lemma~\ref{lem:ors} respectively, and step~\ref{merge:del_b} requires time $O(k^2 \log n)$ using our deletion algorithm from Section~\ref{sec:intervals_alg}.

\paragraph{Correctness.} We show that the merge algorithm indeed produces a $k$-valid independent set of $S$. Since our insertion algorithm from Section~\ref{sec:intervals_alg} maintains a $k$-valid independent set, in particular it satisfies the no-containment property, we get that after step~\ref{merge:ins_b}, $b \in I'_1$. Therefore, after step~\ref{merge:merge}, $I = I'_1 \cup I_2$ is an independent set of $S$, since $b$ ensures that no overlap exists. We want to show that $I$ is also $k$-valid. Towards proving this, we make one observation. 

\begin{observation}
\label{obs:bomb}
For any interval $x \in S_1 \cup S_2$, the endpoints  $\ell(x)$ and $r(x)$ do not intersect $b$.
\end{observation}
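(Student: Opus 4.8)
The plan is to unwind the way the fake interval $b$ was constructed and observe that it was inserted precisely so as to avoid every relevant endpoint. Recall that $b$ is a tiny interval with $\ell(b)>t$, strictly contained in every interval of $S_1$ it meets, and disjoint from the leftmost interval of $S_2$. The single fact from which all of this (and the present observation) follows is: let $e$ be the smallest endpoint, over \emph{all} intervals of $S_1\cup S_2$, that is strictly larger than $t$. Such an $e$ exists as soon as $S_2\neq\emptyset$, since the leftmost interval of $S_2$ has left endpoint $>t$ (and if $S_2=\emptyset$ the merge is trivial). I would then take $b=(\ell(b),r(b))$ with $t<\ell(b)<r(b)<e$, i.e.\ place $b$ in the ``gap'' just to the right of $t$, before the first endpoint exceeding $t$.

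With this choice the observation is immediate. First I would note the dichotomy: every endpoint $p$ of an interval of $S_1\cup S_2$ is either at most $t$ or at least $e$. Indeed, every endpoint that is $>t$ is $\ge e$ by definition of $e$, and all other endpoints are $\le t$. In the first case $p\le t<\ell(b)$, so $p$ lies strictly to the left of $b$; in the second case $p\ge e>r(b)$, so $p$ lies strictly to the right of $b$. Either way $p\notin b$, which is exactly the claim. (As a consistency check, the same dichotomy re-derives the stated properties of $b$: an interval $a\in S_1$ meeting $b$ must have $r(a)>\ell(b)>t$, hence $r(a)\ge e>r(b)$ while $\ell(a)\le t<\ell(b)$, so $b\subsetneq a$; and the leftmost interval of $S_2$ has left endpoint $\ge e>r(b)$, so it lies entirely to the right of $b$, confirming disjointness.)

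The only point requiring any care is the well-definedness of $b$ in the comparison model, since one cannot literally compute a real coordinate strictly between $t$ and $e$. I would handle this by treating $b$ as a \emph{symbolic} interval whose two endpoints are fresh atoms inserted into the total order of endpoints at the position $t<\ell(b)<r(b)<e$; this is consistent because, by definition of $e$, no existing endpoint lies in the open range $(t,e)$, and it suffices because the interval data structure only ever compares endpoints. I do not expect a real obstacle here: the observation is essentially a restatement of how $b$ was constructed, and its purpose is simply to be invoked in the subsequent correctness proof of Merge, where one must know that the genuine intervals' endpoints cannot fall inside $b$ (so that inserting and later deleting $b$ does not interfere with alternating paths among the real intervals).
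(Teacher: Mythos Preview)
Your argument is correct and matches the paper's intent: the paper states Observation~\ref{obs:bomb} without proof, relying on the informal description of $b$ as a ``fake tiny interval'' with the listed properties, and your explicit construction (placing $b$ strictly between $t$ and the next endpoint $e$) is exactly the way to make that precise. Your consistency check and remark on symbolic endpoints in the comparison model are appropriate elaborations of what the paper leaves implicit.
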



We now proceed to our basic lemma, showing that the new independent set is $k$-valid. 

\begin{lemma}
\label{lem:merge_valid}
The independent set $I = I'_1 \cup I_2$ obtained in step~\ref{merge:merge} is $k$-valid
\end{lemma}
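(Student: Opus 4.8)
The plan is to analyze the three-step merge procedure and show that after each step the invariant we need is preserved, with the only genuinely new content concentrated in the merge step~\ref{merge:merge}. First I would record the setup: $b$ is a tiny fake interval placed with $\ell(b)>t$, strictly inside every interval of $I_1'$ it meets, and disjoint from the leftmost interval of $S_2$; by the no-containment property of our insertion algorithm (Theorem/Lemma on insertions), $b\in I_1'$ after step~\ref{merge:ins_b}, and $I_1'$ is a $k$-valid independent set of $S_1\cup\{b\}$. Likewise $I_2$ is $k$-valid in $S_2$. The key geometric fact, which I would isolate as a consequence of Observation~\ref{obs:bomb}, is that $b$ lies strictly to the left of every interval of $S_2$ and every interval of $S_1$ either ends before $b$ or strictly contains $b$; hence $I=I_1'\cup I_2$ is a genuine independent set of $S\cup\{b\}$ (no overlap across the cut since $b$ shields it).

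Next I would verify the two properties of $k$-validity for $I$ as an independent set of $S\cup\{b\}$. For \emph{no-containment}: suppose some $y\in (S\cup\{b\})\setminus I$ is strictly contained in some $x\in I$. If $x\in I_1'$, then since $b$ is strictly inside $x$ and $b$ separates the two sides, $y$ cannot reach into $S_2$; if additionally $\ell(y),r(y)$ both lie left of $t$ then $y\in S_1$ and this contradicts $k$-validity of $I_1'$; the only remaining possibility is that $y$ straddles $b$, but then $y$ strictly contains $b$, and one checks (using Observation~\ref{obs:bomb} that no endpoint of a real interval touches $b$) that this forces $y$ to be comparable with the real intervals of $I_1'$ in a way already excluded. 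The symmetric argument handles $x\in I_2$. For \emph{$k$-maximality}: suppose $(C,D)$ induces an alternating path of size $t'\le k$ with respect to $I$. The interval $b$ blocks any alternating path from crossing the cut — more precisely, any interval of $C\cup D$ meeting a point on both sides of $b$ would have to strictly contain $b$, and such an interval cannot appear in a smallest alternating path with respect to an independent set containing $b$ (it would give a size-$1$ improvement, or would violate Observation~\ref{obs:alt_a_no_containment}). Hence $(C,D)$ lies entirely within $S_1\cup\{b\}$ or entirely within $S_2$, contradicting the $k$-maximality of $I_1'$ or of $I_2$ respectively.

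Finally, step~\ref{merge:del_b} deletes $b$ from $S$, and our deletion algorithm (Section~\ref{sec:intervals_alg}), applied to the $k$-valid set $I$ of $S\cup\{b\}$, returns a $k$-valid independent set $I'$ of $S=S_1\cup S_2$; correctness of that algorithm was already established. This completes the argument.

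The main obstacle I anticipate is not the deletion step (which is a black box) but the case analysis in the no-containment and $k$-maximality checks for $I$ in $S\cup\{b\}$: one must carefully rule out intervals that straddle the cut point, and the clean way to do this is to exploit that $b$ was chosen strictly inside every $I_1'$-interval it meets and strictly left of all of $S_2$, so that any straddling interval strictly contains $b$ — and an interval strictly containing a member of an independent set can never be the inner interval of a no-containment violation nor a legal interval in a smallest alternating path (Observation~\ref{obs:alt_a_no_containment}). Making that dichotomy airtight, rather than the high-level structure, is where the real work lies.
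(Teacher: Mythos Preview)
Your approach is the same as the paper's --- verify no-containment and $k$-maximality for $I=I_1'\cup I_2$ directly, using that $b$ sits between the two sides --- but there is a genuine error in your no-containment case and your $k$-maximality argument is muddled compared to the paper's cleaner version.

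In the no-containment case ``$x\in I_1'$'' you write ``since $b$ is strictly inside $x$''. This is false: $b\in I_1'$ and $I_1'$ is an independent set, so $b$ is disjoint from every other element of $I_1'$. You have confused the construction of $b$ (it is strictly inside every interval of $S_1$ it meets) with the situation after the insertion algorithm has run, at which point no element of $I_1'\setminus\{b\}$ meets $b$ at all. The paper's argument is a simple case split on $(x\in I_1'\text{ or }I_2,\ y\in S_1\text{ or }S_2)$: within-side cases follow from $k$-validity of $I_1'$ and $I_2$; cross-side containments are impossible because left endpoints of $S_1$ and $S_2$ lie on opposite sides of $t$; and $b$ cannot contain an interval of $S_2$ by Observation~\ref{obs:bomb}.

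For $k$-maximality, the paper's argument is also crisper than your version via Observation~\ref{obs:alt_a_no_containment}. Any alternating path $(A,B)$ must have $b\in A$, since otherwise every interval of $A\cup B$ is disjoint from $b$ and hence (by Observation~\ref{obs:bomb}) lies entirely on one side, contradicting $k$-maximality of $I_1'$ or $I_2$. But if $b=a_i\in A$, then $b_i$ and $b_{i+1}$ both intersect $b$; since no real interval has an endpoint in $b$ (Observation~\ref{obs:bomb}), both must strictly contain $b$, hence $b_i\cap b_{i+1}\neq\emptyset$, contradicting that $B$ is independent. Your phrase ``it would give a size-$1$ improvement'' does not correspond to any step that actually works here, and you do not cleanly dispose of the case $b\notin C$.

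Finally, the lemma statement concerns only step~\ref{merge:merge}; your discussion of steps~\ref{merge:ins_b} and~\ref{merge:del_b} is scaffolding that belongs in the surrounding text, not in the proof of this lemma.
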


\begin{proof}
$k$-maximality: If there exists an alternating path, it should contain $b$. But due to Observation~\ref{obs:bomb} no endpoint of any interval intersects $b$, thus there cannot be such an alternating path.

No-Containment: No interval of $I'_1$ contains an interval of $S_1$. No interval of $I_2$ contains an interval of $S_2$. By construction, an interval of $I_2$ cannot contain an interval of $S_1$, since the left endpoints are on different sides of $t$. Similarly, an interval of $I'_1 \setminus \lrbrace{b}$ does not contain an interval of $S_2$. Finally, $b$ does not contain any interval of $S_2$ due to Observation~\ref{obs:bomb}.
\end{proof}

Now it is easy to see that the algorithm outputs a $k$-valid independent set: After step~\ref{merge:merge}, $I = I'_1 \cup I_2$ is a $k$-valid independent set of $S_1 \cup S_2 \cup \lrbrace{b}$. Thus after deletion of $b$, the new independent set $I'$ is a $k$-valid independent set of $S = S_1 \cup S_2$, since our deletion algorithm from Section~\ref{sec:intervals_alg} maintains a $k$-valid independent set.

\paragraph{2. Split:} We now proceed to the split operation. It is almost dual to the merge and the ideas used are very similar. Recall we maintain a set of intervals $S$ and a $k$-valid independent set $I$, and we want to split to $S_1$ and $S_2$ and corresponding independent sets $I_1$ and $I_2$, such that all intervals $x \in S_1$ have $\ell(x) \leq t$ and all $y \in S_2$ have $\ell(x) > t$. 

We introduce a fake tiny interval $b$ such that $\ell(b) > t$ which is strictly contained in all intervals $x$ with $\ell(x) \leq t$ and $r(x) >t$ and $r(b)$ is smaller than the leftmost left endpoint larger than $t$. The algorithm is the following

\begin{enumerate}
    \item \label{split:insert} Insert $b$ in $S$. $I'$ is the new independent set.
    \item \label{split:split} Split $S$ into $S_1 \cup \lrbrace{b}$ and $S_2$. Split $I'$ into $I'_1$ and $I_2$, where $b$ is the rightmost interval of $I'_1$.
    \item \label{split:delete} Delete $b$ from $S_1 \cup \lrbrace{b}$ and update $I'_1$ to $I_1$ using our deletion algorithm.  
\end{enumerate}
 
\paragraph{Running time.} Like merge, this algorithm runs in worst-case $O(k^2 \log n)$ time, due to the bounds from Lemma~\ref{lem:ors} and the running time of our insertion and deletion algorithms.
 
\paragraph{Correctness.} The proof of correctness is  similar to that of the merge operation. After step~\ref{split:insert}, $I'$ is a $k$-valid independent set of $S\cup\lrbrace{b}$. It remains to realize that after step~\ref{split:split}, $I'_1$ and $I_2$ are $k$-valid independent sets of $S_1 \cup \lrbrace{b}$ and $S_2$ respectively. 
Then, the result for $I_2$ is immediate and for $I_1$ it comes from correctness of our deletion algorithm of Section~\ref{sec:intervals_alg}.

\paragraph{3. Clip:} Given a set $S$ of intervals, a $k$-valid independent set $I$ and $t$ be a point to the left of the rightmost left endpoint of all intervals of $S$, we shrink all intervals $x$ with $r(x) > t$, such that $r(x) = t$. 

We show that the change in the independent set $I$ can be supported in time $O(\log n)$. Let $\tau$ be the interval of $S$ the rightmost left endpoint. Note that $\ell(\tau) < t$.

We begin with an observation, which is essentially a corollary of Lemma~\ref{lem:higher_exchange} from Section~\ref{sec:intervals_cor}. 

\begin{observation}
\label{obs:valid_leftmost}
Let $I$ be a $k$-valid independent set maintained by the IQDS structure. Then, using the IQDS structure, we can modify $I$ to contain the leftmost interval $\tau$ of $S$ and remain $k$-valid, in time $O(k \log n)$. 
\end{observation}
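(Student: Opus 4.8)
The plan is to reduce the operation to a single application of the insertion machinery of Section~\ref{sec:intervals_alg}, exploiting the fact that the leftmost interval $\tau$ of $S$ (the one with rightmost left endpoint, hence in the clipped instance the interval sitting furthest to the right) is special: nothing can be strictly contained in it except intervals $y$ with $\ell(y)=\ell(\tau)$, and $\tau$ itself is not contained in anything. First I would check whether $\tau\in I$; if so there is nothing to do. Otherwise, consider the interval $a_r\in I$ (if any) whose left endpoint is the smallest one exceeding $\ell(\tau)$ — by $k$-maximality of $I$, there must be an interval of $I$ intersecting $\tau$, and since $\tau$ has the rightmost left endpoint among all intervals, any such interval has its left endpoint inside $\tau$, so $\ell(\tau)$ is contained in some $a\in I$. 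Call $a$ the interval of $I$ containing the point $\ell(\tau)$; after clipping, $a$ has had its right endpoint pulled to $t$ (or was already short), but in any case $\tau$ is a strict subset of $a$ — it intersects $a$ from the right and, because $\tau$ has the rightmost left endpoint in $S$ while $a$ extends left of $\ell(\tau)$, no interval of $I\setminus\{a\}$ can touch $\tau$.

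The key step is then to mimic Case~\ref{case:strict_contain} of the insertion algorithm with $x:=\tau$ and the enclosing interval being $a$: replace $a$ by $\tau$ in $I$, then run $\jop{Find-Alternating-Path-Left}(I,k,(\ell(a),\ell(\tau)))$ and, if it returns a nonempty proper path $L=(L_A,L_B)$, perform that exchange. There is \emph{no} right-side search needed, because $\tau$ is the rightmost interval of the (clipped) instance: nothing lies to its right, so the ``$R$'' path is vacuously empty. By the correctness analysis of the insertion algorithm (the three-subcase argument: $L_B$ being a left substitute of $L_A\cup\{a\}$ triggers Lemma~\ref{lem:j-to-j}, combined with adding $\tau$ via Observation~\ref{obs:insert_one}; or both empty, in which case $k$-maximality follows as there), the resulting set is $k$-valid and contains $\tau$. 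The running time is one call to the path-search procedure, $O(k\log n)$, plus $O(\log n)$ for the two IQDS updates (removing $a$, inserting $\tau$, and the at-most-$k$ swaps of the path exchange), which is $O(k\log n)$ as claimed.

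The main obstacle I anticipate is verifying cleanly that, after clipping, $\tau$ really is strictly contained in a \emph{unique} $a\in I$ and that no other interval of $I$ interferes — this needs the hypothesis that $t$ lies to the left of the rightmost left endpoint (so $\tau$ genuinely has a clipped-away left portion sticking out, i.e. $\ell(\tau) < t$ and $\tau$'s interior is exactly $[\ell(\tau),t]$ after the clip), together with the fact that every interval of $S$ other than $\tau$ has its left endpoint $\le \ell(\tau)$, so anything in $I$ that meets $\tau$ must contain the point $\ell(\tau)$, and two intervals of the independent set $I$ cannot both contain that point. A secondary subtlety is that the clip may have already changed $I$'s intervals (those $a\in I$ with $r(a)>t$ are now shorter) — but shrinking intervals never creates new intersections and never creates new containments of an interval of $S\setminus I$ inside an interval of $I$ (it can only destroy them), so the only property that can break is $k$-maximality via the appearance of $\tau$ as a now-addable or now-exchangeable interval, which is exactly what the insertion-style repair handles; once this is observed the rest is a direct citation of the already-proved correctness lemmas.
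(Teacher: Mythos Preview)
Your procedure is correct and matches the paper's: let $x$ be the rightmost interval of $I$ (which, as you correctly argue, is the unique interval of $I$ containing $\ell(\tau)$), replace $x$ by $\tau$, and search for a proper alternating path to the left. The issue is in the justification, not the algorithm.

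The gap is your claim that $\tau$ is strictly contained in $a$. The Observation is applied \emph{before} clipping (the paper first forces $\tau\in I$, and only then shrinks right endpoints), so at the moment you invoke it, $I$ is $k$-valid with respect to the original intervals. In that state nothing prevents $r(\tau)>r(a)$: for instance $a=[3,10]$, $\tau=[9,12]$. Thus you are not in Case~\ref{case:strict_contain} of the insertion algorithm, and the ``three-subcase'' correctness argument you cite does not apply verbatim. Your attempt to rescue this by reasoning about the post-clip state is circular: after clipping $I$ need not be $k$-valid, which is the hypothesis the Observation requires.

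The paper's proof sidesteps containment entirely. It observes that since $I$ was $k$-maximal and $\tau\in S$, any alternating path found to the left must have size exactly $k$ (a shorter one, together with $x$ and $\tau$, would give a $\le k$ alternating path with respect to $I$, contradicting $k$-maximality). Then: if a path $(A,B)$ is found, the total change $(A\cup\{x\},B\cup\{\tau\})$ is a proper alternating path of size $k+1$ with respect to the original $I$, and Lemma~\ref{lem:higher_exchange} applies directly; if no path is found, $\{\tau\}$ is a right $1$-to-$1$ substitute of $\{x\}$ and Lemma~\ref{lem:j-to-j} applies. Neither branch needs $\tau\subset a$, only that $x$ is the rightmost interval of $I$ and $\ell(x)\le\ell(\tau)$, both of which you already established.
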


\begin{proof}
Let $x$ be the righthmost interval of $I$. If $x=\tau$ then $\tau \in I$ and no change is needed. We focus thus on the case $x \neq \tau$. 

Set $I' \leftarrow (I \setminus \lrbrace{x}) \cup \tau $. This might create an alternating path of size exactly $k$ to the left of $\tau$. Search for such alternating path $(A,B)$ and if exists, do the swap, i.e, set $I'' \leftarrow (I' \setminus A) \cup B$. Clearly the runtime using IQDS is $O(k \log n)$

It is easy to see that the new independent set $I''$ is $k$-valid: If no alternating path found, then the change is a right 1-to-1 substitution, thus by Lemma~\ref{lem:j-to-j}, $I''$ is $k$-valid. If an alternating path $(A,B)$ was found, then the overall change corresponds to an alternating path with respect to $I$, of size $k+1$: the alternating path is $(A \cup \lrbrace{x},B \cup \lrbrace{\tau})$. By Lemma~\ref{lem:higher_exchange}, this exchange produces a $k$-valid independent set; thus $I''$ is $k$-valid. \end{proof}

Using Observation~\ref{obs:valid_leftmost}, we can show that the operation CLIP($x$) maintains a $k$-valid independent set. We first make sure that $\tau \in I$; if not we modify $I$ using the procedure described above (in time $O(k \log n)$). 

Then it is easy to see that even after shrinking the intervals, $I$ remains a $k$-valid independent set. The no-containment property holds trivially, since no interval has a larger left endpoint. $k$-maximality is also easy: since no interval has its left endpoint inside $\tau$, there cannot be any alternating path involving $\tau$. Alternating paths without $\tau$ cannot exist, since then they should have existed before, contradicting that $I$ is $k$-valid.

\paragraph{4. Extend:} We conclude by showing that given a $k$-valid independent set of intervals $I$ of $S$, if an interval $y \in S \setminus I$ gets replaced by a strict superset $y'$, then $I$ remains $k$-valid.

\begin{lemma}
\label{lem:insert_superset}
Let $I$ be a $k$-valid independent set of a set $S$ of intervals. Then, if an interval $y \in S \setminus I$ gets replaced by $y'$ such that $y'$ strictly contains $y$, then $I$ remains $k$-valid.
\end{lemma}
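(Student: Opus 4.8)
The plan is to argue that neither of the two defining properties of $k$-validity — no-containment and $k$-maximality — can be broken by enlarging an interval $y\in S\setminus I$ to a strict superset $y'$. I would first fix notation: let $S' := (S\setminus\{y\})\cup\{y'\}$ be the modified set, and note that $I$ is still a subset of $S'\setminus I$ after the change (since $y\notin I$), and that $I$ is still an independent set of $S'$ because enlarging an interval \emph{outside} $I$ cannot create a conflict among the intervals of $I$. So the only thing at stake is local optimality.

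For \emph{no-containment}, suppose for contradiction that after the change some interval of $S'\setminus I$ is strictly contained in an interval $a\in I$. Since before the change no interval of $S$ was strictly contained in an interval of $I$, the only candidate for this violation is $y'$ itself: we would need $y'\subsetneq a$ for some $a\in I$. But $y\subsetneq y'$, so $y\subsetneq a$ as well, contradicting that $I$ had the no-containment property before the change. Hence no-containment is preserved.

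For \emph{$k$-maximality}, suppose for contradiction that $(C,D)$ with $C\subseteq I$, $|C|=t'$, $D\subseteq S'\setminus I$, $|D|=t'+1$, $t'\le k$, induces an alternating path with respect to $I$ in $S'$ that did not exist in $S$. Since the only interval that changed is $y\mapsto y'$, this new path must use $y'$, i.e.\ $y'\in D$; write $D = (D\setminus\{y'\})\cup\{y'\}$ and consider replacing $y'$ back by $y$. The point is that $y$ is a subset of $y'$, so $y$ intersects a (possibly strict) subset of the intervals that $y'$ intersects; the danger is that $y$ might intersect \emph{fewer} intervals of $I$ than $y'$ does and thus fail to be a valid replacement in the alternating path. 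I would handle this exactly as the no-containment argument forces: because $I$ already satisfied no-containment, $y'$ cannot be strictly contained in any interval of $I$, so $y'$ has at least one endpoint not covered by any single interval of $I$ \emph{in the appropriate way}; more simply, I would invoke Lemma~\ref{lem:chan}'s companion structural facts and argue that the existence of \emph{any} alternating path of size $\le k$ through $y'$ implies one through $y$ (or through some other interval), by the same minimality/choice arguments used in Lemma~\ref{lem:pick} and Lemma~\ref{lem:higher_exchange}: replace $y'$ in the path by an interval with the leftmost right endpoint among intervals with left endpoint in the relevant range, which is unaffected by the enlargement of $y$ since $\ell(y)=\ell(y')$ or $r(y)=r(y')$ cannot both be strict — and this replacement yields an alternating path already present in $S$, contradicting $k$-maximality of $I$ in $S$.

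The main obstacle I anticipate is the subtlety in the last step: an interval genuinely can become ``more useful'' for an alternating path after being enlarged, so one cannot naively say ``shrinking $y'$ back to $y$ keeps the path valid.'' The clean resolution is to observe that $y'\notin I$ means $y'$ participates in $(C,D)$ only as a member of $D$, and that by the Report-Leftmost/Report-Rightmost characterization of proper alternating paths (Lemma~\ref{lem:pick}), whenever a $\le k$-alternating path exists there is one whose $D$-intervals are the greedily leftmost-right-endpoint choices; such a greedy choice over a fixed range cannot have been $y'$ rather than some pre-existing interval of $S$ unless $y$ already satisfied the same range constraint (since the enlargement only extends $y'$ outward, it cannot make $y'$'s right endpoint smaller or its left endpoint larger). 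Thus the greedy path lives entirely in $S$, contradicting $k$-maximality of $I$. Putting the two parts together gives that $(S'\setminus\{y'\})\cup\{y'\}$-version of $I$ — i.e.\ $I$ itself — is $k$-valid for $S'$, completing the proof.
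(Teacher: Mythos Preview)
Your no-containment argument is correct and matches the paper's. The gap is in your $k$-maximality argument.

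You correctly identify that any new alternating path $(C,D)$ must use $y'$ as one of the $D$-intervals, say $y'=d_j$. But your proposed resolution --- replacing $(C,D)$ by a proper path via the leftmost-right-endpoint rule and arguing that $y'$ could not be the greedy choice --- does not go through. Concretely, enlarging $y$ to $y'$ can move the \emph{left} endpoint strictly leftward, so $\ell(y')$ may fall in the query range $[r(d_{j-1}),r(c_{j-1})]$ while $\ell(y)$ does not. In that situation $y'$ can legitimately be the leftmost-right-endpoint choice even though $r(y)\le r(y')$, because $y$ is simply not a candidate. Your sentence ``$\ell(y)=\ell(y')$ or $r(y)=r(y')$ cannot both be strict'' is also incorrect: strict containment $y\subsetneq y'$ allows both endpoints to move strictly.

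The paper avoids this altogether with a direct three-case split that you should adopt. Let $c_{j-1},c_j\in C\subseteq I$ be the (at most two) neighbours of $y'$ in the alternating path, and look at how the \emph{original} interval $y$ intersects them:
\begin{itemize}
  \item If $y$ intersects neither, then $I\cup\{y\}$ is independent, contradicting maximality of $I$ in $S$.
  \item If $y$ intersects exactly one, say $c_{j-1}$, then $d_1,c_1,\ldots,c_{j-1},y$ is an alternating path of length $\le t'\le k$ already present in $S$, contradicting $k$-maximality of $I$ in $S$.
  \item If $y$ intersects both, then $(C,(D\setminus\{y'\})\cup\{y\})$ is an alternating path in $S$ of the same size, again a contradiction.
\end{itemize}
The corner cases $j=1$ and $j=t'+1$ only see the first two cases. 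This is entirely elementary and does not require invoking proper paths or Lemma~\ref{lem:pick}.
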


\begin{proof}

The no-containment is clearly preserved. $I$ satisfies this property, and the new interval contains the previous. Since $y$ was not contained in any interval $x \in I$, then $y'$ is not contained either.

The $k$-maximality property can be proven by contradiction. Suppose that after replacing $y$ by $y'$, there exists an alternating path $(A,B)$ of size $t \leq k$. Let $A = \lrbrace{a_1,\dotsc,a_t}$ and $B = \lrbrace{b_1,\dotsc,b_{t+1}}$. There exists some $1 \leq i \leq t+1$ such that $b_i = y'$. Let us first focus in the case $2 \leq i \leq t$. Interval $b_i$ should intersect both $a_{i-1}$ and $a_i$. Since $b_i = y'$ contains $y$, we examine the intersections between $y$ and $a_{i-1},a_i$.

\noindent \textbf{Case 1: $y$ does not intersect any of $a_{i-1},a_i$.} Then, $y$ could be added to $I$ and produce an independent set, thus $I$ was not maximal with respect to $S$, contradiction.

\noindent \textbf{Case 2: $y$ intersects only one of $a_{i-1}$ or $a_i$}. We focus on the case intersecting $a_{i-1}$ and the other is symmetric. We have that $b_1,a_1,\dotsc,a_{i-1},y$ is an alternating path of size $i-1 \leq t \leq k $, contradiction.

\noindent \textbf{Case 3: $y$ intersects both $a_{i-1}$ and $a_i$.} Then, $(A,(B \setminus b_i) \cup y )$ is an alternating path of $S$ with respect to $I$ of size $t \leq k$, contradiction, since $I$ is $k$-maximal.

It remains to consider the corner case that $i=1$ or $i=t+1$. There, $b_i = y'$ intersects only one interval ($a_1$ or $a_t$ respectively). Thus the only cases appearing are the cases (1) and (2) above, and we show the contradiction using the same arguments. 
\end{proof}

\clearpage



\pagebreak

\section{Static Squares: The Quadtree Approach}\label{s:quadtree}

In this section we turn our attention to squares. We present a $O(1)$-approximate solution for the (static) independent set problem where all input objects are squares. Although such results (or even PTAS) are already known, we present our approach for the static case, while developing the structural observations that will become the invariants when we address dynamization in Section~\ref{sec:squares_dynamic}.

For this section, the input $S$ is a set of $n$ axis-aligned squares in the plane. Let $\Iopt{S}$ denote a maximum independent set of $S$. The maximum size of an independent set of $S$
will be denoted by $\OPT{S}=|\Iopt{S}|$.
Given a set $\Sinput$, our goal is to compute a set $\Iapprox \subseteq \Sinput$ which is an independent set of squares and where $E[|\Iapprox|] \geq c\cdot \OPT{\Sinput}$ for some absolute constant $c>0$. 
%

%

We will use as a black box a solution for the 1-D problem of given a set of intervals, to compute a $c$-approximate maximum independent set of these intervals. We will show that using such a solution, we can obtain a $O(c)$-approximation for squares.

\paragraphh{Random quadtrees.}
We assume all squares in $\Sinput$ are inside the unit square $[0,1]^2$.
Let $\Qinf$ be the infinite quadtree where the root node is a square centered on a random point in the the unit square and with a random side length in $[1,2]$. 
Everything that follows is implicitly parameterized by this choice of random quadtree $\Qinf$ and $\Sinput$. We use $\node$, possibly subscripted, to denote a node of the quadtree and $\Square{\node}$ to denote $\node$'s defining square. 
We will use $s$, possibly subscripted to denote a square in $\Sinput$.

\paragraphh{Centered squares.}
Given a square $s$, let $\Node{s}$ be the smallest quadtree node of $\Qinf$ that completely contains $s$, we say that $s$ and $\Node{s}$ are associated with each other\footnote{We note that computing the coordinates of $\Node{s}$ from $s$ is the only operation on the input that we need beyond performing comparisons. This operation can be implemented with a binary logarithm, a floor, and a binary exponentiation.}. 
A square $s$ is said to be $\emph{centered}$ if $s$ contains the center point of its associated node, $\Square{\Node{s}}$. See Figure~\ref{f:centered}.
We use $\Centered$ to denote the set subset of $\Sinput$ where the squares are centered.

\paragraph{Outline.} We will present the approach in a similar way as explained at a high-level in Section~\ref{sec:outline_squares}. 

 \begin{enumerate}
     \item  First, we will show that by loosing a factor of $16$, we can restrict our attention to centered squares (Lemma~\ref{l:bds}).
     
    \item  In Section~\ref{sec:paths_suffice} we focus on the subtree of $\Qinf$ including nodes associated with centered squares and their ancestors, denoted by $\Qtree$ and show that given a linearly approximate solution for paths of $\Qtree$, we can get a $O(1)$-approximate solution for $\Qtree$. (Lemma~\ref{l:combine}). 
    
     \item  In Section~\ref{sec:paths_8c} we show how to decompose each path into four monotone subpaths, and that by loosing a factor of 4, it suffices to solve the problem in each of the monotone subpaths and use only the largest of the four independent sets.
     
     \item Last, we show that an approximate independent set of centered squares in monotone subpaths reduces to the problem of the maximum independent set of intervals.
     (Lemma~\ref{l:indsetsize}). 
 \end{enumerate}

\subsection{Centered Squares}

We begin with showing that by losing a $O(1)$ factor, we can focus on centered squares and search for an approximate maximum independent set in $\Centered$ rather than $\Sinput$ itself.

\begin{lemma} \label{l:bds}
The maximum size of an independent set of the centered squares $\Centered$ is expected to be at least $\frac{1}{16}$ that of $\Sinput$: $E[\OPT{\Centered}] \geq \frac{1}{16}\OPT{\Sinput}$.
\end{lemma}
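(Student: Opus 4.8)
\textbf{Proof plan for Lemma~\ref{l:bds}.}

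The plan is to fix an optimal independent set $\Iopt{\Sinput}$ of $\Sinput$ and to show that, over the random choice of the quadtree $\Qinf$, each square $s\in\Iopt{\Sinput}$ is centered with probability at least $\tfrac14$ in each of the two coordinate directions, hence at least $\tfrac14$ overall after a union-type argument --- but more carefully, I expect the right constant to come from handling the two coordinates independently, giving $\tfrac12\cdot\tfrac12=\tfrac14$ per square, and then a further factor to account for the fact that the centered squares of $\Iopt{\Sinput}$ still form an independent set. Concretely, let $X=\Iopt{\Sinput}\cap\Centered$ be the subset of the optimal solution that happens to be centered with respect to the random quadtree. Since $X\subseteq\Iopt{\Sinput}$, $X$ is an independent set of centered squares, so $\OPT{\Centered}\geq |X|$, and therefore $E[\OPT{\Centered}]\geq E[|X|]=\sum_{s\in\Iopt{\Sinput}}\Pr[s\text{ is centered}]$. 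It then suffices to show $\Pr[s\text{ is centered}]\geq\tfrac{1}{16}$ for every fixed square $s$ (a weaker per-square bound than I first guessed, but it is what the statement asks and leaves room for slack).

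The key step is the per-square probability estimate. Fix $s$ with side length $\ell_s$, and let $j$ be the unique integer with $2^{-j-1}<\ell_s\le 2^{-j}$ (measuring relative to the random root side length in $[1,2]$; the randomized scaling is exactly what makes $j$ itself partly random, which only helps). At ``level'' $j$ the quadtree cells are translates of a grid of cell side $\approx 2^{-j}\ge \ell_s$, randomly shifted. I would first argue that $\Node{s}$ is a level-$j$ cell \emph{provided} $s$ is not cut by any level-$j$ gridline --- i.e. $s$ lies entirely within one level-$j$ cell; because the shift is uniform and $\ell_s$ is at most the cell width, the probability $s$ avoids the vertical gridlines is $1-\ell_s/(\text{cell width})\ge 1-\ell_s 2^{j}\ge 0$, and similarly for horizontal. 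Then, conditioned on $s$ lying inside a single level-$j$ cell $C$, I need $s$ to contain the center of $C$; since $\ell_s>\tfrac12\cdot(\text{cell width})$ and the position of $s$ within $C$ is again governed by the uniform shift, the center of $C$ falls inside $s$ with probability proportional to $\ell_s/(\text{cell width})$ in each coordinate. Multiplying the ``not cut'' probability by the ``contains center'' probability and taking the worst case over $\ell_s$ in its dyadic range, one gets a constant lower bound; the randomization of the root length $[1,2]$ (equivalently, randomizing $j\mapsto j$ by a uniform $\log$-shift) is what guarantees this bound does not degenerate at the endpoints of the dyadic interval. Carefully optimized this yields $\ge\tfrac14$ per coordinate hence $\ge\tfrac1{16}$ overall — comfortably matching the claim with the stated constant $16$.

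The main obstacle, and where I would spend the most care, is making the two events --- ``$\Node{s}$ is exactly the level-$j$ cell containing $s$'' and ``that cell's center lies in $s$'' --- cleanly independent (or at least jointly lower-boundable) across the two coordinate axes, since the random shift and the random scale are shared by both coordinates; the scale is shared but the two shift coordinates are independent, so the factorization into $\tfrac12\times\tfrac12$ should go through after conditioning on the scale, and then one integrates out the scale. A secondary subtlety is the boundary behavior when $\ell_s$ is close to a power of two: here $\Node{s}$ could be at level $j$ or $j-1$, and one must check the ``centered'' event is still favorable in whichever level occurs; the uniform randomization of the logarithm of the root side length smooths this out, which is presumably exactly why the authors chose a random side length in $[1,2]$ rather than a fixed one. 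I would present the argument by first doing the one-dimensional computation in full (a uniform-shift-of-a-grid calculation) and then invoking coordinate independence to square the resulting probability.
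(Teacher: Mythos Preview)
Your proposal is correct and follows the same approach as the paper: reduce via linearity of expectation over a fixed maximum independent set $\Iopt{\Sinput}$ to the per-square bound $\Pr[s\text{ centered}]\geq\tfrac{1}{16}$, then establish that bound using the random shift and scale of the quadtree. The paper slices the per-square event slightly differently---it first conditions on the lower-left corner of $s$ landing in the lower-left quadrant of the smallest enclosing cell (probability $\tfrac14$) and then on $s$ fitting inside that cell in each coordinate (probability $\tfrac12$ each)---but this is exactly the same event you are bounding (``not cut by the relevant grid and contains the cell center''), and your remarks about conditioning on the shared scale before using independence of the two shift coordinates are on point.
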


\begin{proof}
Given a square $s$ of size $k$, let $\ell$ be the size of the smallest quadtree cell that is larger then $k$. Observe that $\ell$ is uniformly distributed from $k$ to $2k$. Let $\node$ be the node of the quadtree of size $\ell$ that contains the lower-left corner of $s$.  
We know $s$ is centered if its lower-left corner lies in the the lower-left quadrant of $\node$ and if $s$ lies entirely in $\node$. The first, that the lower-left corner of $s$ is in the lower left quadrant of $\node$, happens with probability $\frac{1}{4}$. Given this, we need to know if the $x$ extent of $s$ is in the node, this can be done by checking of where the square begins relative to the left of the node plus its size is smaller than the width of the node; that is, whether its $x$ coordinate, which uniformly in $[0,\frac{l}{2}]$ relative to the left of $\node$ plus its $x$-extent, which is $k$, is less than the width of the square $\ell$, which is uniform in $[k,2k]$. 
This happens with probability $\frac{1}{2}$, and independently with probability $\frac{1}{2}$ for the $y$-extent as well. Multiplying, this gives a probability of at least $\frac{1}{16}$ that $s$ in centered in $\node$.

For any $s \in \Sinput$, let $i(s)$ be 1 if $s \in \Centered$, otherwise, $i(s)=0$; 
from the previous paragraph $E[i(s)]\geq \frac{1}{16}$. 
Let $I$ be $\Iopt{\Sinput} \cap \Centered$, those squares from a maximum independent set that are centered. We know $E[|I|] =\sum_{s\in  \Iopt{\Sinput}}i(s)$, and thus by linearity of expectation $E[|I|] \geq \frac{\OPT{\Sinput}}{16}$.
Since $I$ is a subset of $\Iopt{\Sinput}$, it is an independent set and thus $\OPT{\Centered}\geq |I|$. Combining these gives the lemma.
\end{proof}

\subsection{From Quadtrees to Paths}
\label{sec:paths_suffice}
We now show that the essential hardness on obtaining an independent set on the quadtree relies on getting independent sets on special types of paths of the tree. 

\paragraphh{Marked nodes and the finite quadtree.}
The quadtree nodes in $\cup_{s\in \Centered} \Node{s}$ are said to be \emph{marked} and are denoted as $\Marked$.
Let $\Squares{\node}$ be the inverse of the $\Node{s}$ function, that is, given a node of the quadtree $\node$, it returns the set of squares $s$ such that $\Node{s}=\node$, these are the squares associated with a node. 
We define the quadtree $\Qtree$ to be the subtree of the infinite quadtree $\Qinf$ containing the marked nodes $\Marked$ and their ancestors. %
Each node of $\Qtree$ is a leaf, an internal node (a node with more than one child), or a monochild node (a node with one child); with the exception that the root is considered to be an internal node if it has one child. See figure~\ref{f:thequad}.
We use the word \emph{path} to refer to a maximal set of connected monochild nodes in $\Qtree$. 
By definition, the node above the top node and below the bottom node in a path must exist in $\Qtree$ and will not be monochild. 
We use these nodes to denote a path: $\Path(\node_{\text{top}},\node_{\text{bottom}})$ refers to the path strictly between $\node_{\text{top}}$ and $\node_{\text{bottom}}$, where $\node_{\text{top}}$ is an ancestor of $\node_{\text{bottom}}$, neither is a monochild node, and there are only monochild nodes between
$\node_{\text{top}}$ and
$\node_{\text{bottom}}$.
Given a path $\Path=\Path(\node_{\text{top}},\node_{\text{bottom}})$, let $\Squares{\Path}$ refer to the squares associated with nodes of the path,
$ \Squares{\Path} \coloneqq \cup_{n \in \Path} \Squares{\node}$,
and $\Pathtop,\Pathbottom$ refer to the nodes $\node_{\text{top}},\node_{\text{bottom}}$ that bound $\Path$.

Let $\Qleaves$ refer to the set of leaves of $\Qtree$, $\Qinternal$ refer to the internal nodes of $\Qtree$, and $\Qpaths$ refer to the set of monochild paths of $\Qtree$. 
The nodes in these sets partition the nodes of $\Qtree$.
Observe that the size of $\Qtree$, measured in nodes, cannot be bounded as a function of $|\Centered|$ as we do not have any bound on the aspect ratio of the squares stored. However, the number of leaves, $|\Qleaves|$, internal nodes, $|\Qinternal|$, and paths, $|\Qpaths|$ are all linear in $|\Centered|$.

\paragraphh{Protected Independent Sets.} Given a path $\Path$
we say a set $\Ippath{\Path} \subseteq \Squares{\Path}$ is a protected independent set with respect to 
$\Path$ if it is an independent set, and if no square in $\Ippath{\Path}$ intersects the square $\Square{(\Pathbottom)}$. 
This definition implies that no square $\Ippath{\Path}$ can intersect any squares associated with any nodes in $\Qtree$ not on this path; that is, $\Ippath{\Path}$ is disjoint from all squares in $\Centered \setminus \Squares{\Path}$. 
It is this property that makes protected independent sets valuable.
We show that to obtain an approximate independent sets, it is sufficient to use protected independent sets, along with one square associated with each leaf:

\begin{lemma} \label{l:combine}
Let $\Iapprox$ be the subset of $\Centered$ which is the union of
\begin{itemize}
    \item An arbitrary square in $\Squares{\node}$ for each leaf $\node \in \Qleaves$
    \item For each path $\Path \in \Qpaths$, a protected independent set $\Ippath{\Path}$. We require that 
    $$\sum_{\mathclap{\Path \in \Qpaths}} | \Ippath{\Path}| \geq c_1  \sum_{\mathclap{\Path \in \Qpaths}} \OPT{\Squares{\Path}} -  c_2 | \Qpaths | ,$$ 
    for some absolute positive constants $c_1> 0$, $c_2 \geq 0$. That is, in aggregate, all of these protected independent sets must be within a linear factor of the maximum protected independent sets on all paths. 
    \item No squares in $\Squares{\node}$, for each internal node $\node \in \Qinternal$
\end{itemize}
Observe that $|\Iapprox|=|\Qleaves| +\sum_{{\Path \in \Qpaths}} |\Ippath{\Path}|$.
The set $\Iapprox$ is an independent set of squares and $|\Iapprox|\geq \frac{c_1}{2c_1+c_2+1} \OPT{\Centered}$.
\end{lemma}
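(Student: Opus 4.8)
First I would verify that $\Iapprox$ is an independent set of squares. The three ingredients — one square from each leaf, the protected independent sets on the paths, and no squares from internal nodes — live in regions that are pairwise disjoint. Concretely, a square associated with a leaf node $\node$ lies inside $\Square{\node}$, and a square in a protected independent set $\Ippath{\Path}$ lies inside $\Square{(\Pathtop)}$ but avoids $\Square{(\Pathbottom)}$, hence avoids everything hanging below $\Pathbottom$. For any two distinct leaves, or a leaf and a path, or two paths, the corresponding "outermost minus innermost" regions are disjoint because the nodes are in different subtrees or nested with the inner one excluded; this is exactly the observation illustrated in Figure~\ref{f:thequad}(c). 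Within a single path, $\Ippath{\Path}$ is an independent set by hypothesis, and within a single leaf we only take one square. So no two squares of $\Iapprox$ intersect. I would also note $|\Iapprox| = |\Qleaves| + \sum_{\Path \in \Qpaths} |\Ippath{\Path}|$ since these unions are over disjoint index sets and each leaf contributes exactly one square.

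Next, the approximation bound. The plan is to charge $\OPT{\Centered}$ against the three node-types of $\Qtree$. Fix a maximum independent set $\Iopt{\Centered}$ of the centered squares, of size $\OPT{\Centered}$. Every square $s \in \Iopt{\Centered}$ is associated with a unique node $\Node{s} \in \Marked$, which is a leaf, an internal node, or lies on some path $\Path$. This partitions $\Iopt{\Centered}$ into three groups with sizes $L$ (squares at leaves), $N$ (squares at internal nodes), and $\sum_{\Path} |\Iopt{\Centered} \cap \Squares{\Path}|$ (squares on paths), so $\OPT{\Centered} = L + N + \sum_{\Path \in \Qpaths} |\Iopt{\Centered} \cap \Squares{\Path}|$.

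Now I bound each group. For the leaf group, $L \leq |\Qleaves|$, since $\Iopt{\Centered}$ is an independent set and any node contains at most one square of an independent set (two squares at the same node both contain the node's center, hence intersect) — so at most one per leaf. For the path group, the restriction of $\Iopt{\Centered}$ to $\Squares{\Path}$ is an independent set of squares on $\Path$, hence $|\Iopt{\Centered} \cap \Squares{\Path}| \leq \OPT{\Squares{\Path}}$, giving $\sum_{\Path} |\Iopt{\Centered} \cap \Squares{\Path}| \leq \sum_{\Path} \OPT{\Squares{\Path}}$. The main obstacle is the internal-node group $N$: these squares appear nowhere in $\Iapprox$, so I must charge them elsewhere. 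The key structural fact is that every internal node has at least two children in $\Qtree$, so a charging/counting argument on the tree $\Qtree$ gives $|\Qinternal| \leq |\Qleaves| - 1 < |\Qleaves|$ (a rooted tree has fewer internal branching nodes than leaves); combined with at most one independent square per internal node, $N \leq |\Qinternal| < |\Qleaves|$. Also $|\Qpaths| \leq 2|\Qleaves|$ or similar — more simply, each path is followed below by a leaf or internal node, and I can bound $|\Qpaths| \leq |\Qleaves| + |\Qinternal| \leq 2|\Qleaves| - 1$; it will suffice to have $|\Qpaths|$ and $|\Qinternal|$ both $O(|\Qleaves|)$, and in fact I expect the cleanest route is to absorb everything into $|\Qleaves|$.

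Finally I assemble the inequality. We have
\[
\OPT{\Centered} = L + N + \sum_{\Path \in \Qpaths} |\Iopt{\Centered} \cap \Squares{\Path}| \leq |\Qleaves| + |\Qinternal| + \sum_{\Path \in \Qpaths} \OPT{\Squares{\Path}}.
\]
Using the hypothesis $\sum_{\Path} |\Ippath{\Path}| \geq c_1 \sum_{\Path} \OPT{\Squares{\Path}} - c_2|\Qpaths|$, i.e. $\sum_{\Path} \OPT{\Squares{\Path}} \leq \frac{1}{c_1}\big(\sum_{\Path}|\Ippath{\Path}| + c_2|\Qpaths|\big)$, and $|\Iapprox| = |\Qleaves| + \sum_{\Path}|\Ippath{\Path}|$, I substitute and bound $|\Qinternal|$ and $|\Qpaths|$ by constant multiples of $|\Qleaves| \leq |\Iapprox|$ and $\sum_{\Path}|\Ippath{\Path}| \leq |\Iapprox|$. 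This yields $\OPT{\Centered} \leq \frac{2c_1 + c_2 + 1}{c_1} |\Iapprox|$ (after collecting the constants exactly: the $|\Qleaves|$ from leaves, the $|\Qinternal| \le |\Qleaves|$ from internal nodes, the $\frac{1}{c_1}\sum|\Ippath{\Path}|$ term, and the $\frac{c_2}{c_1}|\Qpaths|$ term with $|\Qpaths|$ bounded by $|\Qleaves|$-type quantities), which rearranges to the claimed $|\Iapprox| \geq \frac{c_1}{2c_1+c_2+1}\OPT{\Centered}$. The step needing the most care is getting the tree-combinatorics constants ($|\Qinternal|$, $|\Qpaths|$ versus $|\Qleaves|$) to line up so the final constant is exactly $\frac{c_1}{2c_1+c_2+1}$ rather than merely $O(c_1/(c_1+c_2))$.
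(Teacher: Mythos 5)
Your decomposition is the same as the paper's: partition a maximum independent set of $\Centered$ by whether each square's node is a leaf, an internal node, or on a path, bound the leaf and internal contributions by $|\Qleaves|$ and $|\Qinternal|$ respectively (at most one square of an independent set per node, since all squares in $\Squares{\node}$ contain $\node$'s center), and use the hypothesis to handle the path contribution. Your independence argument is also the paper's case analysis.

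The step you yourself flag as needing the most care is a genuine gap, and it is also the step where the paper's own proof is imprecise. The paper closes by asserting that there are ``no more paths than leaves,'' i.e.\ $|\Qpaths| \le |\Qleaves|$, and this is precisely what yields the stated constant $c_1/(2c_1+c_2+1)$. That claim is false in general. For a counterexample, take a $\Qtree$ whose root $r$ has two monochild children $m_1, m_2$; $m_2$'s child is a leaf $\ell_0$; $m_1$'s child is an internal node $i$ with two monochild children $m_3, m_4$, whose children are leaves $\ell_1, \ell_2$. Then $|\Qleaves|=3$ while $\Qpaths=\{\{m_1\},\{m_2\},\{m_3\},\{m_4\}\}$ has four elements. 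Your own bound $|\Qpaths|\le|\Qleaves|+|\Qinternal|-1\le 2|\Qleaves|-1$ is the correct one: $\Path\mapsto\Pathbottom$ is injective since each node has one parent, $\Pathbottom$ is always a leaf or an internal node, and the root is never a $\Pathbottom$. Carrying this through your chain of inequalities gives
$\OPT{\Centered} \le (2+\tfrac{2c_2}{c_1})|\Qleaves| + \tfrac{1}{c_1}\sum_{\Path\in\Qpaths}|\Ippath{\Path}| \le (2+\tfrac{2c_2+1}{c_1})(|\Qleaves|+\sum_{\Path\in\Qpaths}|\Ippath{\Path}|)$,
so $|\Iapprox|\ge\frac{c_1}{2c_1+2c_2+1}\OPT{\Centered}$. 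This is marginally weaker than the stated $\frac{c_1}{2c_1+c_2+1}$, but it is still an absolute constant in $c_1, c_2$ and is sufficient for every downstream use of the lemma, which only needs an $O(1)$-approximation; the final constants in Theorem~\ref{t:static} and the dynamic theorem simply shift. You cannot reach the paper's exact constant by this route without proving $|\Qpaths|\le|\Qleaves|$, which does not hold.
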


\begin{proof}
First, we argue that $\Iapprox$ is an independent set. For any two squares $s_1, s_2 \in \Iapprox$, we argue they cannot intersect. This has several cases:
\begin{itemize}
    \item Both $s_1$ and $s_2$ are associated with the same leaf node $\node=\Node{s_1}=\Node{s_2} \in \Qleaves$. This cannot happen as this means both $\node_1$ and $\node_2$ are in $\Squares{\node}$, but only one element from $\Squares{\node}$ is in $\Iapprox$ by the construction.
    \item Both $\Node{s_1}$ and $\Node{s_2}$ are nodes, neither of which is the same as or the ancestor of the other. In a quadtree, squares of nodes which are not ancestors or descendants are disjoint, and thus $s_1$ and $s_2$, which are contained in the squares defining these quadtree nodes, $\Square{\Node{s_1}}$ and $\Square{\Node{s_1}}$ are disjoint.
    \item Both $\Node{s_1}$ and $\Node{s_2}$ are in the same $\Squares{\Path}$, for some $\Path \in\Qpaths$. Then they would only be in $\Iapprox$ if they were both in $\Ippath{\Path}$, which is by definition an independent set.
    \item The only remaining case is where $\Node{s_1}$ is part of some path $\Path \in \Qpaths$ and $s_2$ is a descendent of this path and thus inside the square $\Square{(\Pathbottom)}$. But, since $s_1$ is in $\Ippath{\Path}$, and $\Ippath{\Path}$ is a protected independent set, by definition $s_1$ will not intersect $\Square{(\Pathbottom)}$ and thus will not intersect $s_2$.
\end{itemize}
Second, we argue that $\Iapprox$ is an approximation. For disjoint $S_1,S_2$, $\OPT{S_1 \cup S_2} \leq \OPT{S_1} + \OPT{S_2}$. Thus we compute the independent sets of the squares associated with each part of the quadtree (leaves, internal nodes, degree-1 paths) separately:
\begin{align*}
\OPT{\Centered}  & \leq \OPT{\Qleaves} + \OPT{\Qinternal } + \sum_{\mathclap{\Path \in \Qpaths}} \OPT{\Squares{\Path}}  \\
\intertext{As all leaves are disjoint, are associated with at least on square each, and each square in a leaf intersects the center of the leaf, the optimum is always exactly one square from each leaf and $\OPT{\Qleaves}= |\Qleaves|$:}
 & \leq |\Qleaves| + \OPT{\Qinternal } + \sum_{\mathclap{\Path \in \Qpaths}} \OPT{\Squares{\Path}}  \\
\intertext{As the number of internal nodes is at most the number of leaves, and there can be most one square associated with each internal node in any independent set, $\OPT{\Qinternal }\leq |\Qinternal|\leq |\Qleaves|$:}
& \leq 2|\Qleaves| + \sum_{\mathclap{\Path \in \Qpaths}} \OPT{\Squares{\Path}}|   \\
 \intertext{In the statement of the lemma $
 \sum_{\Path \in \Qpaths} | \Ippath{\Path}| \geq c_1  \sum_{\Path \in \Qpaths} \OPT{\Squares{\Path}} -  c_2 | \Qpaths |$:}
& \leq 2|\Qleaves|  + \frac{1}{c_1} \sum_{{\Path \in \Qpaths}}  |\Ippath{\Path}|+ \frac{c_2}{c_1}|\Qpaths| \\ 
\intertext{As there are no more paths than leaves:}
& \leq \lrp{2+\frac{c_2}{c_1}}|\Qleaves|  +\frac{1}{c_1} \sum_{{\Path \in \Qpaths}}  |\Ippath{\Path}| \\ 
& \leq \lrp{2+\frac{c_2+1}{c_1}}\lrp{ |\Qleaves|  +\sum_{\mathclap{\Path \in \Qpaths}}  |\Ippath{\Path}| } \\ 
\intertext{By the definition of $\Iapprox$ in the statement of the lemma:}
& \leq \lrp{2+\frac{c_2+1}{c_1}} |\Iapprox|  = \frac{1}{\frac{c_1}{2c_1+c_2+1}} |\Iapprox| 
\end{align*}\qedhere
\end{proof}

\subsection{Paths}
\label{sec:paths_8c}

In the previous subsection, the results depended on obtaining an approximate independent set for the squares associated with each path. In this subsection we will show how this can be solved, given a solution to the 1-D problem of computing the approximate independent set of intervals.

\paragraphh{Monotone paths. }
In this section we assume a monochild path $\Path \in \Qpaths$. 
We assume paths are ordered from the highest to lowest node in the tree.
Let $\Depth{\node}$ represent the depth of a node in $\Qtree$, and we use $\Depth{s}$ as a shorthand for $\Depth{\Node{s}}$.

As each node on a path in $\Qpaths$ has only one child, label each node in the path with the quadrant of its child ($\qa, \qb, \qc, \qd$). The label of a square in $s \in \Squares{\Path}$ is the label of $\Node{s}$ in $\Path$. 
We partition the nodes of $\Path$ into subpaths $\Pathq{\qa}, \Pathq{\qb}, \Pathq{\qc}$ and $\Pathq{\qd}$. See~Figure~\ref{f:patha}.
We use 
$\Pathextend{q}$ to refer to $\Pathq{q}$ with $\Pathbottom$ appended on the end, we call this an \emph{extended subpath} (we will use $q$ to make statements that apply to an arbitrary quadrant).

All of $\Pathq{q}$ and $\Pathextend{q}$ are referred to as \emph{monotone} (extended) subpaths as for every pair of nodes $\node_1,\node_2$ in a subpath, if $\node_1$ appears before $\node_2$ on the subpath, then $\node_2$ is in quadrant $q$ of $\node_1$. 

Our general strategy will be to solve the independent set problem for each of the four quadrants and take the largest, which only will cause the loss of a factor of four:

\begin{fact} \label{f:max4}
As each $\Path$ in $\Qpaths$ is partitioned into $\Pathq{q}$ for 
$q \in \{\qa,\qb,\qc,\qd\}$, we know that 
$$\max_{q \in \{\qa,\qb,\qc,\qd\} } \OPT{\Squares{\Pathq{q}}} \geq \frac{1}{4} \OPT{\Squares{\Path}}$$ 
and 
$$\max_{q \in \{\qa,\qb,\qc,\qd\} } \sum_{\Path \in \Qpaths} \OPT{\Squares{\Pathq{q}}} \geq \frac{1}{4} \sum_{\Path \in \Qpaths}  \OPT{\Squares{\Path}}.$$
\end{fact}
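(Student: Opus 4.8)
The plan is a one-line pigeonhole argument resting on the subadditivity of the independent-set optimum over disjoint families of squares --- the very property $\OPT{S_1\cup S_2}\le \OPT{S_1}+\OPT{S_2}$ for disjoint $S_1,S_2$ already invoked in the proof of Lemma~\ref{l:combine}.

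First I would record that the node sets $\Pathq{\qa},\Pathq{\qb},\Pathq{\qc},\Pathq{\qd}$ genuinely partition the nodes of $\Path$: every node on a monochild path has exactly one child and hence receives exactly one quadrant label. Since each square $s\in\Squares{\Path}$ is associated with a unique node $\Node{s}\in\Path$ and inherits that node's label, the square sets $\Squares{\Pathq{\qa}},\dots,\Squares{\Pathq{\qd}}$ form a partition of $\Squares{\Path}$, with no overcounting. Applying subadditivity to this partition gives
\[
\OPT{\Squares{\Path}}\;\le\;\sum_{q\in\{\qa,\qb,\qc,\qd\}}\OPT{\Squares{\Pathq{q}}}\;\le\;4\max_{q\in\{\qa,\qb,\qc,\qd\}}\OPT{\Squares{\Pathq{q}}},
\]
which is the first inequality. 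For the second, I would sum the left inequality above over all $\Path\in\Qpaths$, interchange the two summations, and pull out the maximum:
\[
\sum_{\Path\in\Qpaths}\OPT{\Squares{\Path}}\;\le\;\sum_{q}\;\sum_{\Path\in\Qpaths}\OPT{\Squares{\Pathq{q}}}\;\le\;4\max_{q}\sum_{\Path\in\Qpaths}\OPT{\Squares{\Pathq{q}}}.
\]

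There is essentially no obstacle here; the only points needing a sentence each are (i) that the child-quadrant labeling induces an honest partition of $\Squares{\Path}$, so that subadditivity applies without double counting, and (ii) the subadditivity of $\OPT{\cdot}$ itself, which is immediate since restricting any independent set to a sub-family of squares yields an independent set of that sub-family. The only thing to watch is not to accidentally use superadditivity in the wrong direction --- we want $\OPT$ of the union bounded \emph{above} by the sum, which is the correct inequality for disjoint sets.
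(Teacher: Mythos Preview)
Your argument is correct and is exactly the intended one: the paper states this as a Fact without proof, treating the pigeonhole-plus-subadditivity reasoning as self-evident, and you have simply written out that reasoning carefully. There is nothing to add or correct.
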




\old{
\begin{figure}
    \centering
    \includegraphics{fig/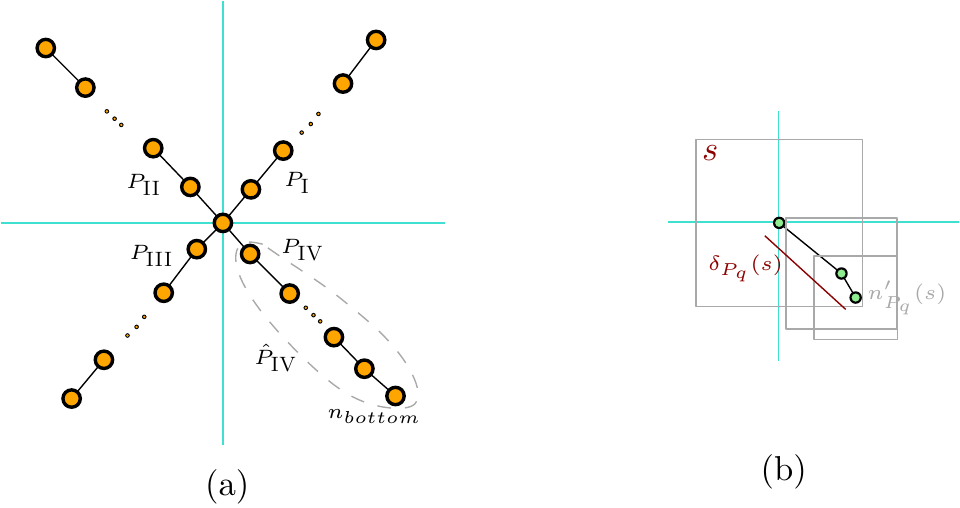}
    \caption{(a) An illustration of the monotone paths (b)} 
    
    \label{fig:quad-tree-paths}
\end{figure}
}

\paragraphh{From monotone paths to intervals. }
Given a monotone extended subpath $\Pathq{q}$, and square $s$, where $s \in \Squares{\Pathq{q}}$, we will be interested in which are the nodes $\node '$ in $\Pathq{q}$ where $s$ intersects the centers $\Squarecenter{\node '}$. 
As $s$ is centered, we know $s$ intersects $\Squarecenter{\Node{s}}$, and by the definition of $\Node{s}$, we know $s$ will not intersect any $\Squarecenter{\node '}$ for any $\node '$ that comes before $\Node{s}$ in $\Pathq{q}$ (and thus is a parent of $\Node{s}$ in $\Qtree$). 
Let $\node '_{\Pathq{q}}(s)$ be the last node in $\Pathq{q}$ such that $s$ intersects its center.
We use $\Depthmax{\Pathq{q}}{s}$ as a shorthand for $d(\node '_{\Pathq{q}}(s))$. 
Trivially $\Depth{s} \leq \Depthmax{\Pathq{q}}{s}$.
We denote the interval $[\Depth{s},\Depthmax{\Pathq{q}}{s}]$ as $\Interval{P_q}{s}$. See Figure~\ref{f:pathb}.
We use $\Pathq{q}[d_1,d_2]$ to refer to the subpath of $\Pathq{q}$ consisting of the nodes of depths between $d_1$ and $d_2$, inclusive.
What is interesting about $s$ is that from the above we know that $s$ only intersects squares of $\Squares{\Pathq{q}}$ with depths in the range $\Interval{P_q}{s}$, but that it has number of interesting geometric properties including that intersects \emph{all} such squares:

\begin{lemma} \label{l:monotone}
Given an extended monotone subpath $\Pathq{q}$:
\begin{itemize}

\item  The centers of the nodes of monotone path $\Pathq{q}$ followed by an arbitrary point in $\Square{(\Pathbottom)}$ are monotone with respect to the $x$ and $y$ axes. 

\item A square $s \in \Squares{\Pathq{q}}$ intersects the centers $\Squarecenter{\node}$ of all nodes $\node \in \Pathq{q}$ with depths in $\Pathq{q}[\Interval{\Pathq{q}}{s}]$, and thus $s$ intersects all squares $s$ in $\Squares{\Pathq{q}}$ with depths in $\Pathq{q}[\Interval{\Pathq{q}}{s}]$.

\item Given squares $s_1,s_2 \in \Squares{\Pathq{q}}$, if the intervals $\Interval{\Pathq{q}}{s_1}$ and $\Interval{\Pathq{q}}{s_2}$ intersect, then $s_1$ and $s_2$ intersect.

\item Given squares $s_1,s_2,s_3 \in \Squares{\Pathq{q}}$, where $\Interval{\Pathq{q}}{s_1}$, $\Interval{\Pathq{q}}{s_2}$, $\Interval{\Pathq{q}}{s_3}$ are disjoint, and 
$\Interval{\Pathq{q}}{s_1}$ is to the left of $\Interval{\Pathq{q}}{s_2}$ which is to the left of $\Interval{\Pathq{q}}{s_3}$, then $s_1$ and $s_3$ are disjoint.
\item Given squares $s_1,s_2 \in \Squares{\Pathq{q}}$, $\Depth{s_1}<\Depth{s_2}$ if the intervals $\Interval{\Pathq{q}}{s_1}$ and $\Interval{\Pathq{q}}{s_2}$ are disjoint, then $s_1$ does not intersect $\Square{(\Pathbottom)}$.
\end{itemize}
\end{lemma}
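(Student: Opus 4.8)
\emph{Setup and the nesting fact.} The plan is to reduce all five parts to two facts about how quadtree cells nest along a monotone subpath, after which only the last two parts need any reasoning about boxes. By reflecting the plane across a vertical and/or a horizontal axis it suffices to treat $q=\qc$, so that descending $\Pathextend{q}$ means moving toward the lower left. First I would record the \emph{nesting fact}: if $\node_1$ precedes $\node_2$ on $\Pathextend{q}$, with $\node_2$ possibly equal to $\Pathbottom$, then $\Square{(\node_2)}$ is contained in the lower-left quadrant of $\Square{(\node_1)}$. This is immediate from the definitions: each node of $\Pathq{q}$ is a monochild node whose unique child lies in its $q$-quadrant, $\node_2$ (or $\Pathbottom$) is a weak descendant of that child, and cells only shrink on descent. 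Passing to centers gives the first bullet: for $\node_1$ before $\node$, the coordinates of $\Squarecenter{\node}$ are at most those of $\Squarecenter{\node_1}$, strictly when $\node$ is a node and weakly when we append a point of $\Square{(\Pathbottom)}$, so the listed sequence of centers is coordinatewise monotone in both $x$ and $y$.

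\emph{Contiguity and the overlap bullet.} For the second bullet, $s$ contains $\Squarecenter{\Node{s}}$ because $s$ is centered, and contains $\Squarecenter{\node'_{\Pathq{q}}(s)}$ by the definition of $\node'_{\Pathq{q}}(s)$; these are the centers of the two endpoint nodes of $\Interval{\Pathq{q}}{s}$. Any node of $\Pathq{q}$ of depth in $\Interval{\Pathq{q}}{s}$ lies between them on $\Pathq{q}$, so by the first bullet its center's two coordinates lie between the corresponding coordinates of those two endpoint centers; since an axis-aligned box containing two points contains their axis-aligned bounding box, $s$ contains that center, hence $s$ meets every square associated with that node (each such square contains the center too). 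The third bullet is then a one-liner: if $\Interval{\Pathq{q}}{s_1}$ and $\Interval{\Pathq{q}}{s_2}$ overlap, assume $\Depth{s_1}\le\Depth{s_2}$; then $\Depth{s_2}\le\Depthmax{\Pathq{q}}{s_1}$, so $\Node{s_2}$ is a node of $\Pathq{q}$ of depth in $\Interval{\Pathq{q}}{s_1}$, and the second bullet applied to $s_1$ gives $s_1\cap s_2\neq\emptyset$.

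\emph{The two non-overlap bullets.} Bullets four and five share a single argument, and this is where box geometry enters. Suppose $\Interval{\Pathq{q}}{s_1}$ lies strictly to the left of $\Interval{\Pathq{q}}{s_2}$, i.e. $\Depthmax{\Pathq{q}}{s_1}<\Depth{s_2}$, and write $(a,b)$ for the coordinates of $\Squarecenter{\Node{s_2}}$. Since $\Node{s_2}$ comes after $\node'_{\Pathq{q}}(s_1)$ on $\Pathq{q}$, maximality of $\node'_{\Pathq{q}}(s_1)$ says $s_1$ does not contain $\Squarecenter{\Node{s_2}}$; but $s_1$ does contain $\Squarecenter{\Node{s_1}}$, whose coordinates are both strictly larger than $a$ and $b$ by the nesting fact. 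Writing $s_1=[x^-,x^+]\times[y^-,y^+]$, these two facts force $x^->a$ or $y^->b$. Now if $T$ is any set contained in a cell that comes after $\Node{s_2}$ on $\Pathextend{q}$, then by the nesting fact $T$ lies in the lower-left quadrant of $\Square{\Node{s_2}}$, so all $x$-coordinates in $T$ are $\le a$ and all $y$-coordinates are $\le b$; combined with the previous sentence, $s_1\cap T=\emptyset$. Bullet four is $T=s_3$, which is contained in $\Square{\Node{s_3}}$ and comes after $\Node{s_2}$ because $\Depth{s_3}>\Depthmax{\Pathq{q}}{s_2}\ge\Depth{s_2}$; bullet five is $T=\Square{(\Pathbottom)}$, after noting that $\Pathbottom$ follows every node of $\Pathq{q}$ and that the hypotheses $\Depth{s_1}<\Depth{s_2}$ and disjointness of the two intervals together force $\Depthmax{\Pathq{q}}{s_1}<\Depth{s_2}$.

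\emph{Main obstacle.} The only step that is not pure bookkeeping is the box argument of the third paragraph: making precise that a box containing a point $p$ but missing a point $p'$ strictly to the southwest of $p$ must have a separating edge exactly at the $x$- or $y$-coordinate of $p'$, and then observing that the far edges of every cell strictly below $\Node{s_2}$ — in particular of $\Square{(\Pathbottom)}$ and $\Square{\Node{s_3}}$ — sit precisely at those coordinates. Everything else is the chain structure of $\Path$, the convexity of axis-aligned boxes, and a check of the degenerate cases (single-point intervals, and $\Node{s_2}$ being the bottom monochild node so that its child is literally $\Pathbottom$), none of which affects the argument.
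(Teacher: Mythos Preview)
Your proof is correct and follows essentially the same approach as the paper: monotonicity of the centers along the subpath handles the first three bullets, and the box argument pivoting on $\Squarecenter{\Node{s_2}}$ (an axis-aligned square containing a point strictly northeast of $(a,b)$ but missing $(a,b)$ must have $x^->a$ or $y^->b$) is exactly the paper's fourth-bullet argument, which it presents via Figure~\ref{fig:monotone}. Your unified treatment of bullets four and five through a single set $T$ is slightly cleaner than the paper's version, which dispatches bullet five with a terse pointer back to the extended monotonicity of bullet one rather than spelling out the separating-edge step.
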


\begin{proof}
The first statement holds by the definition of a monotone (extended) subpath, since all succeeding nodes are in the same quadrant relative to the centers of all preceding nodes.
The second statement holds with regards to any squares (in fact, any rectangles) and monotone sequences of points.
The third statement is trivial, given the second, as if two squares cover the same point, they intersect.
The proof of the fourth statement may be found in Figure~\ref{fig:monotone}; it requires more than the monotone nature of the centers, but also that each square is contained in a quadrant which is part of a monotone path.
The last point is really the same as the second, using the fact that the first point holds for an arbitrary point in $\Square{(\Pathbottom)}$.
\end{proof}

\begin{figure} 
    \centering
    \includegraphics{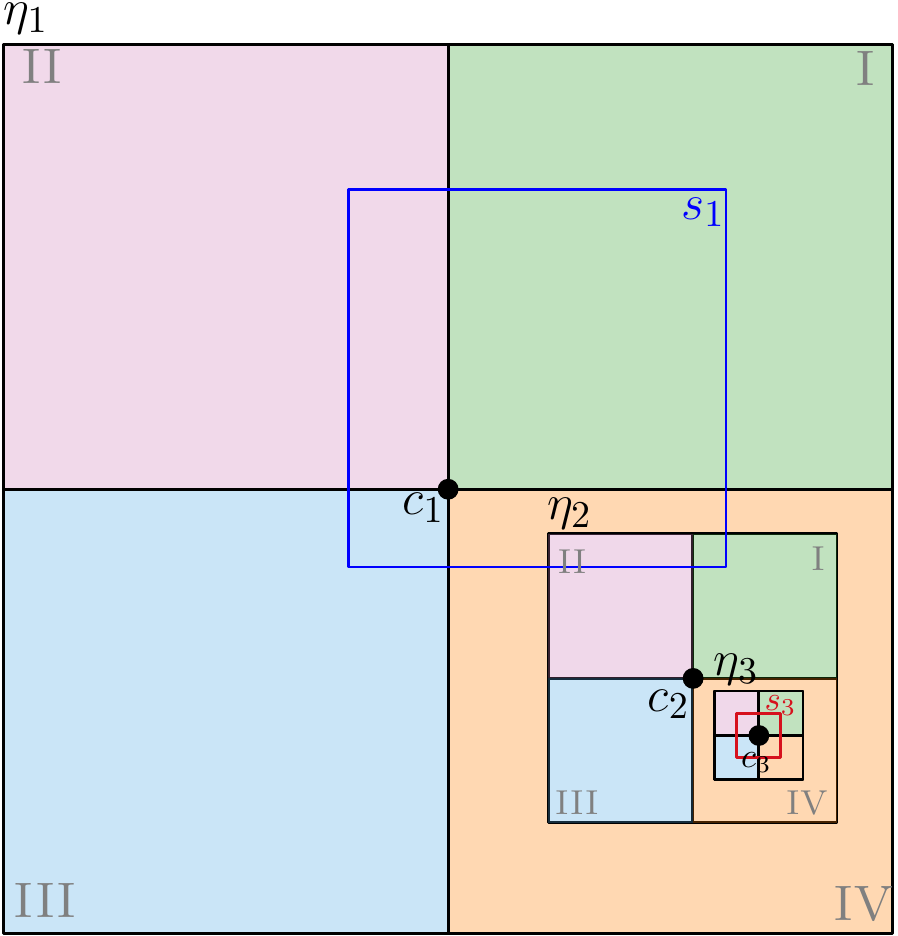}
    \caption[fuck]{
     We wish to prove the penultimate point of Lemma~\ref{l:monotone}: 
     given squares  $s_1,s_3 \in \Squares{\Pathq{q}}$ 
     where
     $\Interval{\Pathq{q}}{s_1}$,  $\Interval{\Pathq{q}}{s_2}$ are disjoint, and 
 $\Interval{\Pathq{q}}{s_1}  < \Interval{\Pathq{q}}{s_3}$.
 Let $\node_1 \coloneqq \Depth{s_1}$, $\node_3 \coloneqq \Depth{s_3}$ and let $\node_2$ be a node between $\node_1$ and $\node_3$ on $\Pathq{q}$.
     Observe that the depths of the nodes are increasing: $\Depth{\node_1}<\Depth{\node_2}<\Depth{\node_3}$.
     Assume w.l.o.g., that $q=\qd$. 
     Three quadtree cells, $\node_1, \node_2, \node_3$, that are a subsequence of $\Pathq{\qd}$, are illustrated with their centers $c_1=\Squarecenter{\node_1},c_2=\Squarecenter{\node_2},c_1=\Squarecenter{\node_3}$. 
         Observe that cell $\node_2$ is contained in the lower-right quadrant of $\node_1$, and $\node_3$ is in the lower-right quadrant of $\node_2$. 
 Square $s_1$ and $s_3$ are illustrated. Since $\Interval{\Pathq{q}}{s_1} < \Interval{\Pathq{q}}{s_2}$, we know that $s_1$ does not intersect $c_2$.
  Additionally $s_3$ by definition is contained in $\node_3$.
         The basic geometric fact we wish to illustrate, from which the penultimate point of Lemma~\ref{l:monotone} follows, is that since $s_1$ does not intersect $c_2$, then it cannot intersect the red square $s_3$. This follows as relative to $c_2$, $c_1$ is to the upper-left, and any point in $\node_3$ is to the lower right, and any axis-aligned square including points both to the upper-left of a point and the lower-right, must include the point itself. Thus, as the blue square $s_1$ must by definition include $c_1$, if it is to intersect the red square which lies entirely in $\node_3$, it must intersect $c_2$.
        }
    \label{fig:monotone}
\end{figure}

We call some $I \subseteq \Squares{\Pathq{q}}$ \emph{interval independent} if the intervals $\{ \Interval{\Pathq{q}}{s}  | s \in I\}$ are independent. We would like to connect the notion of interval independent squares with independent squares. However, there exist sets of squares which are interval independent but yet have intersecting squares, see Figure\ref{f:pathb} for an example. 

Let $\Ipath{\Pathq{q}}$ be an interval independent subset of $\Squares{\Pathq{q}}$. Let
 $\Half{\Pathq{q}}{\Ipath{\Pathq{q}}}$ be a subset formed by $\Ipath{\Pathq{q}}$ taking every other element out of $\Ipath{\Pathq{q}}$, starting from the last (deepest) one. As $\Ipath{\Pathq{q}}$ is interval independent, the depths of the squares are distinct, and $\Half{\Pathq{q}}{\Ipath{\Pathq{q}}}$ is uniquely defined.
 Clearly,  $|\Half{\Pathq{q}}{\Ipath{\Pathq{q}}}| \geq \frac{1}{2}|\Ipath{\Pathq{q}}|-1 $
 
 \begin{lemma}\label{l:isprotected}
Given an interval-independent set $\Ipath{\Pathq{q}} \subseteq  \Squares{\Pathq{q}}$, the set $\Half{\Pathq{q}}{\Ipath{\Pathq{q}}}$ is a protected independent set.
 \end{lemma}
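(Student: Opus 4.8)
The plan is to run a short, purely combinatorial argument that feeds the last two bullets of Lemma~\ref{l:monotone}. First I would fix a convenient indexing: since $\Ipath{\Pathq{q}}$ is interval-independent, the intervals $\{\Interval{\Pathq{q}}{s} : s\in \Ipath{\Pathq{q}}\}$ are pairwise disjoint, so their left endpoints $\Depth{s}$ are distinct and they are linearly ordered along the depth axis; write $\Ipath{\Pathq{q}}=\{s_1,\dots,s_m\}$ with $\Depth{s_1}<\Depth{s_2}<\cdots<\Depth{s_m}$, and note that in this order $\Interval{\Pathq{q}}{s_i}$ lies entirely to the left of $\Interval{\Pathq{q}}{s_{i+1}}$ (two disjoint intervals are ordered by their left endpoints). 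By the definition of $\Half{\Pathq{q}}{\cdot}$ (remove every other square starting from the deepest), the surviving set is $\Half{\Pathq{q}}{\Ipath{\Pathq{q}}}=\{\,s_i : m-i \text{ is odd}\,\}=\{s_{m-1},s_{m-3},\dots\}$. Two facts about this set drive everything: (i) every surviving $s_i$ has $i\le m-1$, so the strictly deeper square $s_{i+1}$ is still a member of $\Ipath{\Pathq{q}}$; and (ii) consecutive surviving indices differ by at least $2$, so between any two surviving squares $s_i,s_j$ ($i<j$) the square $s_{i+1}\in \Ipath{\Pathq{q}}$ sits strictly between them with $\Interval{\Pathq{q}}{s_i}<\Interval{\Pathq{q}}{s_{i+1}}<\Interval{\Pathq{q}}{s_j}$, all three intervals pairwise disjoint.

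For \emph{independence}, I would take any $s_i,s_j\in\Half{\Pathq{q}}{\Ipath{\Pathq{q}}}$ with $i<j$, invoke fact (ii) to produce the intermediate $s_{i+1}$ with three pairwise-disjoint intervals in left-to-right order, and apply the penultimate bullet of Lemma~\ref{l:monotone} ($\Interval{\Pathq{q}}{s_1}$ left of $\Interval{\Pathq{q}}{s_2}$ left of $\Interval{\Pathq{q}}{s_3}$, all disjoint $\Rightarrow$ $s_1,s_3$ disjoint) to conclude $s_i\cap s_j=\emptyset$. Hence the surviving set is an independent set of squares. For \emph{protectedness}, I would take any surviving $s_i$, use fact (i) to get $s_{i+1}\in \Ipath{\Pathq{q}}$ with $\Depth{s_i}<\Depth{s_{i+1}}$ and $\Interval{\Pathq{q}}{s_i}$ disjoint from $\Interval{\Pathq{q}}{s_{i+1}}$, and apply the last bullet of Lemma~\ref{l:monotone} to get that $s_i$ does not intersect $\Square{(\Pathbottom)}$. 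Combined with $\Half{\Pathq{q}}{\Ipath{\Pathq{q}}}\subseteq \Squares{\Pathq{q}}\subseteq \Squares{\Path}$, this is exactly the definition of a protected independent set with respect to $\Path$.

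The argument is almost entirely bookkeeping, and the one place I expect to have to be careful is precisely that bookkeeping: one must verify that ``remove every other square, starting from the deepest'' leaves, for \emph{each} surviving square, simultaneously (i) a strictly deeper partner in $\Ipath{\Pathq{q}}$ and (ii) a member of $\Ipath{\Pathq{q}}$ strictly between it and the next surviving square — these are the two hypotheses of the two cited bullets of Lemma~\ref{l:monotone}. (The corner cases $m\in\{0,1\}$ give $\Half{\Pathq{q}}{\Ipath{\Pathq{q}}}=\emptyset$, which is trivially a protected independent set, and $m=2,3$ give a singleton, handled by (i) alone.) No geometric reasoning beyond Lemma~\ref{l:monotone} is needed, since all the geometry — the monotone arrangement of node centers and the nesting of quadrant cells — has already been absorbed into that lemma.
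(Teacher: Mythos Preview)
Your proposal is correct and follows essentially the same approach as the paper: invoke the penultimate bullet of Lemma~\ref{l:monotone} (with an intermediate square $s_{i+1}$ between any two survivors) for independence, and the last bullet (with the deeper partner $s_{i+1}$) for protectedness. The paper's own proof is a terse three-sentence version of exactly this, and your explicit bookkeeping on the surviving indices is a faithful expansion of what the paper leaves implicit.
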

 
 \begin{proof}
 First, we argue that this is an independent set. As we took the squares corresponding to every other interval, by the penultimate point of Lemma ~\ref{l:monotone}, they are independent. 
From the last point of Lemma~\ref{l:monotone}, the only square in $\Squares{\Pathq{q}}$ that could intersect $\Pathbottom$ is the last one, and by definition this is not included in $\Half{\Pathq{q}}{\Ipath{\Pathq{q}}}$.
 \end{proof}
 
 \begin{lemma} \label{l:indsetsize}
Given $P_q$, let $\Iintopt{\Pathq{q}}$   be a maximum interval-disjoint subset of $\Squares{\Pathq{q}}$. 
Let $\Iintapprox{\Pathq{q}}$ be an interval-disjoint subset of $\Squares{\Pathq{q}}$   where $c|\Iintapprox{\Pathq{q}}|\geq |\Iintopt{\Pathq{q}}|$ for some $c\geq 1$.
Then, $|\Half{\Pathq{q}}{\Iintapprox{\Pathq{q}}}| \geq  \frac{1}{2c}\OPT{\Pathq{q}}-1$.
 \end{lemma}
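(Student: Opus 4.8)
The plan is to chain together three inequalities: one relating $\OPT{\Pathq{q}}$ (the max independent set of \emph{squares} on the monotone subpath) to the size of a maximum \emph{interval}-independent subset $\Iintopt{\Pathq{q}}$; one relating $\Iintopt{\Pathq{q}}$ to the approximate interval-independent set $\Iintapprox{\Pathq{q}}$ via the hypothesis $c|\Iintapprox{\Pathq{q}}| \geq |\Iintopt{\Pathq{q}}|$; and finally the already-noted bound $|\Half{\Pathq{q}}{\Iintapprox{\Pathq{q}}}| \geq \frac12 |\Iintapprox{\Pathq{q}}| - 1$. Composing these should give exactly $|\Half{\Pathq{q}}{\Iintapprox{\Pathq{q}}}| \geq \frac{1}{2c}\OPT{\Pathq{q}} - 1$.

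\textbf{The key step: $\OPT{\Pathq{q}} \leq |\Iintopt{\Pathq{q}}|$.}

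First I would observe that any independent set of squares $J \subseteq \Squares{\Pathq{q}}$ is automatically interval-independent: by the third bullet of Lemma~\ref{l:monotone} (contrapositive), if two squares $s_1,s_2 \in \Squares{\Pathq{q}}$ are disjoint then their intervals $\Interval{\Pathq{q}}{s_1}$ and $\Interval{\Pathq{q}}{s_2}$ are disjoint. Hence a maximum independent set of squares on $\Pathq{q}$ is a particular interval-independent subset, so $|\Iintopt{\Pathq{q}}| \geq \OPT{\Pathq{q}}$. This is the crucial place where the geometry of the monotone subpath (encoded in Lemma~\ref{l:monotone}) does the work, and it is the step I expect to require the most care — specifically making sure the direction of the implication in Lemma~\ref{l:monotone} is used correctly (intersecting intervals $\Rightarrow$ intersecting squares, hence disjoint squares $\Rightarrow$ disjoint intervals).

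\textbf{Assembling the chain.}

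From $c|\Iintapprox{\Pathq{q}}| \geq |\Iintopt{\Pathq{q}}| \geq \OPT{\Pathq{q}}$ we get $|\Iintapprox{\Pathq{q}}| \geq \frac{1}{c}\OPT{\Pathq{q}}$. By Lemma~\ref{l:isprotected}, $\Half{\Pathq{q}}{\Iintapprox{\Pathq{q}}}$ is a protected independent set (so the statement is meaningful), and since $\Half{\Pathq{q}}{\cdot}$ deletes every other element starting from the deepest, $|\Half{\Pathq{q}}{\Iintapprox{\Pathq{q}}}| \geq \frac12 |\Iintapprox{\Pathq{q}}| - 1 \geq \frac{1}{2c}\OPT{\Pathq{q}} - 1$, which is the claim. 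The only remaining subtlety is confirming that $|\Half{\Pathq{q}}{I}| \geq \frac12|I| - 1$ holds exactly: if $|I| = m$ and we remove every other element starting from the last, we keep $\lceil m/2 \rceil \geq m/2 \geq \frac12 m - 1$ of them, so this is immediate and needs no further argument. I would write the final display as a short $\geq$-chain rather than prose to make the composition transparent.
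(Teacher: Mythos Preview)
Your proposal is correct and follows essentially the same three-step chain as the paper's proof: the halving bound, the approximation hypothesis, and the comparison $|\Iintopt{\Pathq{q}}|\ge \OPT{\Pathq{q}}$ obtained from the contrapositive of the third bullet of Lemma~\ref{l:monotone}. One tiny slip: removing every other element \emph{starting from the deepest} leaves $\lfloor m/2\rfloor$ elements, not $\lceil m/2\rceil$, but $\lfloor m/2\rfloor\ge \tfrac{m}{2}-1$ so the needed inequality still holds.
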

 
 \begin{proof}
 
 \begin{align*}
     |\Half{\Pathq{q}}{\Iintapprox{\Pathq{q}}}| &  \geq \frac{1}{2} |\Iintapprox{\Pathq{q}}|-1 & \text{From taking every other element}
     \\
     & \geq \frac{1}{2c}|\Iintopt{\Pathq{q}}|-1 & \text{Given}
     \\
     & \geq \frac{1}{2c}\OPT{\Pathq{q}}-1
 \end{align*}
 The last line follows from the third point of Lemma~\ref{l:monotone}, which implies that if a subset of $\Squares{\Pathq{q}}$ is interval-disjoint, then it independent; thus the size of the maximum interval-disjoint subset of $\Squares{\Pathq{q}}$, $|\Iintopt{\Pathq{q}}|$, is a lower bound on the size $\OPT{\Pathq{q}}$ of a maximum independent set of $\Squares{\Pathq{q}}$. 
 \end{proof}

\subsection{Summary}

\begin{theorem} \label{t:static}
Given a $c$-approximation algorithm for maximum independent set of intervals, one obtain an expected $256c+32$-approximate randomized algorithm for maximum independent set of squares.
\end{theorem}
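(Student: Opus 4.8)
The plan is to compose the four reductions developed in this section and carefully bookkeep the multiplicative and additive losses. Fix the random quadtree $\Qinf$ and build $\Iapprox$ as follows: for every path $\Path\in\Qpaths$ and every quadrant $q\in\{\qa,\qb,\qc,\qd\}$, run the assumed $c$-approximation algorithm for independent set of intervals on the intervals $\{\Interval{\Pathq{q}}{s}:s\in\Squares{\Pathq{q}}\}$ to obtain an interval-independent set $\Iintapprox{\Pathq{q}}$, thin it to $\Half{\Pathq{q}}{\Iintapprox{\Pathq{q}}}$ (every other square, starting from the deepest), and keep for the path the quadrant whose thinned set is largest, calling it $\Ippath{\Path}$; then set $\Iapprox=\bigl(\text{one arbitrary square per leaf of }\Qtree\bigr)\cup\bigcup_{\Path\in\Qpaths}\Ippath{\Path}$, taking no squares at internal nodes. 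By Lemma~\ref{l:isprotected} each $\Ippath{\Path}$ is a protected independent set, so the correctness half of Lemma~\ref{l:combine} already tells us $\Iapprox$ is a genuine independent set of squares.

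Next I would certify the aggregate size bound hypothesized by Lemma~\ref{l:combine}. Lemma~\ref{l:indsetsize} gives, for the quadrant $q$ chosen on path $\Path$, that $|\Ippath{\Path}|=|\Half{\Pathq{q}}{\Iintapprox{\Pathq{q}}}|\ge\frac{1}{2c}\OPT{\Squares{\Pathq{q}}}-1$, while Fact~\ref{f:max4}, applied per path, gives $\sum_{\Path\in\Qpaths}\OPT{\Squares{\Pathq{q}}}\ge\frac14\sum_{\Path\in\Qpaths}\OPT{\Squares{\Path}}$ when each term uses that path's best quadrant. Summing the per-path inequalities therefore yields
\begin{equation*}
\sum_{\Path\in\Qpaths}|\Ippath{\Path}|\;\ge\;\frac{1}{2c}\sum_{\Path\in\Qpaths}\OPT{\Squares{\Pathq{q}}}-|\Qpaths|\;\ge\;\frac{1}{8c}\sum_{\Path\in\Qpaths}\OPT{\Squares{\Path}}-|\Qpaths|,
\end{equation*}
which is precisely the requirement of Lemma~\ref{l:combine} with $c_1=\tfrac{1}{8c}$ and $c_2=1$. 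Substituting these constants into the conclusion $|\Iapprox|\ge\frac{c_1}{2c_1+c_2+1}\OPT{\Centered}$ of Lemma~\ref{l:combine} gives $|\Iapprox|\ge\frac{1}{16c+2}\OPT{\Centered}$.

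Finally I would discharge the ``centered'' assumption via Lemma~\ref{l:bds}: since $E[\OPT{\Centered}]\ge\frac{1}{16}\OPT{\Sinput}$ and $\Iapprox\subseteq\Centered$, linearity of expectation gives $E[|\Iapprox|]\ge\frac{1}{16c+2}\,E[\OPT{\Centered}]\ge\frac{1}{16(16c+2)}\OPT{\Sinput}=\frac{1}{256c+32}\OPT{\Sinput}$, i.e.\ an expected $(256c+32)$-approximation, as claimed. Every computation here is routine given the already-proven lemmas, so there is no real ``hard part''; the only places demanding care are (i) making sure the additive ``$-1$ per monotone subpath'' losses from the thinning step are absorbed into the ``$-c_2|\Qpaths|$'' slack of Lemma~\ref{l:combine}, which is exactly why $c_2=1$ and not $0$, and (ii) applying the four-way quadrant maximization consistently (per path, or equivalently globally) so that Fact~\ref{f:max4} can be invoked in its summed form. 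Running time is polynomial: one call to the interval algorithm plus a linear-time thinning for each of the $O(|\Centered|)$ monotone subpaths.
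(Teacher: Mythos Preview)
Your proof is correct and follows essentially the same route as the paper: run the interval algorithm on each monotone subpath, thin via $\Half{}{}$ (Lemma~\ref{l:isprotected}, Lemma~\ref{l:indsetsize}), use Fact~\ref{f:max4} to go from subpaths to paths, plug $c_1=\tfrac{1}{8c}$, $c_2=1$ into Lemma~\ref{l:combine}, and finish with Lemma~\ref{l:bds}. The only difference is that you pick the best quadrant \emph{per path} while the paper picks a single global $q_{\max}$; your choice is at least as good, and one small imprecision in your chain---you apply Lemma~\ref{l:indsetsize} to the quadrant maximizing the \emph{thinned} set but then invoke Fact~\ref{f:max4} for the quadrant maximizing $\OPT{\Squares{\Pathq{q}}}$---is harmless because $|\Ippath{\Path}|=\max_q|\Half{\Pathq{q}}{\Iintapprox{\Pathq{q}}}|$ dominates the thinned set of \emph{every} quadrant, in particular the $\OPT$-maximizing one.
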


\begin{proof}
For each $P$ in $\Qpaths$, and $q \in \{\qa,\qb,\qc,\qd\}$ we have the intervals $\{\Interval{\Pathq{q}}{s}| s \in \Squares{\Pathq{q}}\}$. 
Give each of these sets of intervals to the assumed solution for an approximate independent set of intervals, and let
$\Iintapprox{\Pathq{q}}$ be the squares that give rise to the solution.
Now by removing every other element of $\Iintapprox{\Pathq{q}}$, beginning with the last, we obtain $\Half{\Pathq{q}}{\Iintapprox{\Pathq{q}}}$. 
From Lemma~\ref{l:isprotected} we know that $\Iintapprox{\Pathq{q}}$ is a protected independent set of squares, not just intervals. As for the size of $\Iintapprox{\Pathq{q}}$:

\begin{align*}
  &  \max_{q \in \{\qa,\qb,\qc,\qd\} } \sum_{\Path \in \Qpaths} |\Half{\Pathq{q}}{\Iintapprox{\Pathq{q}}}|
\\
&\geq \max_{q \in \{\qa,\qb,\qc,\qd\} } \sum_{\Path \in \Qpaths} \lrp{\frac{1}{2c}\OPT{\Pathq{q}}-1 } & \text{Lemma~\ref{l:indsetsize}}
\\
 &\geq \lrp{ \max_{q \in \{\qa,\qb,\qc,\qd\} } \sum_{\Path \in \Qpaths}  \frac{1}{2c}\OPT{\Pathq{q}}}-|\Qpaths|
 \\
& 
\geq  \frac{1}{8c} \sum_{\Path \in \Qpaths} I_{\text{opt}}(\Path) -|\Qpaths| & \text{Fact \ref{f:max4}}
\end{align*}

Thus, for 
$$q_{\max} \coloneqq \underset{q \in \{\qa,\qb,\qc,\qd\}}{\operatorname{argmax}} \sum_{\Path \in \Qpaths} |\Half{\Pathq{q}}{\Iintapprox{\Pathq{q}}}|,$$ 
the sets $\Half{\Pathq{q}}{\Iintapprox{\Pathq{q}}}$ will satisfy the requirements of Lemma~\ref{l:combine} with $c_1=\frac{1}{8c} $ and $c_2=1$. Thus, the sets $\Half{\Pathq{q}}{\Iintapprox{\Pathq{q}}}$ along with an arbitrary square associated with each node $\Qleaves$ by Lemma~\ref{l:combine} is an independent set of squares of size at least at least $\frac{\frac{1}{8c}}{2\cdot \frac{1}{8c}+1+1}\OPT{\Centered}=\frac{1}{2+16c} \OPT{\Centered}$; from Lemma~\ref{l:bds} this is expected to be at least $\frac{1}{2+16c}  \cdot \frac{1}{16} \OPT{S}=  \frac{1}{256c+32}  \OPT{S}$.
\end{proof}

\clearpage
\section{Dynamization}
\label{sec:squares_dynamic}

\newcommand{\jti}[1]{\textit{#1}}

\newcommand{\ttt}{\tabto{1pc}}
\newcommand{\tttt}{\tabto{2pc}}
\newcommand{\ttttt}{\tabto{3pc}}
\newcommand{\tttttt}{\tabto{4pc}}
\newcommand{\ttttttt}{\tabto{5pc}}
\newcommand{\tttttttt}{\tabto{6pc}}
\newcommand{\this}{\text{this}}
\newcommand{\jtc}[1]{\tabto{2.7in}\parbox[t]{2.7in}{\textit{// #1}}}

\newcommand{\Ss}[1]{\jmark{S}^{\jmark{\text{search}}}_{#1}}
\newcommand{\Si}[1]{\jmark{S}^{\jmark{\text{interval}}}_{#1}}
\newcommand{\Sp}{\jmark{S}^{\jmark{\text{path}}}}
\newcommand{\Sq}[1]{\jmark{S}^{\jmark{\text{qtree}}}_{#1}}
\newcommand{\Sm}{S^{\jmark{\text{qman}}}}
\newcommand{\St}{S^{\jmark{\text{top}}}}

In the dynamic structure, squares are inserted and deleted into an initially empty set of squares, and any changes to an approximate maximal independent set are reported back. Thus, the user of the dynamic structure, by keeping track of all changes that have been reported, will know the current contents of the independent set.

This data structure is a dynamization of the quadtree approach presented in section~\ref{s:quadtree}.
We present the data structure in three parts: 

\begin{enumerate}
    \item The \textit{quadtree structure} is the main structure and which efficiently stores the quadtree and its decomposition into paths, leaves, and internal nodes, and brings together the independent sets of the various parts as previously described in Lemma~\ref{l:combine}. 
 
    \item The \textit{path structure} represents a path, and there will be one such structure maintained by the quadtree structure for each path in $\Qpaths$.This structure translates each square stored into an interval in the same manner as the static structure, and uses the dynamic interval structure of section~\ref{sec:intervals_details} for each of the four monotone paths to dynamize maintaining an approximately optimal set of intervals. This requires swapping out the IQDS of \ref{lem:ors} for one that is compatible with the intervals generated from our quadtree approach; the details of this are presented in Section~\ref{sec:squaresiqds}.

    \item The \textit{search structure} is the secret sauce of the efficiency of our entire method, and allows the dynamic interval structure to query the intervals of squares which are stored in this structure implicitly.
\end{enumerate}

We proceed in a bottom-up fashion here, beginning with the search structure and culminating with the quadtree structure, with the details of the IQDS at the end.
\subsection{Search Structure}

\paragraphh{Overview. }
This structure is the only part of the more complex structure which does not directly correspond to something defined in Section~\ref{s:quadtree}. Rather, it plays a supporting role and is vital for speed reasons. There will be one instance of the search structure, which is globally accessible.

Logically, the searching structure stores a collection of squares which can be modified via $\jop{Insert}$ and $\jop{Delete}$ operations. Each square in the collection has a mark, which will be one of the quadrant labels $\qa,\qb,\qc,\qd$ or $\jop{None}$. 
Given a square $s$, the quadtree structure will maintain that $s$'s mark will be $\jop{None}$ if $\Node{s}$ is currently an internal node or a leaf in $\Qtree$, and will be the quadrant of the child if $\Node{s}$ is monochild.
There is one query operation, and one operation to change the marked state of some of the squares, which will be used by the query structure when a change in the quadtree necessitates a change in the mark of some squares.

We use $s.t, s.b, s.l, s.r$ to refer to the coordinates of the four sides of a square $s$, and $d$ to refer to one of the directions in $\{t,b,l,r\}$.

\paragraphh{Operations: } Formally, the operations supported are:

\begin{itemize}
    \item $\jop{Init}$: Makes new empty structure
    \item $\jop{Insert}(s,m)$: Inserts square $s$ with mark $m$ into the structure
    \item $\jop{Delete}(s)$: Deletes a square $s$ from the structure
    \item $\jop{RangeMark}(s_1,s_2,m)$: Gives mark $m$ to all squares $s$ stored such that $s_1.d \leq s.d \leq s_2.d$ for $d\in \{t,b,l,r\}$.
    \item $s=\jop{RangeSearch}(s_1,s_2,m)$ Finds and returns a/the marked square with mark $m$ stored such that $s_1.d \leq s.d \leq s_2.d$ for all $d\in \{t,b,l,r\}$ or reports that there is no such square.
\end{itemize}

\paragraphh{Invariants:} The quadtree structure will ensure that the following holds at the end of every operation: All squares in $\Centered$ must be stored in this search structure, and thus the insertion and deletion operations must be called when there are changes to $\Centered$.
For each square $s$, its mark is the quadrant of the child if $\Node{s}$ has is monochild (if $\Node{s}$ is on some $\Path \in \Qpaths$) and $\jop{None}$ otherwise (if $\Node{s}$ is a leaf or an internal node).
As the structure of the quadtree changes, the quadtree structure will need to call $\jop{RangeMark/RangeUnmark}$ to ensure that this invariant is kept up to date.
For example,
to mark in all squares in $\Squares{\node}$ with mark $m$ can be done with one call to $\jop{RangeMark}(s_1,s_2,m)$ where $s_1$ is the lower-left quadrant of the square of $\Square{\node}$ and $s_2$ is the upper-right quadrant of the square of $\Square{\node}$.

\paragraphh{Implementation. } 
Each square is stored as a 4-D point in a standard range tree structure~\cite{DBLP:journals/ipl/Bentley79}. The range tree is augmented with a possible note with a mark on each node (in this paragraph a node is used to refer to a node of a range tree, not of the quadtree) indicating that all points in its subtree should have that mark. 
For any point,  its mark is determined by its highest ancestor with such a note.
Any time a non-leaf node with a note is touched, its note is removed and pushed down to its children, overwriting any note on its children. 
Additionally, all nodes have an indication of whether all nodes squares in this subtree have the same mark, according to the information in the subtree (ignoring any notes in the ancestors).
With such standard augmentation, standard range query operations can be executed, such operations may be limited to a particular mark, and all points in a range can have their mark changed.

\paragraphh{Runtime.} All operations take time $O(\log^4 n)$ using the standard range tree analysis. We do not bother with fractional cascading~\cite{DBLP:journals/algorithmica/ChazelleG86,DBLP:journals/algorithmica/ChazelleG86a} as while this may shave a log, this complicates the simple description of the implementation of the marks.

\subsection{Path Structure}

The path structure is a structure which represents a path $P \in \Qpaths$ and which maintains an approximate protected independent set of the squares associated with nodes of the path. As such there will be one instance of this structure for every path. Though the details are numerous, the idea behind the path structure is simple: it stores the squares on a path, supports modifications of the path such as split and merge, and translates the squares into intervals which it passes on to our structure for dynamic intervals. 

The path structure is called by the quadtree structure every time a square is inserted or deleted that is associated with a node on the path. It is also called by the quadtree structure whenever a split, merge, extend, or contract operation needs to be performed due to the paths changing due to a structural change in the quadtree. In this way the quadtree structure ensures that there is always one path structure for each path in $\Qpaths$. 
We require that the quadtree structure updates the search structure before the path structures, and  the path structure has access to the search structure.

The ADT of the path structure is as follows:
\begin{itemize}
    \item $ \jop{Init}(\node_{\text{top}},\node_{\text{bottom}})$: Creates a new path structure for the path from $\node_{\text{top}}$ to $\node_{\text{bottom}}$.
    \item $\Delta I = \jop{Insert}(s)$: This is called by the quadtree structure whenever a square $s$ is inserted that is associated with a node on this path. Changes to the independent set are reported.
    \item $\Delta I = \jop{Delete}(s)$: This is called by the quadtree structure whenever a square $s$ is deleted that is associated with a node on this path. Changes to the independent set are reported.
    \item $(\Delta I,\Path_\text{new}) = \jop{Split}(\node)$: Given a node $\node$ on this $\Path$, splits this structure into two. This structure will represent the path defined from the former $\Pathtop$ to $\node$ and the new path will represent the path defined from $\node$ to the former $\Pathbottom$. The node $\node$ itself will no longer belong to a path. Changes to the protected independent sets are reported, that is the difference between the independent set before this operation and the union of the two paths'  structures protected independent at the completion of this operation. 
    \item $\Delta I = \jop{Merge}(\Path')$: Given another instance of an monotone path structure $P'$ which is adjacent below this one, that is, where $\Pathbottom=\Pathtop'$, the two paths and node between them, $\Pathbottom=\Pathtop'$, are combined into this structure and $\Path'$ becomes invalid. Changes to the union of protected independent sets before the operation as compared to the single one after are reported.
    \item $\Delta I = \jop{Extend/Contract}(\node)$: Set $\Pathbottom$ to $\node$. There must be no marked nodes between the old and the new $\Pathbottom$. Extend refers to $\node$ being below the current $\Pathbottom$, thus making the path larger, and contract refers to $\node$ being on the current path, thus making it smaller.  Changes to the independent set are reported.
\end{itemize}

Note that in a \jop{Extend}, \jop{Contract}, \jop{Split}, and \jop{Merge}, there is one possibly marked node that is added or removed from the path(s) that was not on a path before. However, the squares associated with this node are not passed in as a parameter as there could be arbitrarily many, but the implementation of these operations uses the search structure to access them; this works as at most one of the squares associated with this node can be added or removed to the independent set, and searching for this one square can be done with the search structure. 

\paragraphh{Implementation overview.}
The implementation works by following the same logic of the static case. To review, in the static case the path $\Path$ was partitioned into four monotone subpaths $\Pathq{q}$, for $q \in \{\qa,\qb,\qc,\qd \}$. 
In each of these subpaths, squares $s$ associated with nodes on the monotone subpath $\Pathq{q}$ were associated with intervals $\Interval{\Pathq{q}}{s}\coloneqq [\Depth{s},\Depthmax{\Pathq{q}}{s}]$ corresponding to the depths of the centers of the nodes on the subpath that they spanned. 
An approximately maximal set of squares whose intervals were independent, $\Iintapprox{\Pathq{q}}$, was then obtained. 
This set is not necessarily a protected independent sets of squares, but we showed it could easily be transformed into one by removing every other square, starting from the deepest, to yield $\Iapprox(\Pathq{q})$. 
Finally, the independent set chosen, $\Iapprox(\Path)$ was the largest of the four $\Iapprox(\Pathq{q})$.

In order to make this dynamic several minor changes are needed. The first is that to maintain the four approximate independent sets of intervals, $\Iintapprox{\Pathq{q}}$, the dynamic intervals structure of section~\ref{sec:intervals_details} is used.
The second is that the condition that the chosen independent set be that largest of $\Iintapprox{\Pathq{q}}$ is too strict as we need to report changes in the independent set, and if the largest oscillates frequently between two large sets, reporting these changes will become unacceptably expensive.
So, instead, we require that the chosen independent set be within a factor of two of the largest $\Iintapprox{\Pathq{q}}$, which allows an easy amortization of the cost of switching sets at the expense of losing let another factor of two in our constant.
The third change is that taking every-other interval from $\Iintapprox{\Pathq{q}}$ to yield $\Iapprox(\Pathq{q})$ is too strict to be maintained dynamically. We thus adopt a more relaxed approach where there are between one and three unchosen intervals between the chosen intervals, thus losing another factor of two in the approximation.
The fourth change is that the paths themselves are not static, and may be split, merged, extended or contracted as the quadtree shape changes. This is easy to support given that the dynamic independent interval structure support splits and merges.

Specifically, we store the four interval-independent subsets of $\Squares{\Pathq{q}}$, $\Iintapprox{\Pathq{q}}$, in four red-black trees, sorted by depth. We call the nodes of non-maximal black depth \emph{chosen}. It is a basic fact of red-black trees that there will be between one and three non-chosen nodes between chosen nodes, and the minimum and maximum nodes will not be chosen. The squares in the chosen nodes are thus, by the same logic as lemma~\ref{l:isprotected}, a protected independent set. It is also useful to note that in every standard red-black tree operations (insert, delete, split, merge) only a constant number of nodes can have their chosen/non-chosen status change.

One of the four red-black trees is marked as \emph{active}, and it is the chosen squares of this tree which form the protected independent set  $\Iapprox(\Path)$ seen by the user of this structure. We will maintain that the active red-black tree is the largest of the four, or at least within a factor two of the largest. Just before finishing the execution of all operations, this invariant is checked, and if it no longer holds, the active red-black tree is changed to the largest of the four, and all the chosen squares in the previously active red-black tree are reported as removed and all the chosen squares in the newly active red-black tree are reported as added to the independent set. This will result in periodic large changes in the independent set, but these changes are easily shown to be constant amortized in the same manner as the classic array resizing problem.

As for the implementation of the operations, this comes down to maintaining the four approximate interval-independent subsets in red-black trees, and four dynamic interval structures.
In Lemma~\ref{lem:ors}, the dynamic interval structure listed the queries that it needs to be able to preform on intervals, the IQDS operations, and we will show in Section~\ref{l:squaresiqds} how to implement them. 

The $\jop{Extend}$ and $\jop{Merge}$ operations require creating a new interval structure to be merged into the existing one(s). For the $\jop{Extend}$ operation, this corresponds to the squares of a newly added node. For the  $\jop{Merge}$ operation, this corresponds to the squares of the node defining the top of the one path and the bottom of the other, and is thus part of neither. Fortunately, in both cases, an arbitrary square from the node can be chosen to be the independent set, and this is maximal. During these operations care must be taken when splitting or merging the dynamic interval structures associated with the paths as during these operations the intervals may change. In section~\ref{s:splitmergeintervals} we show the details of how these splits and merges are executed.

We now summarize how each operation is executed. 

\noindent To execute $\Delta I = \jop{Insert}(s)$:

\newcommand{\Je}{\item If quadrant $q$ is not the active one set $\Delta I = []$, as changes in non-active independent sets are recorded but not returned.}
    \newcommand{\Jf}{
    \item If the red-black tree of the active quadrant no longer has size within a factor-2 of the largest of the four red-black trees, append to $\Delta I$ $(s,\jop{Delete})$ for all squares $s$ in marked nodes in the active quadrant, change the active quadrant to that of the red-black tree with largest size, and append to $\Delta I$ $(s,\jop{Insert})$ for squares $s$ in marked nodes in the new active red-black tree.
        \item Return $\Delta I$.}
 
 \newcommand{\Jd}{This returns changes to the independent sets of intervals $\Iintapprox{\Pathq{q}}$, and the red-black tree storing $\Iintapprox{\Pathq{q}}$ is updated accordingly. 
    Append to $\Delta I$ insertions and deletions of squares to reflect any changes to the squares stored in the marked nodes in the red-black tree.}

 \newcommand{\Jdd}{This returns changes to the independent sets of intervals $\Iintapprox{\Pathq{q}}$, and the red-black trees storing $\Iintapprox{\Pathq{q}}$ are updated accordingly.
    Append to $\Delta I_q$ insertions and deletions of squares to reflect any change in the squares in the red/black tree.}

\begin{itemize}[noitemsep]
    \item Set $\Delta I = []$
    \item Let $q$ be the quadrant of the child of $\Node{s}$ which we know is monochild. 
    \item Let $\Interval{\Pathq{q}}{s}\coloneqq [\Depth{s},\Depthmax{\Pathq{q}}{s}]$. This is an IQDS operation and the implementation is discussed in Section~\ref{sec:squaresiqds}.
    \item Insert $\Interval{\Pathq{q}}{s}$ into the quadrant $q$ dynamic interval structure.  \Jd
    \Je
    \Jf
\end{itemize}
    
\noindent
To execute $\Delta I = \jop{Delete}(s)$:
\begin{itemize}[noitemsep]
    \item Set $\Delta I = []$
    \item Let $q$ be the quadrant of the child of $\Node{s}$, which we know is monochild. 
    \item Let $\Interval{\Pathq{q}}{s}\coloneqq [\Depth{s},\Depthmax{\Pathq{q}}{s}]$. This is an IQDS operation and the implementation is discussed in Section~\ref{sec:squaresiqds}.
    \item Delete $\Interval{\Pathq{q}}{s}$ from the quadrant $q$ dynamic interval structure. 
    \Jd
     \Je
\Jf

\end{itemize}

\noindent
The steps to execute $\Delta I = \jop{Extend}(\node)$:

\begin{itemize}[noitemsep]
   \item Set $\Delta I_q = []$, for $q \in \{\qa,\qb,\qc,\qd\}$.
    \item Let $q$ be the quadrant of $\Pathbottom$ that $\node$ lies in. 
    \item Let $s$ be a the first square in the linked list storing $\Squares{\node}$
    \item Let $S'$ be a new interval data structure containing the interval of $s$ with respect to the new path $\Pathtop$ to $\node$ and quadrant $q$. This is an IQDS query.
    \item Add $s$ to the red-black tree storing $\Iintapprox{\Pathq{q}}$. 
     Append to $\Delta I$ insertions and deletions of squares to reflect any change in the squares stored in marked node in the red-black tree.
    \item Use the method in section \ref{s:splitmergeintervals} to merge the new data structure $S'$ with the existing interval data structure. 
\Jd
    \item Set $\Delta I = \Delta I_q$, where $q$ is the active set.

\Jf

\end{itemize}

\noindent
The steps to execute $\Delta I = \jop{Contract}(\node)$:

\begin{itemize}[noitemsep]
   \item Set $\Delta I_q = []$, for $q \in \{\qa,\qb,\qc,\qd\}$.
    \item Let $q$ be the quadrant of $\node$ that $\Pathbottom$ lies in. 
    \item If there is a square $s$ in red-black tree $q$ storing $\Iintapprox{\Pathq{q}}$ associated with node $\node$, remove it. Append to $\Delta I$ insertions and deletions of squares to reflect any change in the marked squares in the red/black tree, and the removal of $s$.
    \item Use the method in section \ref{s:splitmergeintervals} to split the existing interval data structure at the depth of at least $\node$. Discard the right structure. 
    \Jd
    \item Set $\Delta I = \Delta I_q$, where $q$ is the active set.

     \Jf   

\end{itemize}

\noindent 
To execute $\Delta I = \jop{Merge}(\Path')$:

\begin{itemize}[noitemsep]
    \item Set $\Delta I_q = []$, for $q \in \{\qa,\qb,\qc,\qd\}$.
    \item Let $q$ be the quadrant of $\Pathbottom$ that $\Path'$ lies in. 
    \item Let $s$ be a square in $\Squares{\Pathbottom}$, if it is not empty.
    \item Let $S'$ be a new data interval data structure containing the interval of $s$ with respect to the new path $\Pathtop$ to $\node$ and quadrant $q$.
    \item Let $S''$ be the interval data structure of $\Path'$
    \item Add $s$ to the red-black tree storing $\Iintapprox{\Pathq{q}}$. Merge the four RB trees of this path with those of $P'$.
    Append to $\Delta I_q$ insertions and deletions of squares of marked nodes in the quadrant-$q$ red-black tree to reflect any change in the marked squares in the red-black trees.
    
    \item Use the method in section \ref{s:splitmergeintervals} to merge the quadrant-$q$ dynamic interval structure with new data structure $S'$ and then merge all four interval data structure with those of $S''$. 
    \Jdd
        \item Set $\Delta I = \Delta I_q$, where $q$ is the active set.
    \Jf

\end{itemize}

\noindent 
To execute $(\Delta I,\Path_\text{new}) = \jop{Split}(\node)$:

\begin{itemize}[noitemsep]
    \item Set $\Delta I_q = []$, for $q \in \{\qa,\qb,\qc,\qd\}$.
    \item Use the method in section \ref{s:splitmergeintervals} to split the intervals of the four dynamic interval structures each into three groups, based on those with depths at least, at most, and equal to that of $\node$. Discard the middle structure, and place the at least group into a newly created path structure $\Path_\text{new}$. 
    This returns changes to the independent sets of intervals $\Iintapprox{\Pathq{q}}$, and the red-black trees storing $\Iintapprox{\Pathq{q}}$ are updated accordingly. Append to $\Delta I_q$ insertions and deletions of squares to reflect any change in the squares in the red-black trees.
    \item  Delete from the red-black trees the (possibly) single square associated with $\node$. 
    Split the four RB trees of this path based on $\node$ and move the larger part to $P'$.
    \Jdd
    \item Set $\Delta I = \Delta I_q$, where $q$ is the active set.
   \Jf
\end{itemize}

\begin{lemma} \label{l:dynpath}
For each path structure:
\begin{itemize} 
\item For each $q$ the path structure maintains a protected subset $\Iintapprox{\Pathq{q}}$ of $\Squares{\Pathq{q}}$ where
$\Iintapprox{\Pathq{q}} \geq  \frac{1}{8} \OPT{\Squares{\Pathq{q}}} - 3$. 
\item 
The path structure maintains a protected subset $\Iapprox(\Path)$ of $\Squares{\Path}$ where
$\Iapprox(\Path) \geq  \frac{1}{64} \OPT{\Squares{\Path}} - 3$. 
\item The runtime of each operation is at most $O(\log^5 n)$ amortized.
\end{itemize}
\end{lemma}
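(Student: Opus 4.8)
My plan is to handle the three bullets in order: the first two (size guarantees) by lifting the static chain of Section~\ref{s:quadtree} to the dynamic setting, and the third (running time) by accounting for the primitives used by each path operation. Throughout I fix $\epsilon$ small enough that $1+\epsilon\le 2$, so the dynamic interval structure of Theorem~\ref{thm:intervals_result}, extended to support splits and merges as in Section~\ref{sec:intervals_extend}, is a $2$-approximation. For the per-quadrant bound, the first step is that the quadrant-$q$ dynamic interval structure maintains an interval-disjoint subset of $\Squares{\Pathq{q}}$ of size at least $\tfrac12|\Iintopt{\Pathq{q}}|$, and that by the contrapositive of the third point of Lemma~\ref{l:monotone} every independent set of squares on a monotone subpath maps to an equally large set of pairwise disjoint intervals, so $|\Iintopt{\Pathq{q}}|\ge\OPT{\Squares{\Pathq{q}}}$ and the maintained set has size at least $\tfrac12\OPT{\Squares{\Pathq{q}}}$. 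The second step is to argue, exactly as in Lemma~\ref{l:isprotected}, that the \emph{chosen} squares of the quadrant-$q$ red-black tree form a protected independent set: adjacent squares of an interval-disjoint set may intersect but squares two apart cannot (penultimate point of Lemma~\ref{l:monotone}), the deepest square is the only one that could meet $\Square{(\Pathbottom)}$ and it is never chosen (last point of Lemma~\ref{l:monotone}), and the red-black invariant leaves between one and three non-chosen nodes between consecutive chosen ones. Counting gives that the chosen set has size at least a quarter of the interval-disjoint set up to an additive constant, hence at least $\tfrac18\OPT{\Squares{\Pathq{q}}}-3$.

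For the whole-path bound I would use that $\Iapprox(\Path)$ is the chosen set of the \emph{active} quadrant and that the path structure keeps the active quadrant's tree within a factor $2$ of the largest of the four. Letting $q^\ast$ maximize $\OPT{\Squares{\Pathq{q}}}$ and chaining (i) Fact~\ref{f:max4}, which gives $\OPT{\Squares{\Pathq{q^\ast}}}\ge\tfrac14\OPT{\Squares{\Path}}$, (ii) the interval approximation factor $\tfrac12$, (iii) the activeness slack $\tfrac12$, and (iv) the loss from restricting to chosen nodes $\tfrac14$, one obtains $|\Iapprox(\Path)|\ge\tfrac1{64}\OPT{\Squares{\Path}}-3$; the only thing to verify is that the small additive constants (at most $\tfrac34$ from the chosen-node count plus a constant from the slack) fit inside the stated $-3$, which they do.

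For the running time, the plan is to show that each path operation ($\jop{Insert}$, $\jop{Delete}$, $\jop{Split}$, $\jop{Merge}$, $\jop{Extend}$, $\jop{Contract}$) reduces to a constant number of primitives of three kinds: a constant number of operations on the four dynamic interval structures — which by the analyses of Sections~\ref{sec:intervals_alg} and~\ref{sec:intervals_extend} issue only $O(k^2)=O(1)$ primitive IQDS queries each, with the interval splits and merges under shifting endpoints handled as in Section~\ref{s:splitmergeintervals}; a constant number of red-black-tree operations on the four chosen-square trees ($O(\log n)$ each, each flipping the chosen status of only $O(1)$ nodes and thus producing only $O(1)$ reported changes); and at most one search, via the global $4$-dimensional range structure, for the lone square of a node entering or leaving a path ($O(\log^4 n)$). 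The crucial point, to be supplied in Section~\ref{sec:squaresiqds}, is that the IQDS queries are no longer the $O(\log n)$-time red-black operations of Lemma~\ref{lem:ors} but are served by the $4$-dimensional structure — because the intervals $\Interval{\Pathq{q}}{s}=[\Depth{s},\Depthmax{\Pathq{q}}{s}]$ exist only implicitly — at cost $O(\log^5 n)$: computing $\Depthmax{\Pathq{q}}{s}$ (and likewise answering a leftmost/rightmost-endpoint query) is an $O(\log n)$-step search along the path with an $O(\log^4 n)$ range query per step. Multiplying by the constant number of queries gives $O(\log^5 n)$ per operation, apart from the rebuild that occurs when the active quadrant is switched and $\Theta(|\Iapprox(\Path)|)$ changes must be reported; since a switch can happen only after the active tree has fallen below half the maximum size, which takes $\Omega(|\Iapprox(\Path)|)$ intervening size-changing operations, a standard potential argument (as for array doubling) spreads this cost to $O(1)$ amortized, leaving $O(\log^5 n)$ amortized per operation.

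I expect the running time to be the main obstacle, and within it two points in particular: establishing the $O(\log^5 n)$ bound for IQDS queries over implicitly stored intervals — that $\Depthmax{\Pathq{q}}{s}$ and the leftmost/rightmost-endpoint queries can truly be answered with a single logarithmic search layered on the $4$-D range structure, and that the interval-structure splits and merges of Section~\ref{s:splitmergeintervals} stay within budget — and making the amortization of the active-quadrant rebuild interact cleanly with $\jop{Split}$, $\jop{Merge}$, $\jop{Extend}$, and $\jop{Contract}$, which alter the four tree sizes non-uniformly.
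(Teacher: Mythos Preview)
Your proposal is correct and follows essentially the same approach as the paper's own proof: the size bounds are obtained by chaining the interval $2$-approximation, the red-black ``chosen'' thinning (losing a factor $4$ and an additive $3$), Fact~\ref{f:max4}, and the factor-$2$ slack on the active quadrant; the running time is bounded by observing that each path operation issues $O(1)$ calls to the dynamic interval structure, red-black trees, and the search structure, with the dominant cost being the $O(\log^5 n)$ IQDS queries of Section~\ref{sec:squaresiqds}, and the active-quadrant rebuild is amortized via the potential equal to the gap between the largest and the active tree.  One minor inaccuracy: $\jop{Get\text{-}Interval}$ (computing $\Depthmax{\Pathq{q}}{s}$) costs only $O(\log n)$ via the link-cut tree, not $O(\log^5 n)$, but this does not affect your overall bound; and the concern you flag about the amortization interacting with $\jop{Split}$/$\jop{Merge}$ is real but is handled in the paper at exactly the same level of hand-waving you adopt.
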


\begin{proof}
With the exception of the change of the active red-black tree, the implementation only requires a constant number of operations and calls to the operations of the interval structure. In turn, the interval structure only makes a constant number of calls to the IQDS operations listed in Lemma~\ref{lem:ors}. These are executed with calls to the search structure in time $O(\log^5 n)$ each, as described in section~\ref{sec:squaresiqds}.

The dynamic interval structures only report a constant number of changes of the approximate maximum set of intervals per operation. Thus, by standard amortization arguments, the large cost incurred by switching the active red-black tree is only constant amortized per operation (use as a potential function the difference between the size of the largest red black tree and the active one times a constant).

Adjusting for the use of the red-black tree instead of every-other node gives a set of size at least $\frac{1}{4c} \OPT{\Squares{\Pathq{q}}} - 3$ assuming a $c$-approximate independent set returned by the dynamic path structure, as it will take at least every fourth square, missing up to three at the end. 
As for the approximation factor for dynamic intervals, using Lemma~\ref{l:indsetsize} with $c=2$ (an easy upper bound on $1+\epsilon$) gives the first point.

From the first point to the second point, we are converting from one quadrant to the best of the four, which from Fact~\ref{f:max4} says we lose a factor of four, but as in the dynamic case we are actually only taking something which is within a factor two of the best of the four, so lose a factor of 16 when moving from the first to the second point.
\end{proof}

\subsection{Quadtree Structure}
The quadtree structure is the main data structure. It maintains an approximate maximum independent set of squares under the insertion and deletion operations. In these operations it returns any changes to the independent set.

In order to do this, the quadtree structure contains and maintains one path structure for each path in $\Qpaths$. The independent set it logicicaly maintains follows the static case and is the union of the approximate maximal protected independent sets from these path structures and a single square associated with each leaf in $\Qleaves$.

As such, the quadtree structure needs to make sure that there is one path structure for each path $\Path$ in $\Qpaths$, and that they are informed via their insert and delete methods of any changes in the sets $\Squares{\Pathq{q}}$. Additionally, as squares are inserted and deleted, the shape of the quadtree $\Qtree$ and thus the set $\Qpaths$ may change, and thus the quadtree structure needs to call the various operations on the path structures to reflect any such changes in the composition of $\Qpaths$.

Also, as the structure of the quadtree changes, $\Qleaves$ may change and thus any changes to the part of the independent set composed of one element of $\Squares{\node}$ for each $\node\in \Qleaves$ needs to be reported.

Finally, the quadtree structure needs to maintain the invariants of the searching structure. This needs to be called for the insertion and deletion of any squares in $\Centered$, which is easy enough. 
However, each square in the search structure is marked based on whether $\node(s)$ is a monochild, and if so, which quadrant the child is. 
As the shape of a quadtree changes, this could result in the marked status of many squares changing, and the range marking method of  searching structure need to be called.

\paragraphh{Operations: }
\begin{itemize}
    \item $\Delta I = \jop{Insert}(s)$: Inserts square $s$ and any changes to the independent set are reported
    \item $\Delta I = \jop{Delete}(s)$: Deletes square$s$ and any changes to the independent set are reported
\end{itemize}

\paragraphh{Implementation overview:} 
The implementation of the quadtree structure is the most complex part of the data structuring required, but at the same time the most standard. This is because in order to maintain the path structures and the leaves, of course one needs to know the shape of the quadtree. As the size of $\Qtree$, measured in nodes, is unbounded, it is stored in the standard compressed way where unmarked monochild nodes are contracted; such a tree has at most linear size in the marked nodes. This compressed quadtreee is explicitly stored. In each node, the squares associated with each node are stored in a linked list. 

We need to be able to identify, given a newly inserted square, where is its node. Is it in the existing quadtree, and if not where should it be added? As the quadtree is not necessarily balanced, and we are seeking polylogarithmic time, additional data structuring on top the quadtree is needed. Secondly, for each node in the quadtree we wish to logically maintain a pointer to a path structure to the representing the path it is on, or is the bottom of. However, if these pointers are maintained in the obvious way by having a single pointer from each node, a single structural change in the quadtree could cause a linear number of these pointers to change (see 3(c) Figure~\ref{fig:insertRect}).  However, fortunately, the solution to both of these algorithmic issues is the Link-Cut tree of Sleator and Tarjan \cite{DBLP:journals/jcss/SleatorT83}. We maintain a link-cut tree structure on top of the compressed quadtree. This allows the searching of the quadtree, and the obtaining and changing of the path pointers to be effectuated in logarithmic time. 

\begin{figure}
    {\centering
    \includegraphics[scale=1]{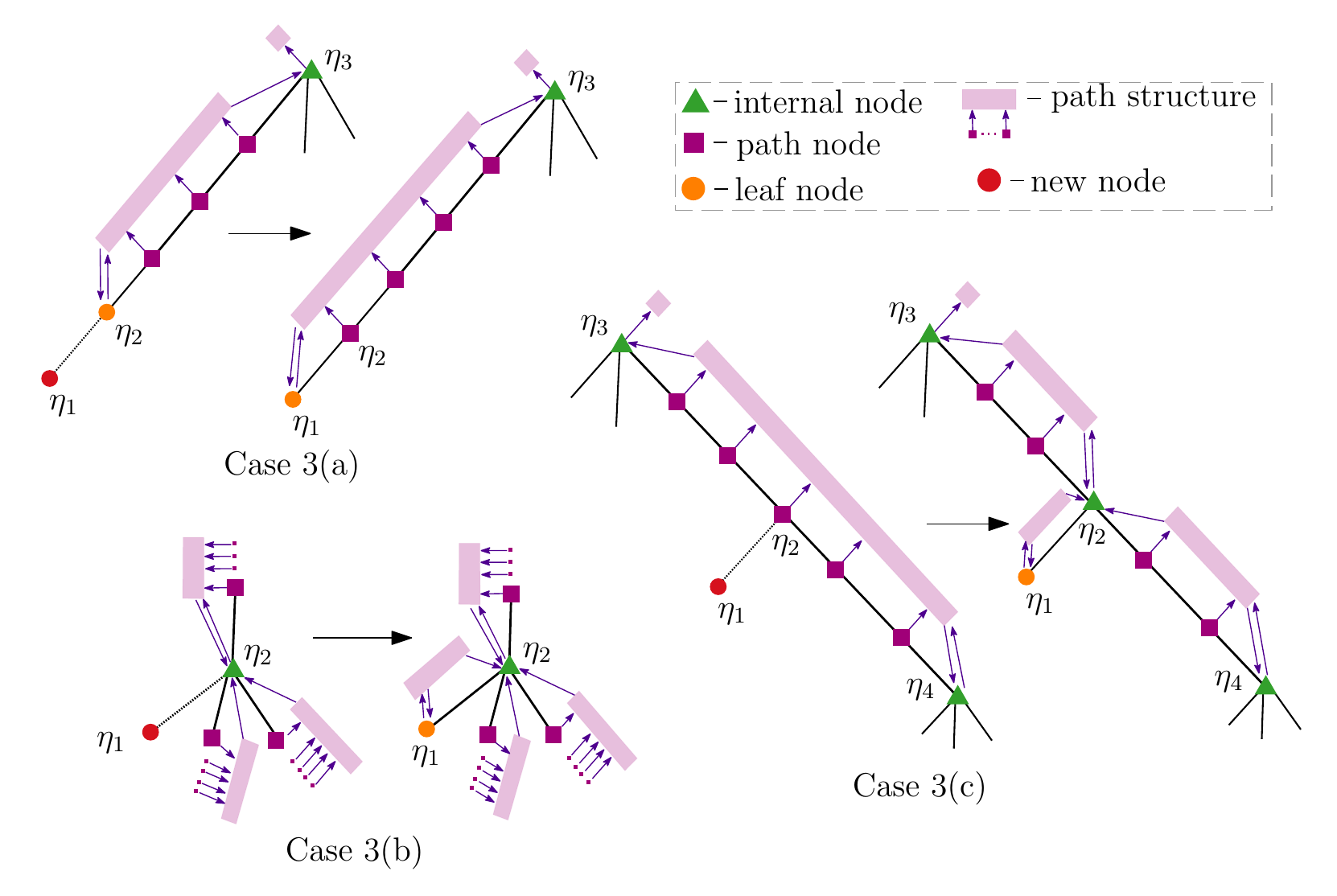}}
    \caption[fuck]{Insertion into the quadtree. The new node is $\node_1$, the LCA of the new node and the existing tree is $\node_2$. Three cases are illustrated, depending on whether $\node_2$ is a leaf, internal node, or monochild node. All cases begin with the new square being added to the searching structure with mark $\jop{None}$.

Each path structure is is illustrated, which represents a maximal sequence of monochild nodes. Each path structure has pointers to the two nodes $\Pathtop$ and $\Pathbottom$ in between which the nodes on the path lie. All nodes have a pointer to the path that they lie on or are $\Pathbottom$ of.
        }
    \label{fig:insertRect}
\end{figure}

\paragraphh{Insertion.} We now can describe in detail the insertion process of on new square $s$. See Figure~\ref{fig:insertRect} for an overview.

First the coordinates of $\node(s)$ are computed, which can be done with arithmetic and a discrete logarithm. Then it is checked whether $s$ is centered, whether it contains the center of $\node(s)$, if it is not the insertion procedure returns having done nothing.

Then, $\node(s)$ is either in the compressed quadtree or not. If not, this could be because it needs to be added as a new leaf, or it could be because it is in $\Qtree$ but as a monochild unmarked node and thus has been compressed. In any case the search is done using the link cut tree. Link-cut trees support a so-called oracle search in $O(\log n)$ time \cite{DBLP:journals/algorithmica/AronovBDGILS18}. That is, given some query, and a node in the tree, if one can compute in constant time whether the node is the answer to the query, and if not which connected component of the tree minus the node the query lies in, then the search runs in time $O(\log n)$. For searching for a node in a quadtree, this oracle is a simple geometric computation, as given a node a query node is in a child if it lies entirely in one of the node's quadrants, with the quadrant number indicating the child, else it lies in the part of the tree attached to the node's parent.

To summarize $\node(s)$ is searched for in the quadtree and the outcome is one of the following:
\begin{enumerate}[noitemsep,nolistsep]
    \item The node $\node(s)$ is already present in the compressed quadtree.
    \item The node $\node(s)$ is a compressed node, that is it is in $\Qtree$ but not in the compressed quadtree as it is monochild and is currently unmarked. Let $\node_1$ (above) and $\node_2$ (below) denote the nodes in the compressed quadtree bookending the insertion point.
    \item The node $\node(s)$ is not in the quadtree $\Qtree$. It should be attached to $\node_3$ which is$\ldots$
    \begin{enumerate}[noitemsep,nolistsep]
        \item a leaf.
        \item an node of $\Qinternal$.
        \item a marked monochild node between $\node_4$ (above) and $\node_5$ (below).
        \item an unmarked monochild node (a compressed node). Thus $\node_3$ is in $\Qtree$ but not the compressed quadtree, where it is between $\node_4$ (above) and $\node_5$ (below).
    \end{enumerate}
\end{enumerate}

Once the case has been established the changes need to carried out in three phases: 
Changes to the searching structure, structural changes to the paths and quadtree, and finally changes to the independent set caused by changes of the leaves. The order of the first two is crucial as the path operations require that changes to the searching structure to have already been effectuated.

For the searching structure, the new square needs to be added by calling $\jop{Insert}(s,m)$ with the appropriate mark. If $\node(s)$ is a new node, $\node(s)$ will be a leaf, and it should be added with $m=\jop{None}$, otherwise it should be added with the type of the node it is being added to. 

Additionally, squares that will have the type of their node change need to have the mark changed in the search structure. This can happen to the squares in $\Squares{\node_3}$ in case 3(a) and (c); in the first $\node_3$ changes from being a leaf to a monochild node and thus the mark on the squares in $\Squares{\node_3}$ must change from $\jop{None}$ to one of $\qa,\qb,\qc,\qd$ depending on how $\node(s)$ is attached to $\node_3$. In 3(c) $\node_3$ changes from being a monochild node to an internal node, thus the mark on the nodes in $\Squares{\node_3}$ must change to $\jop{None}$. In both cases a single call to $\jop{ChangeMarkOnSquaresOfNode}(\node_3,m)$ will suffice.

Now, structural changes need to be performed. 
In case 2, the node $\node(s)$ need to be added between $\node_1$ and $\node_2$ with a path pointer identical to the node below, $\node_2$.
In case 3(d), the node $\node_3$ needs to be added between $\node_4$ and $\node_5$, with a path pointer identical to the node below, $\node_5$.
Then in case 3, $\node(s)$ is created as a leaf attached to $\node_3$. In case 3(a), the path which had $\node_3$ as it bottom node needs to be extended to include $\node(s)$, this is done by calling $\jop{Extend}(\node(s))$ on the path of $\node_3$ and having the path pointer of $\node(s)$ point to this path. In cases 3(b-d) the new leaf $\node(s)$ has its path pointer pointing to a new empty path structure created with $\jop{Init}(\node_3,\Node{s})$ and that represents the empty path between the leaf $\node(s)$ and the now-internal node $\node_3$. 

In cases 2(c-d), the addition of $\node(s)$ causes $\node_3$ to switch from being a monochild node on a path, to an internal node. This requires that the path be split. Let $\node_6$ be the bottom node of the path of $\node_3$. We call $\jop{Split}(\node_3)$ on $\node_3$'s path and receive a new path structure $P_{\text{new}}$. We set the path pointers on all nodes from $\node_5$ to $\node_6$ to $P_{\text{new}}$. This is where the link-cut tree's management of the path pointers is crucial, as there could be many nodes from $\node_5$ to $\node_6$, but the link cut tree can change them all logically in logarithmic time. 

With the structural changes complete, the square $s$ is added to the end of the linked list of $\node(s)$. If $\node(s)$ is a monochild node, which could be the case in case (1) and is definitely true in case (2), then we are required to call $\jop{Insert}(s)$ on $\node(s)$'s path.

In all of the path operations, any reported changes to the the independent sets reported by the path structure must saved and returned. 

The third stage is to report any changes to the independent set as a result of the leaves of the quadtree changing, as one of the two components of the independent set is the first node in the linked list of all leaf nodes. In cases $3 (a-d)$ $\node(s)$ is a new leaf containing only $s$, so $s$ is reported as being added to the independent set. In case $3 (a)$, $\node_3$ was a leaf but is a leaf no longer, so the first square in $\node_3$ is reported as being deleted (Note that the $\jop{Extend}$ operation may result in this square being re-added to this set, this is intentional and is okay).

\old{
\begin{figure}
    {\centering
    \includegraphics[scale=1]{insertCasesRect-new.pdf}}
    \caption[fuck]{Insertion into the quadtree. The new node is $\node_1$, the LCA of the new node and the existing tree is $\node_2$. Three cases are illustrated, depending on whether $\node_2$ is a leaf, internal node, or monochild node. All cases begin with the new square being added to the searching structure with mark $\jop{None}$.

Each path structure is is illustrated, which represents a maximal sequence of monochild nodes. Each path structure has pointers to the two nodes $\Pathtop$ and $\Pathbottom$ in between which the nodes on the path lie. All nodes have a pointer to the path that they lie on or are $\Pathbottom$ of.
        }
    \label{fig:insertRect}
\end{figure}
}
\paragraphh{Deletions.}
Deletions are handled in a largely symmetric fashion which we now describe. 
As in deletions, the square $s$ is tested to see if it is centered, and if so nothing is done. Otherwise $\Node{s}$ is located in the quadtree.

\begin{enumerate}[noitemsep,nolistsep]
    \item The linked list containing $\Squares{\Node{s}}$ contains $s$ and other squares
    \item The linked list containing $\Squares{\Node{s}}$ only contains $s$
        \begin{enumerate}[noitemsep,nolistsep]
            \item $\Squares{\Node{s}}$ is a leaf with parent $\node_1$
            \begin{enumerate}[noitemsep,nolistsep]
                \item Node $\node_1$ is an monochild node.
                \item Node $\node_1$ is an internal node with two children and $\Squares{\node_2}$ is empty.
                \item Node $\node_1$ is an internal node with two children $\Squares{\node_2}$ is nonempty.
                \item Node $\node_1$ is an internal node with more than two children.
            \end{enumerate}
            \item $\Squares{\Node{s}}$ is an monochild node on a path
            \item $\Squares{\Node{s}}$ is an internal node
        \end{enumerate}    
\end{enumerate}

Once the case has been established the changes need to carried out in three phases: 
Changes to the searching structure, structural changes to the paths and quadtree, and finally changes to the independent set caused by changes of the leaves. The order of the first two is crucial as the path operations require that changes to the searching structure to have already been effectuated.

For the searching structure, the square $s$ needs to be removed by calling $\jop{Delete}(s)$.
Additionally, squares that will have the type of their node change need to have the mark changed in the search structure. This will happen in case 2(a)(i) where $\node_1$ will become a leaf and thus all squares in $\Squares{\node_1}$ need to have their mark changed to $\jop{None}$. It will also happen in case 2(a)(iii) where $\node_1$ will change from being an internal node to monochild and thus $\Squares{\node_1}$ need to have their mark changed to one of $\qa,\qb,\qc,\qd$ depending on which quadrant the other child of $\node_1$ is in.

Now, structural changes need to be performed. First $s$ is removed from the linked list of $\Node{s}$. In (1) and 2(a)(iii) the quadtree structure remains untouched. In 2(a)(iii) and 2(b) a node on a path loses its last square and thus becomes unmarked and becomes compressed; the compressed quadtree is called to reflect this, but the paths do not change.
In 2(a)(i), as $n_1$ is now a leaf, $\jop{Contract}(n_1)$ is called on its path. The most complicated cases are 2(a)(ii-iii) where the deletion causes $\node_1$ to stop being an internal node and it is merged via $\jop{Merge}$ with the paths above and below.

The third stage is to report any changes to the independent set as a result of the leaves of the quadtree changing, as one of the two components of the independent set is the first node in the linked list of all leaf nodes. This only occurs in 2(a)(ii) where $\node_1$ becomes a leaf and so the first square of $\Squares{\node_1}$ should be reported as being added to the independent set.

\paragraphh{Main result.} 

\begin{theorem}
The quadtree structure maintains a dynamic set of squares under insertion and deletion in $O(\log^5 n)$ time and reports changes to an independent subset of these squares whose size is expected to be a $4128=O(1)$-approximate factor approximation of the maximum independent set.
\end{theorem}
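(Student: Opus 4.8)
The plan is to assemble the theorem from the three structural facts already proved — Lemma~\ref{l:bds} (restricting to centered squares costs a factor $16$ in expectation), Lemma~\ref{l:combine} (combining protected independent sets on paths with one square per leaf yields a $\frac{c_1}{2c_1+c_2+1}$-approximation of $\OPT{\Centered}$), and Lemma~\ref{l:dynpath} (each path structure maintains a protected set of size $\ge\frac1{64}\OPT{\Squares{\Path}}-3$ in $O(\log^5 n)$ amortized time) — together with a verification that the quadtree structure restores, at the end of every update, exactly the invariants those lemmas require.

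\textbf{Correctness of the maintained set.} First I would argue that after each $\jop{Insert}$ or $\jop{Delete}$ the structure holds one path structure for every path $\Path\in\Qpaths$ of the current compressed quadtree $\Qtree$, each storing a protected independent set $\Ippath{\Path}=\Iapprox(\Path)\subseteq\Squares{\Path}$, and that the reported set is $\Iapprox=\bigl(\bigcup_{\Path\in\Qpaths}\Ippath{\Path}\bigr)\cup\{\text{one square of }\Squares{\node}:\node\in\Qleaves\}$ with no squares taken from internal nodes. This is precisely the object of Lemma~\ref{l:combine}, so that lemma gives at once that $\Iapprox$ is an independent set of squares. Establishing these invariants is the bulk of the argument and amounts to the case analysis for insertion (cases $1$, $2$, $3(\mathrm a)$--$(\mathrm d)$) and deletion (cases $1$, $2(\mathrm a)(\mathrm i)$--$(\mathrm{iv})$, $2(\mathrm b)$, $2(\mathrm c)$): in each case one checks that (i) the search structure is updated \emph{first}, via $\jop{Insert}$/$\jop{Delete}$ and $\jop{RangeMark}$, so the invariant ``a square is marked by its child quadrant iff $\Node{s}$ is monochild, and $\jop{None}$ otherwise'' holds; (ii) the compressed quadtree and the link-cut tree on top of it are repaired with $O(1)$ link/cut operations, and the bulk path-pointer relabelings (the nodes from $\node_5$ to $\node_6$ in a $\jop{Split}$, and dually in a $\jop{Merge}$) are effectuated in $O(\log n)$; (iii) the path-structure operations $\jop{Extend}$, $\jop{Contract}$, $\jop{Split}$, $\jop{Merge}$, $\jop{Insert}$, $\jop{Delete}$ are invoked exactly when the composition of $\Qpaths$ changes, so afterwards there is again one path structure per path and the relevant (at most one) square of the newly added/removed node is obtained through the search structure; and (iv) each change to $\Qleaves$ is reported by adding or deleting one square of the affected leaf.

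\textbf{Approximation ratio.} By Lemma~\ref{l:dynpath}, $|\Ippath{\Path}|\ge\frac1{64}\OPT{\Squares{\Path}}-3$ for each path, hence $\sum_{\Path\in\Qpaths}|\Ippath{\Path}|\ge\frac1{64}\sum_{\Path\in\Qpaths}\OPT{\Squares{\Path}}-3|\Qpaths|$, so Lemma~\ref{l:combine} applies with $c_1=\frac1{64}$ and $c_2=3$ (the factor-$4$ loss from choosing one of the four monotone quadrants, Fact~\ref{f:max4}, and the factor-$2$ loss from using a two-approximation of the largest quadrant, are already folded into this $\frac1{64}$). Then
\[
|\Iapprox|\ \ge\ \frac{c_1}{2c_1+c_2+1}\,\OPT{\Centered}\ =\ \frac{1/64}{258/64}\,\OPT{\Centered}\ =\ \frac{1}{258}\,\OPT{\Centered},
\]
and taking expectations and applying Lemma~\ref{l:bds}, $E[|\Iapprox|]\ge\frac1{258}E[\OPT{\Centered}]\ge\frac1{258}\cdot\frac1{16}\OPT{S}=\frac1{4128}\OPT{S}$, the claimed bound.

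\textbf{Running time and the main obstacle.} For the time bound I would account one update: computing $\Node{s}$ and the centeredness test are $O(1)$; locating $\Node{s}$ is an oracle search in the link-cut tree, $O(\log n)$; the structural repair does $O(1)$ link/cut and $O(1)$ bulk path-pointer updates, each $O(\log n)$; the search structure is touched $O(1)$ times, each call $O(\log^4 n)$ by the range-tree analysis; and $O(1)$ path-structure operations are invoked, each $O(\log^5 n)$ amortized by Lemma~\ref{l:dynpath} (this absorbs the two amortized sources: the periodic switch of the active red-black tree within a path, and the split/merge of the underlying dynamic interval structures). Summing gives $O(\log^5 n)$ amortized per update. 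The genuine difficulty here is not the arithmetic but the correctness bookkeeping: showing that every insertion and deletion case simultaneously restores all invariants — search-structure marks, compressed-quadtree shape, the one-path-structure-per-path correspondence, and the leaf contributions — and that the mandated ordering (search structure updated before the path structures) is respected, since the path operations rely on querying the search structure to retrieve the square of the node being moved on or off a path.
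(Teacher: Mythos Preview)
Your proposal is correct and follows essentially the same approach as the paper: assemble Lemma~\ref{l:dynpath} (giving $c_1=\tfrac{1}{64}$, $c_2=3$), plug into Lemma~\ref{l:combine} to obtain a $258$-approximation on $\Centered$, then apply Lemma~\ref{l:bds} for the factor $16$ to reach $4128$, while the runtime is bounded by noting that each update performs $O(1)$ link-cut, search-structure, and path-structure operations, each $O(\log^5 n)$ amortized. The paper's proof is terser about the invariant bookkeeping you spell out, but the substance is identical.
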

\begin{proof}
This structure only does a constant number of operations, where each is an operation on a link-cut tree, a path structure, or a search structure. These all have $O(\log^5 n)$ amortized cost. 

In lemma~\ref{l:dynpath} we showed that we can maintain an approximate protected independent set each path that was within a linear factor of optimal, with constants $c_1=\frac{1}{64}$ and $c_2=3$.
By lemma~\ref{l:combine} the approximation factor for centered squares was obtained as a function  of these constants (which were better in the static case). Specifically, Lemma~\ref{l:combine} says that we have an $\frac{c_1}{2c_1+c_2+1}=258$-approximation for centered squares.
Finally, as in theorem~\ref{t:static}, lemma~\ref{l:bds} is applied to convert the approximation factor on centered squares to an expected one on all squares, at a cost of a factor of 16.
Thus our overall approximation factor is at most 4128 in expectation. 
\end{proof}


\subsection{The Interval Query Data Structure (IQDS) for Squares}
\label{sec:squaresiqds}

In this section, we prove the following:

\begin{lemma} \label{l:squaresiqds}
We can support the following operations in the time indicated:
\begin{itemize}
    \item $\jop{Get-Interval}(\Path,q,s)$: Given a path $\Path \in \Qpaths$ and $q$, and a square $s \in \Squares{\Pathq{q}}$, report the interval $\Interval{\Pathq{q}}{s}$ in time $O(\log n)$
    \item $\jop{Report-Leftmost}(\Path,q,\ell_1,\ell_2)$: Given a path $\Path$ and quadrant $q$, among all squares in  $s \in \Squares{\Pathq{q}}$ with left endpoint of its interval $\Interval{\Pathq{q}}{s}$ in $(\ell_1,\ell_2)$ report the one with minimal right endpoint in time $O(\log^5 n)$.
    \item $\jop{Report-Rightmost}(\Path,q,r_1,r_2)$: Given a path $\Path$ and quadrant $q$, among all squares in  $s \in \Squares{\Pathq{q}}$ with right endpoint of its interval $\Interval{\Pathq{q}}{s}$ in $(r_1,r_2)$ report the one with maximal left endpoint in time $O(\log^5 n)$.
\end{itemize}
\end{lemma}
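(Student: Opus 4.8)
The plan is to implement each of the three operations by reducing the interval-endpoint coordinates $\Depth{s}$ and $\Depthmax{\Pathq{q}}{s}$ to geometric coordinates of the square $s$, and then answer the query using the four-dimensional labelled range tree (the search structure). The crucial observation is that although $\Depthmax{\Pathq{q}}{s}$ is defined combinatorially (the depth of the deepest node of $\Pathq{q}$ whose center $s$ contains), both the left and right endpoints of the interval $\Interval{\Pathq{q}}{s}$ are \emph{monotone functions of the four side-coordinates} $s.t,s.b,s.l,s.r$, once the path $\Path$ and quadrant $q$ are fixed. Indeed, by the monotone structure of $\Pathq{q}$ (Lemma~\ref{l:monotone}, first point), the centers $\Squarecenter{\node}$ of the nodes of $\Pathq{q}$, ordered by depth, form a monotone staircase in the plane; so ``$s$ contains $\Squarecenter{\node}$'' for a fixed $\node$ is simply a conjunction of four one-sided inequalities on $s$'s sides, and the set of depths $d$ for which $s$ contains $\Squarecenter{\Pathq{q}[d,d]}$ is exactly the interval $\Interval{\Pathq{q}}{s}$. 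Consequently, the endpoint $\Depth{s}$ is the depth of $\node(s)$, computable in $O(\log n)$ by a walk/oracle-search in the (link-cut-augmented) compressed quadtree, and $\Depthmax{\Pathq{q}}{s}$ is found by a single binary search along $\Pathq{q}$ (again $O(\log n)$ with the link-cut tree), checking at each candidate node whether its center lies in $s$. This yields $\jop{Get-Interval}$ in $O(\log n)$.

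\textbf{For} $\jop{Report-Leftmost}(\Path,q,\ell_1,\ell_2)$ \textbf{and} $\jop{Report-Rightmost}$, the idea is to convert the constraint ``left endpoint of $\Interval{\Pathq{q}}{s}$ lies in $(\ell_1,\ell_2)$'' into a query on the search structure restricted to the mark $q$. Since $\Depth{s}=\Depth{\node(s)}$ and the nodes of the compressed quadtree along $\Pathq{q}$ at depths in $(\ell_1,\ell_2)$ form a contiguous stretch of the monotone staircase, the condition ``$\Depth{s}\in(\ell_1,\ell_2)$'' together with ``$\node(s)\in\Pathq{q}$'' is equivalent to ``$s$ fits inside the quadtree cell at depth $\ell_1+1$ of $\Pathq{q}$ but not inside its child at depth $\ell_2$'', i.e. a constant number of axis-aligned box constraints on $(s.t,s.b,s.l,s.r)$, plus the mark being $q$. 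Among all such squares we then want the one minimizing the right endpoint $\Depthmax{\Pathq{q}}{s}$; minimizing this depth corresponds (by monotonicity of the staircase, and the fact that a square's deepest contained center is determined by which side of $s$ is ``tightest'' relative to the staircase direction of $q$) to maximizing/minimizing one coordinate of $s$ among the candidate set. So the whole operation is one labelled range query with a fifth (priority) criterion, which the range tree supports by the standard augmentation with $O(\log^4 n)$ overhead, giving $O(\log^4 n)$; the extra log factor to $O(\log^5 n)$ comes from the $O(\log n)$ quadtree/link-cut navigation needed to translate $\ell_1,\ell_2$ into the bounding cells of $\Pathq{q}$ (and, if the marks stored in the range structure are not already refined per-quadrant, an additional filtering level).

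\textbf{The main obstacle} I expect is carefully establishing the ``endpoint is a monotone function of one side-coordinate'' claim in the correct direction for each of the four quadrants $q\in\{\qa,\qb,\qc,\qd\}$ — that is, verifying that $\Depthmax{\Pathq{q}}{s}$ really is determined by a single extremal side of $s$ and that minimizing the right endpoint reduces to an extremal-coordinate selection compatible with what the range tree can optimize. This is the same kind of geometric bookkeeping as in Figure~\ref{fig:monotone}: relative to the monotone direction of $\Pathq{q}$ (say lower-right for $\qd$), whether $s$ contains $\Squarecenter{\node}$ is governed, for deep $\node$, by the lower-left corner of $s$, so $\Depthmax{\Pathq{q}}{s}$ is a monotone function of, essentially, $s.l$ and $s.b$ jointly; handling the case where the two coordinates ``compete'' requires splitting the candidate set by which one is binding — still a constant number of range-tree queries. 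Once this case analysis is done for one quadrant, the other three follow by the symmetries of the quadtree. Everything else — the $O(\log n)$ quadtree navigation via the oracle search of \cite{DBLP:journals/algorithmica/AronovBDGILS18}, the $O(\log^4 n)$ range-tree cost, and the amortized link-cut bounds — is standard and plugs directly into the bounds already claimed in Lemma~\ref{l:dynpath}.
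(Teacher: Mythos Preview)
Your treatment of $\jop{Get-Interval}$ matches the paper exactly: compute $\Node{s}$ arithmetically and then binary search along $\Pathq{q}$ via the link-cut tree to locate $\Depthmax{\Pathq{q}}{s}$.

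For $\jop{Report-Leftmost}$ and $\jop{Report-Rightmost}$, however, your route diverges from the paper's. The paper does \emph{not} try to optimize a side-coordinate directly inside the range tree. Instead it first proves a clean geometric equivalence (Lemma~\ref{l:geomofintervals}): a square $s\in\Squares{\Pathq{q}}$ has $\Interval{\Pathq{q}}{s}=[\ell,r]$ with $\ell\in[\ell_1,\ell_2]$ and $r\in[r_1,r_2]$ if and only if one corner of $s$ lies in an L-shaped region $R_\ell$ and the opposite corner lies in another L-shaped region $R_r$ (both intersected with $\Square{(\node_{\text{top}})}$, together with the mark-$q$ filter). After decomposing each L into two rectangles, this becomes at most four calls to $\jop{RangeSearch}$ in the search structure, answering the \emph{existence} question in $O(\log^4 n)$ (Lemma~\ref{l:algogeom}). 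The optimization --- smallest $r$ or largest $\ell$ --- is then obtained by an outer \emph{binary search} over the $O(n)$ candidate endpoint depths. That binary search is where the fifth logarithm actually comes from, not from coordinate translation as you suggest.

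Your direct-optimization idea can in fact be made to work and would shave a $\log$: since for a fixed $\Pathq{q}$ one has $\Depthmax{\Pathq{q}}{s}=\min\bigl(f(s_{c_1}),g(s_{c_2})\bigr)$ for monotone functions $f,g$ of two specific side-coordinates $c_1,c_2$ (which two depends on $q$), the square minimizing $\Depthmax{\Pathq{q}}{\cdot}$ over any candidate set is always one of the two squares with extremal $s_{c_1}$ or extremal $s_{c_2}$, so two priority range queries suffice. Two caveats, though. First, this needs the search structure to support extremal-coordinate reporting, not just the $\jop{RangeSearch}$ the paper defines; the augmentation is standard but is an addition to what the paper builds. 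Second, your sketch of the left-endpoint constraint (``fits inside the cell at depth $\ell_1{+}1$ but not inside its child at depth $\ell_2$'') is not quite the right predicate and, in particular, does not by itself pin the square to \emph{this} path $\Path$ rather than to some descendant path that also carries type-$q$ nodes with the same mark. The L-shaped characterization of Lemma~\ref{l:geomofintervals} --- which simultaneously encodes the corner constraints \emph{and} forces $\Node{s}$ onto $\Path$ --- is the real content of the proof, and you would still need it (with $r_1=\ell_2$ and $r_2$ set to the deepest depth on $\Pathq{q}$) before you can run your priority queries.
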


The above shows that the queries of Interval Query Data Structure (IQDS) can be answered for each dynamic interval structure, of which there is one associated with each monotone path $\Pathq{q}$. The path and quadrant serves as the identifier of which interval data structure the query is to be executed on.

This is a key to the efficiency of our method. We are able to store all of the squares in one data structure, the search structure, so that the intervals of squares can be easily computed on the fly given the top and bottom and quadrant number of the monotone path they lie on. In this way, seemingly impossible changes like when during a structural change to the quadtree causes two paths to merge, and many intervals associated with squares grow, are easily handled as the search structure already has enough information to query the intervals of any monotone path that is consistent with the squares stored without any changes.

It also means that split, merge, insert, and delete need not be directly implemented in the IQDS as used for dynamic squares, they can return having done nothing. This is because in implementing the IQDS here we have access to the search structure, which is maintained by the quadtree structure to contain all of the squares and marks.
Thus the search structure and as well as the top and bottom of the path being queried and quadrant is the only thing needed to answer a query, and this information is in the monotone path that owns each instance of the dynamic intervals structure that calls the IQDS. 

We now go through a series of technical lemmas that prove the above Lemma~\ref{l:squaresiqds} beginning with a lemma that shows a mapping between squares on a given monotone path with intervals in a certain range and a region of 4-dimensional space.
The reader is encouraged to see Figure~\ref{fig:twoells} which provides motivation for the first technical lemma.

\begin{figure}
\begin{minipage}[c]{3in}
\includegraphics[width=\textwidth, clip=true, trim=0 0 0 1pc   ]{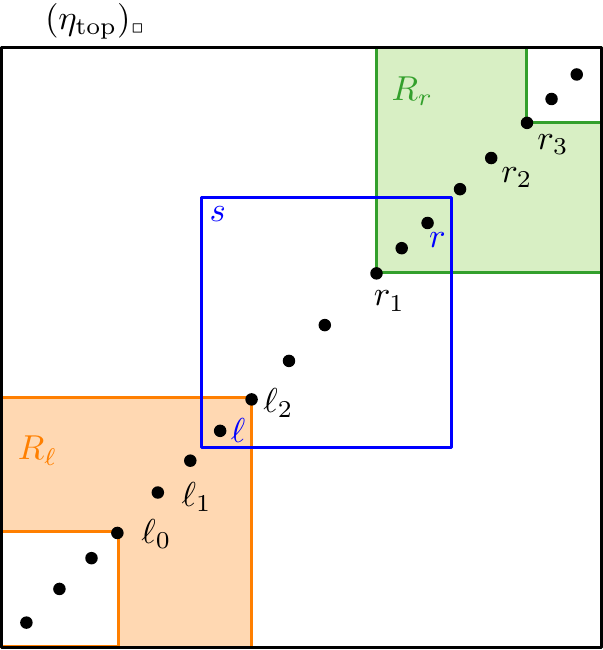}
\end{minipage}
\hspace{1pc}
\begin{minipage}[c]{3in}
\caption{The black points represent the centers of the squares of nodes of some monotone path $\Pathq{\qa}$, which are monotone and increasing in depth as you go to the upper right. The basic fact illustrated is given a square $s$, the  following two statements are equivalent: (1) The first of the centers $s$ intersects is of a node of depth $\ell$ between $\ell_1$ and $\ell_2$ and the last of these centers $s$ intersects is a node of depth $r$ is between $r_1$ and $r_2$. (2) The lower left corner of $s$ must be in the region labelled $R_\ell$ and the upper-right corner must be in the region labelled $R_r$.}
    \label{fig:twoells}
\end{minipage}
\end{figure}

\begin{lemma} \label{l:geomofintervals}
Given an $\Path \in \Qpaths$, a quadrant $q$ assumed without loss of generality to be $\qa$, and values $\ell_1 \leq \ell_2 \leq r_1 \leq r_2]$ where there are nodes on $\Pathq{q}$ with depths $\ell_1,\ell_2,r_1$ and $r_2$: 
\begin{itemize}[noitemsep,nolistsep]
\item Let $\Pathq{\qa}[d]$ be the node on the monotone path of the depth $d$, if it exists.
\item Let $\ell_0$ be the depth of the next shallowest node on $\Pathq{\qa}$ before $\Pathq{\qa}[\ell_1]$
which will be $-\infty$ if $\Pathq{\qa}[\ell_1]$ is already the shallowest node of type $\qa$ in $\Path$.
\item Let $r_3$ be the depth of next deepest node on $\Pathq{\qa}$ after $\Pathq{\qa}[r_2]$, which will be $\infty$ if $\Pathq{\qa}[r_2]$ is already the deepest node of type $\qa$ in $\Path$.
\item Let $\node_{\text{top}} \coloneqq \Path[d[\Pathtop]+1]$ be the top node on the path $\Path$, thus a child of the internal node $\Pathtop$ that serves to define $\Path.$
\item Let $R_{\ell}$ be the region that is in the lower-left quarter-plane relative to $\Squarecenter{\Path[\ell_2]}$ but not in the lower-left quarter-plane relative to $\Squarecenter{\Path[\ell_0]}$ (the second condition is omitted if $\ell_0 =  - \infty$).
\item Let $R_r$ be the region that is in the upper-left quarter-plane relative to $\Squarecenter{\Path[r_1]}$ but not in the upper-left quarter-plane relative to $\Squarecenter{\Path[r_3]}$ (the second condition is omitted if $r_3=\infty$).
\end{itemize}
\noindent Then:
\begin{itemize}[noitemsep,nolistsep]
    \item Any square $s \in \Pathq{\qa}
    $ with $\Interval{\Pathq{\qa}}{s} 
    =[\ell,r]$, where $\ell \in [\ell_1,\ell_2]$ and $r \in [r_1,r_2]$ has its lower left endpoint in $R_{\ell} \cap \Square{(\node_{\text{top}})}$ and its upper right endpoint is in $R_{r} \cap \Square{(\node_{\text{top}})}$
\item For any square $s$ if
\begin{itemize}[noitemsep,nolistsep]
\item $\Node{s}$ is a monochild node of type $\qa$ and
\item $s$'s lower left endpoint is in $R_{\ell} \cap \Square{(\node_{\text{top}})}$ and 
\item $s$'s upper right endpoint is in $R_{r} \cap \Square{(\node_{\text{top}})}$
\end{itemize}
\noindent then:
\begin{itemize}[noitemsep,nolistsep]
    \item $\Node{s}\in \Pathq{\qa}$ and
    \item $\Interval{\Pathq{\qa}}{s} = [\ell,r]$, where $\ell\in[\ell_1,\ell_2]$ and $r\in[r_1,r_2]$.
\end{itemize}
\end{itemize}
\end{lemma}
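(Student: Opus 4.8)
The plan is to prove Lemma~\ref{l:geomofintervals} by carefully translating the two geometric conditions on a square $s$ — ``lower-left corner in $R_\ell \cap \Square{(\node_{\text{top}})}$'' and ``upper-right corner in $R_r \cap \Square{(\node_{\text{top}})}$'' — into the combinatorial conditions ``$\Node{s} \in \Pathq{\qa}$'' and ``$\Depth{s} \in [\ell_1,\ell_2]$, $\Depthmax{\Pathq{\qa}}{s} \in [r_1,r_2]$'', and vice versa. Throughout I would fix $q=\qa$, so all succeeding nodes on $\Pathq{\qa}$ lie in the lower-left quadrant of all preceding ones, and hence by Lemma~\ref{l:monotone} the sequence of centers $\Squarecenter{\Path[d]}$ is monotone decreasing in both coordinates as depth $d$ increases. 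The key elementary fact I will use repeatedly: for an axis-aligned square $s$, ``$s$ contains the point $p$'' is equivalent to ``$s$'s lower-left corner is weakly below-and-left of $p$ \emph{and} $s$'s upper-right corner is weakly above-and-right of $p$''. Combined with the monotonicity of the centers, this lets me convert ``which centers does $s$ intersect'' into bounding-box conditions on the two corners of $s$.

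First I would handle the forward direction. Suppose $s \in \Squares{\Pathq{\qa}}$ with $\Interval{\Pathq{\qa}}{s} = [\ell,r]$, $\ell \in [\ell_1,\ell_2]$, $r \in [r_1,r_2]$. By definition of $\Depth{s}$ and centeredness, $\Squarecenter{\Path[\ell]}$ is the \emph{first} center on $\Pathq{\qa}$ that $s$ meets, so $s$ misses $\Squarecenter{\Path[\ell_0]}$ (which is strictly shallower, since $\ell_0 < \ell_1 \le \ell$). Because the center $\Squarecenter{\Path[\ell_0]}$ lies strictly up-and-right of $\Squarecenter{\Path[\ell]}$, and $s$ contains the latter but not the former, the only way an axis-aligned square can do this is to have its lower-left corner up-and-right of $\Squarecenter{\Path[\ell_0]}$ in at least one coordinate — precisely the ``not in the lower-left quarter-plane of $\Squarecenter{\Path[\ell_0]}$'' condition. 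Similarly, $s$ containing $\Squarecenter{\Path[\ell]}$ with $\ell \le \ell_2$ forces, via monotonicity ($\Squarecenter{\Path[\ell_2]}$ is down-and-left of $\Squarecenter{\Path[\ell]}$, and $s$ contains $\Squarecenter{\Path[\ell]}$ hence its lower-left corner is down-and-left of $\Squarecenter{\Path[\ell]}$, hence down-and-left of... no), so I need to be slightly careful here: the correct chain is that $s$'s lower-left corner is down-and-left of every center $s$ contains, and since $\ell \le \ell_2$, $s$ contains $\Squarecenter{\Path[\ell]}$ which is up-and-right of $\Squarecenter{\Path[\ell_2]}$ only if $\ell \ge \ell_2$ — so instead I argue: $s$'s lower-left corner is down-and-left of $\Squarecenter{\Path[\ell]}$, and since $\ell \le \ell_2$... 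I'll resolve this by using that $s$ does intersect $\Squarecenter{\Path[\ell_2]}$ too when $\ell \le \ell_2 \le \Depthmax{\Pathq{\qa}}{s}$ (which holds since $\ell_2 \le r_1 \le r$), by the second bullet of Lemma~\ref{l:monotone} ($s$ intersects all centers with depth in its interval). Hence the lower-left corner of $s$ is down-and-left of $\Squarecenter{\Path[\ell_2]}$, giving membership in the lower-left quarter-plane of $\Squarecenter{\Path[\ell_2]}$, i.e. $s \in R_\ell$. The symmetric argument with $r_1, r_3$ and the upper-right corner gives $s \in R_r$; and $s \subseteq \Square{\Node{s}} \subseteq \Square{(\node_{\text{top}})}$ gives containment of both corners in $\Square{(\node_{\text{top}})}$.

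Then I would do the reverse direction, which is the crux. Given $s$ with $\Node{s}$ a monochild node of type $\qa$, lower-left corner in $R_\ell \cap \Square{(\node_{\text{top}})}$, upper-right corner in $R_r \cap \Square{(\node_{\text{top}})}$, I must show $\Node{s} \in \Pathq{\qa}$ and the interval endpoints land in the required ranges. First, containment of both corners in $\Square{(\node_{\text{top}})}$ means $s \subseteq \Square{(\node_{\text{top}})}$, so $\Node{s}$ is $\node_{\text{top}}$ or a descendant; combined with $\Node{s}$ being a type-$\qa$ monochild node, and the fact that the only type-$\qa$ monochild descendants of $\node_{\text{top}}$ forming a contiguous run are exactly those on $\Pathq{\qa}$ — I'd want to pin down why $\Node{s}$ can't be some type-$\qa$ monochild node hanging off a different part of the quadtree; the region conditions $R_\ell, R_r$ will force $\Node{s}$'s position. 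The ``$s$ contains $\Squarecenter{\Path[\ell_2]}$ is false but... '' — actually from $s \in R_\ell$ the lower-left corner being down-and-left of $\Squarecenter{\Path[\ell_2]}$ means $s$'s lower-left corner is in $\Square{\Path[\ell_2]}$'s lower-left quadrant region extended, which together with the upper-right condition sandwiches $\Node{s}$ between depths $\ell_1$ and ... roughly: being in $R_\ell$ forces $s$ to intersect $\Squarecenter{\Path[\ell_1]}$ but not $\Squarecenter{\Path[\ell_0]}$, so the shallowest center of $\Pathq{\qa}$ that $s$ meets has depth in $[\ell_1,\ell_2]$, which is $\Depth{s}$ (using centeredness to identify the shallowest intersected center with $\Depth{s}$); symmetrically $R_r$ pins $\Depthmax{\Pathq{\qa}}{s} \in [r_1,r_2]$, and the existence of an intersected center at all certifies $\Node{s} \in \Pathq{\qa}$.

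The main obstacle I anticipate is the bookkeeping around the ``off by one'' sentinel depths $\ell_0, \ell_2, r_1, r_3$ and the strict-versus-weak inequalities: getting exactly right which center $s$ is required to contain and which it is required to miss, and making sure the quarter-plane containments translate to the right half-open depth intervals. In particular, I must use centeredness (every square on the path contains its own node's center, Lemma~\ref{l:monotone}, second bullet) to identify ``shallowest center of $\Pathq{\qa}$ intersected by $s$'' with $\Depth{s}$ — this is the step that uses that $s$ is centered and not merely that it lives on the path — and I must handle the $\ell_0 = -\infty$ and $r_3 = \infty$ corner cases, where the ``not in the quarter-plane'' clause is vacuous, separately but in exactly the same spirit. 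A clean way to organize all of this is to first prove a single helper claim: for a square $s$ with $\Node{s}$ on $\Pathq{\qa}$ and any depths $d < d'$ on $\Pathq{\qa}$, ``$s$'s lower-left corner lies in the lower-left quarter-plane of $\Squarecenter{\Path[d']}$ but not of $\Squarecenter{\Path[d]}$'' is equivalent to ``$d < \Depth{s} \le d'$'' (and symmetrically for the upper-right corner and $\Depthmax{\Pathq{\qa}}{s}$), then apply it twice. I expect the write-up to be routine once that helper claim is stated, with the figure (Figure~\ref{fig:twoells}) doing most of the intuitive work.
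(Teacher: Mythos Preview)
Your approach is the same as the paper's: translate the interval constraints into corner-position constraints via the monotonicity of the node centers, and for the converse sandwich $\Node{s}$ between $\node_{\text{top}}$ (from $s\subseteq\Square{(\node_{\text{top}})}$) and a node on $\Path$ whose center $s$ contains. The one step you should tighten is in the reverse direction, where you write ``being in $R_\ell$ forces $s$ to intersect $\Squarecenter{\Path[\ell_1]}$'': $R_\ell$ is only a condition on the lower-left corner and by itself forces $s$ to contain no point at all. The paper's clean fix is to note that $R_\ell$ lies entirely in the lower-left quarter-plane of $\Squarecenter{\Path[r_1]}$ (since $\ell_2\le r_1$ and the centers are monotone along $\Pathq{\qa}$) while $R_r$ lies entirely in the upper-right quarter-plane of that same point, so the two corner conditions \emph{together} force $\Squarecenter{\Path[r_1]}\in s$; this gives $\Node{s}$ an ancestor of $\Path[r_1]$ and hence on $\Path$, after which your helper claim (or the paper's four-case contrapositive) finishes the bounds on $\ell$ and $r$. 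Your stated orientation for $\qa$ (centers decreasing with depth) is the opposite of the paper's, but that is harmless as the argument is symmetric.
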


\begin{proof}
We prove the first claim first and assume we have a square $s \in \Pathq{\qa}$ with $\Interval{\Pathq{\qa}}{s}=[\ell,r]$ where $\ell \in [\ell_1,\ell_2]$ and $r \in [r_1,r_2]$.

\begin{itemize}[noitemsep,nolistsep]
\item The square $s$ is contained in $\Node{s}$ by definition. The region $\Square{\Node{s}}$ is contained in $\Square{(\node_{\text{top}})}$ as $\node_{\text{top}}$ is an ancestor (or equal to) $\Node{s}$ in the quadtree. Thus, the square $s$ is in $\Square{(\node_\text{top})}$.
\item The lower-left endpoint of $s$ is in the lower-left quadrant of $\Path[\ell]=\Node{s}$. Any centers nodes on the path $\Path$ deeper then $\Path[\ell]$ will be contained in the upper-left quadrant of $\Path[\ell]=\Node{s}$. Thus the lower-left endpoint of $s$ is in the lower left quarter-plane relative to $\Squarecenter{\Path[\ell_2]}$ as $\ell_2 \geq \ell$.
\item If $\ell_0$ is defined, $s$ can not be in the lower-left quarter-plane relative to $\ell_0$. This is because then $s$ would include $\ell_0$ (as we know $s$ includes $\ell_2$ which is to the upper-left of $\ell_0$), and this would mean that $s$ includes the center of a node higher in the quadtree than $\Node{s}$, a contradiction with $\ell = d(\Node{s})$.
\item The upper-right endpoint of $s$ is in the upper-right quadrant of $\Path[r]$. Any nodes on the path $\Pathq{\qa}$ with depth at most $r$  will contain $\Square{\Path[r]}$ in their upper-right quadrant. 
Thus the upper-right endpoint of $s$ is in the upper-right quarter-plane relative to $\Squarecenter{\Path[r_1]}$ as $r_1 \leq r$.
\item If $r_3$ is defined, $s$ can not be in the upper-left quarter-plane relative to $r_3$. 
This is because then $s$ would include $r_3$ (as we know $s$ includes $r_2$ which is to the lower-right of $r_3$), and this would mean that $s$ includes the center of a node on $\Pathq{\qa}$ deeper then $r$, a contradiction.
\end{itemize}

To prove the second claim we assume we have a square $s$ where 
$n(s)$ is a monochild node of type $\qa$,
$s$'s lower left endpoint is in $R_{\ell} \cap \Square{(\node_{\text{top}})}$ and 
$s$'s upper right endpoint is in $R_{r} \cap \Square{(\node_{\text{top}})}$.
As $s$ lies entirely in $\Square{(\node_{\text{top}})}$ then $\Node{s}$ is either on $\Path$ or is a descendent. But as the square $s$ includes $\Squarecenter{\Path [r_1]}$ ($R_r$ is entirely up the upper-right of $r_1$ and $R_{\ell}$ is entirely to the lower-left), we know that $\Node{s}$ is $\Path[r_1]$ or an ancestor. Thus $\Node{s}$ is on $\Path$, and as it is of type $\qa$, on $\Pathq{\qa}$.
From the geometry of the regions $R_1$ and $R_2$ we know that $\Interval{\Pathq{\qa}}{s} = [\ell,r]$, where $\ell\in[\ell_1,\ell_2]$ and $r\in[r_1,r_2]$. 
If $\ell< \ell_1$, $\ell_0$ must be defined and then the lower-left corner of the square would be to the lower-left of $\ell_0$ and thus not in $R_\ell$. 
If $\ell>\ell_2$, then the lower-left corner of $s$ would not be to the lower-left of $\ell_2$ and thus would not be in $R_\ell$. 
If $r<r_1$, then the upper-right corner of $s$ would not be to the upper-right of $r_1$ and thus not include $R_r$.
If $r>r_2$, $r_3$ must be defined and the upper right corner of $s$ would be to the upper-right of $r_3$ and thus not include $R_r$.
\end{proof} 

With this technical lemma in had, now we show how with small amount of manipulation, the claims of main lemma of this section, Lemma~\ref{l:squaresiqds}, can be proven.

\begin{lemma} \label{l:algogeom}
Given an $\Path \in \Qpaths$, a quadrant $q$ assumed without loss of generality to be $\qa$, and values $\ell_1 \leq \ell_2 \leq r_1 \leq r_2]$ where there are nodes on $\Pathq{q}$ with the four depths, 
one can determine in $O(\log^4 n)$ time whether there is some square $s$ with $\Node{s} \in \Pathq{q}$ and with $\Interval{\Pathq{\qa}}{s} = [\ell,r]$ where $\ell \in [\ell_1,\ell_2]$ and $r \in [r_1,r_2]$, and if so, report the square.
\end{lemma}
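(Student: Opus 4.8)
The plan is to reduce the question to two orthogonal range queries in a constant-dimensional space, using Lemma~\ref{l:geomofintervals} as the geometric dictionary, and then invoke the search structure. First I would compute, in $O(\log n)$ time each, the handful of auxiliary quantities that Lemma~\ref{l:geomofintervals} refers to: the nodes $\Pathq{\qa}[\ell_1],\Pathq{\qa}[\ell_2],\Pathq{\qa}[r_1],\Pathq{\qa}[r_2]$ and their centers; the next-shallower depth $\ell_0$ before $\Pathq{\qa}[\ell_1]$ and the next-deeper depth $r_3$ after $\Pathq{\qa}[r_2]$ (using $-\infty$/$\infty$ in the boundary cases); and the top node $\node_{\text{top}}$ of the path. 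All of these are obtained by walking the path/quadtree via the link-cut tree maintained by the quadtree structure, which supports the needed predecessor/successor and ancestor queries in $O(\log n)$ time. From these I get explicit coordinates describing the regions $R_\ell$ and $R_r$ — each of $R_\ell$, $R_r$ is the difference of two axis-aligned quarter-planes, hence is determined by a constant number of linear inequalities on the coordinates of a point — together with the bounding box $\Square{(\node_{\text{top}})}$.

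Next I would translate "the lower-left corner of $s$ lies in $R_\ell \cap \Square{(\node_{\text{top}})}$ and the upper-right corner of $s$ lies in $R_r \cap \Square{(\node_{\text{top}})}$" into a single orthogonal range constraint on the $4$-dimensional point $(s.l, s.b, s.r, s.t)$ that represents $s$ in the search structure. Concretely, membership of $(s.l,s.b)$ in $R_\ell \cap \Square{(\node_{\text{top}})}$ is the conjunction of the constraints $s.l \le \Squarecenter{\Path[\ell_2]}_x$, $s.b \le \Squarecenter{\Path[\ell_2]}_y$, the negation of ($s.l \le \Squarecenter{\Path[\ell_0]}_x$ and $s.b \le \Squarecenter{\Path[\ell_0]}_y$) when $\ell_0 > -\infty$, and $s.l, s.b$ within $\Square{(\node_{\text{top}})}$; symmetrically for $(s.r,s.t)$ in $R_r \cap \Square{(\node_{\text{top}})}$. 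The only non-axis-parallel-looking piece is the negation of a quarter-plane, i.e.\ "$s.l > c_x$ or $s.b > c_y$"; this is handled by issuing the query as a constant number (at most two for each of the $\ell$- and $r$-side, hence at most four total) of genuine axis-parallel box queries that partition the region, and combining the answers. Restricting every one of these box queries to marked squares whose mark equals $\qa$ — which the search structure supports directly via the mark parameter of $\jop{RangeSearch}$ — exactly captures the extra hypothesis "$\Node{s}$ is a monochild node of type $\qa$" that the second bullet of Lemma~\ref{l:geomofintervals} requires. Thus by the two implications of Lemma~\ref{l:geomofintervals}, the set of squares satisfying the box-and-mark constraints is precisely the set of $s$ with $\Node{s}\in\Pathq{\qa}$ and $\Interval{\Pathq{\qa}}{s}=[\ell,r]$, $\ell\in[\ell_1,\ell_2]$, $r\in[r_1,r_2]$.

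Finally, I would run $\jop{RangeSearch}$ (the mark-restricted $4$-dimensional range query of the search structure) for each of the constant number of boxes; if any returns a square, report it, otherwise report that none exists. Correctness is immediate from the equivalence just established, and the running time is $O(\log^4 n)$: each $\jop{RangeSearch}$ costs $O(\log^4 n)$ on the $4$-D range tree, the link-cut-tree preprocessing costs $O(\log n)$, and there are only $O(1)$ such queries. I expect the main obstacle to be purely bookkeeping: carefully checking the degenerate cases ($\ell_0=-\infty$ or $r_3=\infty$, or some of $\ell_1,\ell_2,r_1,r_2$ coinciding, or $\Squarecenter{\Path[\ell_2]}$ coinciding with $\Squarecenter{\Path[\ell_0]}$ being impossible since depths differ) so that the box decomposition of $R_\ell\cap\Square{(\node_{\text{top}})}$ and $R_r\cap\Square{(\node_{\text{top}})}$ is valid, and making sure the "not in the lower-left quarter-plane relative to $\Path[\ell_0]$" clause is split into axis-parallel pieces without double-counting or missing squares. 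Everything else is a direct application of Lemma~\ref{l:geomofintervals} and the advertised complexity of the search structure.
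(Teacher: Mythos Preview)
Your proposal is correct and follows essentially the same approach as the paper: invoke Lemma~\ref{l:geomofintervals} to translate the interval constraints into the geometric condition that the lower-left corner lies in $R_\ell\cap\Square{(\node_{\text{top}})}$ and the upper-right corner in $R_r\cap\Square{(\node_{\text{top}})}$, observe that each of these planar regions is an L-shape (or a rectangle in the boundary cases), decompose each into at most two rectangles, and issue the resulting $\leq 4$ calls to $\jop{RangeSearch}$ restricted to mark $\qa$. The paper's own proof is considerably terser---it does not spell out the link-cut-tree preprocessing, the explicit use of the mark parameter, or the degenerate cases---but the substance is identical.
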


\begin{proof} \label{l:geomsearch}
From Lemma~\ref{l:geomofintervals}, we know that any such square must have one corner in the regions 
$R_1$ and the other in $R_2$, as defined in that lemma. Both of these ranges are easily computed, and are of constant complexity, being either rectangles or L-shapes. Thus after decomposing each L shape into two rectangles, four calls to $\jop{RangeSearch}$ in the search structure suffice to answer the query.
\end{proof}

\begin{lemma}
Given an $\Path \in \Qpaths$, a quadrant $q$, and values $\ell_1 \leq \ell_2]$, 
one can determine in $O(\log^5 n)$ time whether there is some square $s$ with $\Node{s} \in \Pathq{q}$ and with $\Interval{\Pathq{\qa}}{s} = [\ell,r]$ where $\ell \in [\ell_1,\ell_2]$ and among all such squares has minimal $r$.
\end{lemma}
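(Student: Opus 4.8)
The plan is to reduce the query to a binary search driven by a decision procedure of the same flavour as Lemma~\ref{l:algogeom}. The minimal right endpoint $r^{\ast}$ we seek is necessarily the depth of some node on $\Pathq{q}$, so it suffices to binary search over the at most $|\Pathq{q}| = O(n)$ depths realised on $\Pathq{q}$ (accessible in sorted order through the path/quadtree bookkeeping). For a threshold depth $d$, the predicate ``there is a square $s$ with $\Node{s}\in\Pathq{q}$, $\Depth{s}\in[\ell_1,\ell_2]$, and $\Depthmax{\Pathq{q}}{s}\le d$'' is monotone in $d$, so $O(\log n)$ decisions locate the smallest such $d$; that value is $r^{\ast}$, and running the decision procedure once more at $d=r^{\ast}$ in its range-reporting form returns the witnessing square. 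Each decision will cost $O(\log^4 n)$, for a total of $O(\log^5 n)$.

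To implement one decision in $O(\log^4 n)$, I combine the geometric characterisation of Lemma~\ref{l:geomofintervals} with the monotonicity of $\Pathq{q}$ (first bullet of Lemma~\ref{l:monotone}). The constraint $\Depth{s}\in[\ell_1,\ell_2]$ together with ``$\Node{s}$ is a monochild node marked $q$'' is, by Lemma~\ref{l:geomofintervals}, equivalent to the lower-left corner of $s$ lying in the constant-complexity region $R_{\ell}\cap\Square{(\node_{\text{top}})}$. For the upper bound $\Depthmax{\Pathq{q}}{s}\le d$: since the centres of the nodes of $\Pathq{q}$ are monotone, any axis-aligned square contains exactly a contiguous block of them, starting at $\Squarecenter{\Node{s}}$; hence $\Depthmax{\Pathq{q}}{s}\le d$ is equivalent to ``$s$ does not contain $\Squarecenter{\Path[d^{+}]}$'', where $d^{+}$ is the next depth of $\Pathq{q}$ below $d$ (the condition is vacuous when $d$ is the deepest such depth). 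Now ``$s$ does not contain $\Squarecenter{\Path[d^{+}]}$'' is a union of four conditions, one per side of $s$ sitting on the correct side of the vertical or horizontal line through $\Squarecenter{\Path[d^{+}]}$. Each of these sub-conditions, conjoined with the box/L-shape membership of the lower-left corner of $s$ and the mark $q$, is a four-dimensional orthogonal range-emptiness (and, when nonempty, range-reporting) query on the search structure, answerable in $O(\log^4 n)$. A constant number of such queries implements the decision, and reuses $\node_{\text{top}}$, $R_{\ell}$, $d^{+}$ (and the shallow neighbour $\ell_0$ entering the definition of $R_{\ell}$), all computable from the path endpoints in $O(\log n)$ time.

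I expect the main obstacle to be precisely the squares whose entire interval $\Interval{\Pathq{q}}{s}$ lies inside $[\ell_1,\ell_2]$, i.e.\ those with $r<\ell_2$: for these the admissible range of left endpoints and the admissible range of right endpoints overlap, so one cannot apply Lemma~\ref{l:algogeom} in the obvious way (its two ranges must be separated), and a naive divide-and-conquer on the depth window degenerates to linear time. The fix is exactly the reformulation above: ``$\Depthmax{\Pathq{q}}{s}\le d$'' is a purely local statement about a single corner of $s$, never needs $d\ge\ell_2$, and is equivalent to the intended global bound only because the centres along $\Pathq{q}$ are monotone (Lemma~\ref{l:monotone}). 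Everything else --- rounding $\ell_1,\ell_2$ to genuine path depths, retrieving the ranked path depths for the binary search, and decomposing the L-shaped region $R_{\ell}$ into $O(1)$ boxes --- is routine bookkeeping absorbed into the stated bound.
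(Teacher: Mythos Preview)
Your proposal is correct and follows the paper's own proof, which is the one-liner ``use Lemma~\ref{l:algogeom} and binary search to find the minimal $r$.'' The subtlety you flag about the possibility $r<\ell_2$ violating the stated precondition $\ell_2\le r_1$ of Lemma~\ref{l:algogeom} is real and glossed over in the paper; your workaround via the single lower-left-corner constraint plus ``$s$ misses the next centre $\Squarecenter{\Path[d^{+}]}$'' is valid, though your appeal to Lemma~\ref{l:geomofintervals} for the one-corner equivalence is slightly loose (that lemma as stated uses both corners), so a short direct argument that the lower-left-corner condition together with mark $q$ already pins $\Node{s}$ to $\Pathq{q}$ with depth in $[\ell_1,\ell_2]$ is what is actually needed there.
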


\begin{lemma}
Given an $\Path \in \Qpaths$, a quadrant $q$, and values $r_1 \leq r_2]$, 
one can determine in $O(\log^5 n)$ time whether there is some square $s$ with $\Node{s} \in \Pathq{q}$ and with $\Interval{\Pathq{\qa}}{s} = [\ell,r]$ where $r \in [r_1,r_2]$ and among all such squares has maximal $\ell$.
\end{lemma}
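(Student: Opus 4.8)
The plan is to handle this exactly as the mirror image of the preceding lemma (the Report-Leftmost / minimal-$r$ case), using the decision procedure of Lemma~\ref{l:algogeom} as an oracle inside a binary search. The quantity we are after is $\ell^\star \coloneqq \max\{\,\Depth{s}\ :\ \Node{s}\in\Pathq{q},\ \Depthmax{\Pathq{q}}{s}\in[r_1,r_2]\,\}$, together with a witnessing square. Observe that $\ell^\star$, if it exists, is the depth of a node of type $q$ on $\Path$ (namely $\Node{s}$ for the optimal $s$), and that for any threshold $t$ the predicate ``$\exists s$ with $\Node{s}\in\Pathq{q}$, $\Depth{s}\ge t$ and $\Depthmax{\Pathq{q}}{s}\in[r_1,r_2]$'' is monotone in $t$. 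Hence it suffices to binary-search for $\ell^\star$ over the $O(n)$ depths of nodes of type $q$ on $\Path$; the navigation needed at each step (locating the median depth of a contiguous block of path-node depths, and predecessor/successor among them) is done in $O(\log n)$ via the compressed quadtree equipped with the link-cut tree of Section~\ref{sec:squares_dynamic}, as elsewhere.

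First I would normalize $r_1,r_2$: replace $r_1$ by the smallest, and $r_2$ by the largest, depth of a node of type $q$ on $\Path$ lying in $[r_1,r_2]$ — this changes no eligible square, since every $\Depthmax{\Pathq{q}}{s}$ is such a depth — and report ``no square'' if none exists. Then, at each binary-search step with threshold $t$, I would evaluate the predicate above by invoking Lemma~\ref{l:algogeom}. The one point genuinely different from the minimal-$r$ case is that the left-endpoint range one wants to probe, $[t,\cdot]$, may overlap the fixed right-endpoint range $[r_1,r_2]$ (since $\ell\le r\le r_2$ only forces $\ell\le r_2$, not $\ell\le r_1$), whereas Lemma~\ref{l:algogeom} is stated for $\ell_1\le\ell_2\le r_1\le r_2$. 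I would resolve this as the preceding lemma does: split into the sub-case $\ell^\star\le r_1$, where probes of the form $[t,r_1]\times[r_1,r_2]$ are legitimate inputs to Lemma~\ref{l:algogeom} and directly amenable to binary search, and the sub-case $r_1<\ell^\star\le r_2$, where the optimal square's whole interval lies in $(r_1,r_2]$ and one decomposes $(r_1,r_2]$ into a bounded number of disjoint depth-ranges so that every call fed to Lemma~\ref{l:algogeom} again has its $\ell$-range entirely to the left of its $r$-range. Once $\ell^\star$ is pinned down, a final call with left-endpoint range $[\ell^\star,\ell^\star]$ and right-endpoint range $[r_1,r_2]$ returns the square itself.

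For the running time: each call to Lemma~\ref{l:algogeom} costs $O(\log^4 n)$, there are $O(\log n)$ binary-search steps, and — organizing the overlap handling so the extra decomposition adds only a constant number of calls per step, as in the companion lemmas — the total is $O(\log^5 n)$, matching the statement (and completing the proof of Lemma~\ref{l:squaresiqds}). I expect the only place requiring real care is precisely this overlap between the left- and right-endpoint depth ranges: getting the case split right, and checking that every probe handed to Lemma~\ref{l:algogeom} meets its precondition that the four supplied depths are depths of actual nodes of type $q$ on $\Path$. Everything else — the monotonicity justifying the binary search, the $O(\log n)$-time manipulation of the unbalanced path via the link-cut tree, and the constant-complexity $L$-shaped regions $R_\ell,R_r$ of Lemma~\ref{l:geomofintervals} underlying Lemma~\ref{l:algogeom} — is routine given the machinery already developed.
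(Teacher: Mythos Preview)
Your proposal is correct and follows the same approach as the paper---binary search using Lemma~\ref{l:algogeom} as the decision oracle---though the paper's own proof is literally one sentence covering both this lemma and its companion (``Use Lemma~\ref{l:geomsearch} and binary search to find the minimal $r$/maximal $\ell$''). You have supplied considerably more detail than the authors, including the ordering-constraint subtlety $\ell_1\le\ell_2\le r_1\le r_2$ in Lemma~\ref{l:algogeom}, which the paper does not address at all; the one place you might tighten is the assertion that the sub-case $r_1<\ell^\star\le r_2$ decomposes into a \emph{constant} number of legitimate calls per binary-search step, since a naive decomposition over type-$q$ depths in $(r_1,r_2]$ would not be.
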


\begin{proof}
(of both lemmas) Use Lemma~\ref{l:geomsearch} and binary search to find the minimal $r$/maximal $\ell$.
\end{proof}

\begin{lemma}
Given a path $\Path \in \Qpaths$ and quadrant $q$, and a square $s \in \Squares{\Pathq{q}}$, the values of $\Depth{s}$ and $\Depthmax{\Pathq{q}}{s}$ can be computed in $O(\log n)$ time
\end{lemma}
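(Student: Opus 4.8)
The statement to prove is that, given a path $\Path \in \Qpaths$, a quadrant $q$, and a square $s \in \Squares{\Pathq{q}}$, the quantities $\Depth{s}$ and $\Depthmax{\Pathq{q}}{s}$ can each be computed in $O(\log n)$ time. Recall $\Depth{s} = \Depth{\Node{s}}$ is the depth in $\Qtree$ of the quadtree node associated with $s$, and $\Depthmax{\Pathq{q}}{s} = d(\node'_{\Pathq{q}}(s))$ where $\node'_{\Pathq{q}}(s)$ is the last node on the monotone subpath $\Pathq{q}$ whose center $s$ contains. The plan is to reduce both computations to $O(\log n)$-time operations on structures that the quadtree/path structures already maintain: the link-cut tree on the compressed quadtree (which supports oracle search in $O(\log n)$ time, as cited via~\cite{DBLP:journals/algorithmica/AronovBDGILS18}) and the red-black trees that the path structure keeps over the nodes of each monotone subpath sorted by depth.

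First I would handle $\Depth{s}$. The coordinates of $\Node{s}$ are computable from $s$ directly by arithmetic (a binary logarithm, a floor, a binary exponentiation), as noted in the footnote on associating squares with nodes. Given these coordinates, its depth can be read off directly from the side length relative to the root, or, if one insists on working only within the stored structures, $\Node{s}$ can be located in the compressed quadtree by an oracle search on the link-cut tree in $O(\log n)$ time, and the depth annotation stored at that node (or inferred from its side length) returned. Either way the cost is $O(\log n)$.

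The main obstacle, and the interesting part, is computing $\Depthmax{\Pathq{q}}{s}$. The key geometric fact I would invoke is the second bullet of Lemma~\ref{l:monotone}: the centers of the nodes of $\Pathq{q}$ are monotone in $x$ and $y$ (say, going to the upper right for $q=\qa$, w.l.o.g.), and $s$ contains the centers of exactly a contiguous run of nodes of $\Pathq{q}$ — those with depths in $\Pathq{q}[\Interval{\Pathq{q}}{s}]$. So the set of depths $d$ on $\Pathq{q}$ with $\Squarecenter{\Pathq{q}[d]}$ inside $s$ is an interval of the sorted depth sequence, and $\Depthmax{\Pathq{q}}{s}$ is simply its right endpoint. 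Since the monotone subpath's nodes are stored in a balanced search tree keyed by depth, I would binary-search this tree: for a candidate node $\node$ on $\Pathq{q}$, the predicate ``$s$ contains $\Squarecenter{\node}$ and some node deeper on $\Pathq{q}$ might too'' is decidable in $O(1)$ time from the center coordinates of $\node$ (using monotonicity: once a deeper center leaves $s$ on the $x$- or $y$-side, all strictly deeper ones do too, because the centers move monotonically away), so the search runs in $O(\log n)$ time. Care is needed about which balanced tree we are searching and whether the path structure actually stores all nodes of $\Pathq{q}$ (as opposed to only those associated with squares); here one uses that the compressed quadtree together with the link-cut tree lets one navigate $\Pathq{q}$ — a monotone subpath is obtained by repeatedly descending into quadrant $q$ — and the oracle-search capability of link-cut trees again yields $O(\log n)$ for locating the deepest node of type $q$ whose center lies in $s$. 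I would phrase the proof to use whichever of these (the path structure's red-black tree, or an oracle search on the link-cut tree over the $q$-monotone subpath) is cleanest given the invariants already established, and the running time is $O(\log n)$ in either case.

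Putting the two parts together gives the lemma. I expect the only subtlety to be pinning down precisely which stored structure supports the monotone binary search for $\node'_{\Pathq{q}}(s)$ and verifying the $O(1)$-time decidability of the search predicate from monotonicity of the centers; the depth computation for $\Node{s}$ itself is routine.
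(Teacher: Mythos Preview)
Your proposal is correct and takes essentially the same approach as the paper: compute $\Depth{s}$ from the side length of $\Node{s}$ via a discrete logarithm, and compute $\Depthmax{\Pathq{q}}{s}$ by a binary search along the nodes of $\Pathq{q}$ using the link-cut tree, with the monotonicity of the centers (Lemma~\ref{l:monotone}) guaranteeing the search predicate is monotone. One small correction: the red-black trees maintained by the path structure store only the interval-independent subset $\Iintapprox{\Pathq{q}}$, not all nodes of $\Pathq{q}$, so they cannot be used for this search; your fallback to the link-cut tree is exactly what the paper does.
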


\begin{proof}
From $s$ we can compute $\Node{s}$, and from the size of $\Square{\Node{s}}$ the negation of the depth can be obtained by a discrete binary logarithm. For $\Depthmax{\Pathq{q}}{s}$, we need to find the depth of the deepest node in the part of $\Pathq{q}$ starting at $\Node{s}$ that $s$ intersects. This is just a binary search along the path restricted to the nodes of quadrant $q$ which can be done with the aid of the link-cut tree in $O(\log n)$ time.
\end{proof}

\subsection{Splitting and Merging of Interval Structures} \label{s:splitmergeintervals}
This subsection describes in detail what the path structures do when they need to split or merge interval structures. First we describe why this is not straightforward and provide some justification for the overall complexity of our structure.

The curious reader who has made it this far may wonder why we have gone through all the effort to make sure that the dynamic interval structure only interfaces with the intervals via the operations of Lemma~\ref{lem:ors}, and that it does not store any intervals not in the current independent set. Can't it just store in a BST all intervals that are currently in its (full, not just independent) set? Alas, the answer is no. There are two reasons why our more complex approach is needed. The first is that when a node transitions form internal to multichild, and thus its square would join a path, we would need to create a structure representing a set of possibly large size. If we had to pass all these intervals to the independent set, we would lose our runtime guarantees. We get around this by the fact that these squares/intervals are already in the search structure and as we will by changing the mark of all of them (in polylog time) they will be visible via the search structure to the interval structure.

\begin{figure}[ht]
    \centering
    \includegraphics{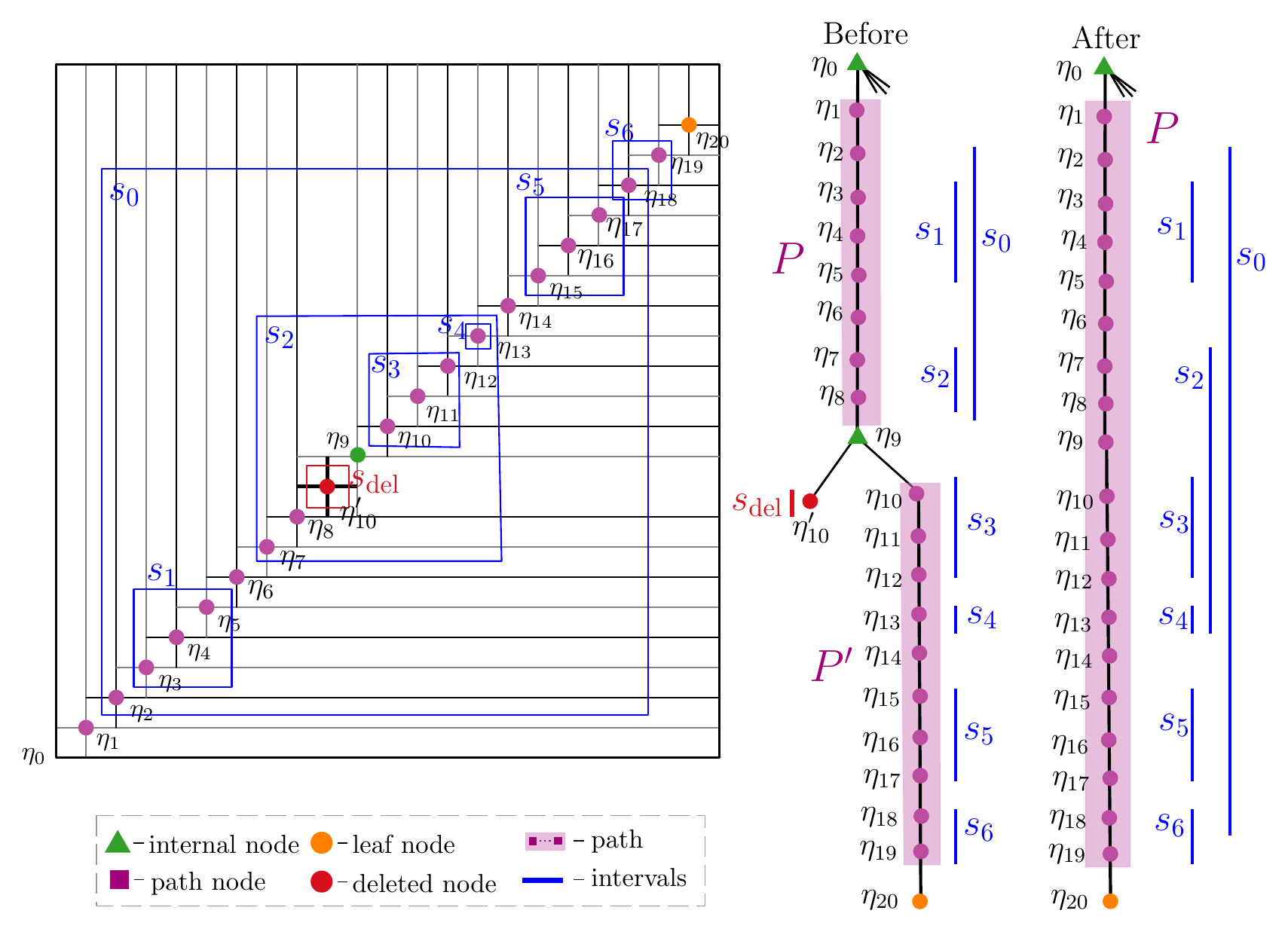}
    \caption[fuck]{Intervals change. The interval of a nodes $s$ on a path $\Pathq{\qa}$ spans the range of depths from the depth of highest node on  $\Pathq{\qa}$ that it intersects the center of to the depth of the deepest node on $\Pathq{q}$ that it intersects the center of. So, for example, in the figure, square $s_0$ has node $\node(s_0)=\node_2$ which is on the path from $\node_1$ to $\node_8$. As it intersects the centers of the nodes from $\node_2$ to $\node_8$, its interval $\Interval{\Pathq{\qa}}{s}=[\Depth{\node_3},\Depth{\node_8}]$. But, what happens if the square $s_{\text{del}}$ is deleted, which causes the leaf $\node'_10$ to be removed? The node $\node_9$ would then be monochild instead of being an internal node, and the path $P$, the node $\node_9$, and the path $P'$ will be merged into one path, illustrated on the right. But now, on this path now there are nodes beyond the former last one $\node_8$, which square $s_0$ intersects. As a result $\Interval{\Pathq{\qa}}{s}$ is now $[\Depth{\node_3},\Depth{\node_{18}}]$. Note that this expansion of an interval only happens in the limited case where an interval included already the last node on a path.
    
    Observe that if one were to view this process in reverse, starting with the after picture and inserting $s_{\text{del}}$ the effect is to take the two intervals which span the node which becomes internal and breaks the path into two, $s_2$ and $s_0$, and clip them to the depth of the bottom node of the new top path.
    }
    \label{fig:depthschange}
\end{figure}

The second subtlety is that given a square $s$, 
its interval $\Interval{\Pathq{q}}{s}\coloneqq [\Depth{s},\Depthmax{\Path}{s}]$ could change! Recall that $\Depthmax{\Pathq{q}}{s}$ is the maximum depth of the deepest node on $\Pathq{q}$ that $s$ intersects the center of. But what happens if $s$ intersects the deepest node on $\Pathq{q}$, and then because of merge, $\Path$ becomes longer? The depth $\Depthmax{\Path}{s}$ could increase.  See Figure~\ref{fig:depthschange} for worked-out example of how this can happen. Our notation has reflected this $\Interval{\Pathq{q}}{s}$ includes the monotone path $\Pathq{q}$ precisely because the interval is a function of the monotone path and can change if the monotone path that it lies on changes.

So, we must not store the intervals $[\Depth{s},\Depthmax{\Path}{s}]$ explicitly, as in a single path merge, an unbounded number of intervals could change, and what they change to will depend on the path that is merged on the bottom. This seems hopeless, until one realizes that this uncertainty as to the right endpoint of an interval is only among those intervals that intersect the last node in the path, thus only those whose right endpoint is to the right of the rightmost left endpoint. This is the magic of the search structure, given a square $s$ on a node of type $q$ on path $\Path$ and the $\Pathtop$ and $\Pathbottom$ of the path, the depth of the deepest node of type $q$ on the path that $s$ intersects can be computed, and this is very much a function of the path.

Given all of this, suppose we have two path structures $P_1$ and $P_2$ that are to be merged, where $P_1$ has a dynamic intervals structure $S_1$ and $P_2$ has dynamic interval structure $S_2$. 
Suppose $P_1$ has nodes with depths from $d_1$ to $d'_1$ and $P_2$ has depths from $d_2$ to $d'_2$, with $d_1\leq d'_1 < d_2 \leq d'_2$.

We make first make one note that when storing an interval representing depths $[a,b]$, we actually store $[a-\frac{1}{3},b+\frac{1}{3}]$ in the dynamic interval interval structure. This makes no difference to anything said so far as the intervals we store and the queries we make have have integer depths, but will allow us in the next paragraph to insert an interval which is sure to be disjoint or contained in all others.

Thus, before merging, we know that $S_1$ stores intervals with coordinates in $[d_1-\frac{1}{3},d'_1+\frac{1}{3}]$, $S_2$ stores intervals in the range $[d_2-\frac{1}{3},d'_2+\frac{1}{3}]$, and as the depths are integer with $d'_1<d_2$, these ranges are disjoint.
After the merge, some intervals in $S_1$ which had as their right endpoints as $d'_1$ may now have endpoints in $[d_2,d'_2]$. As shown in Lemma~\ref{lem:insert_superset} the dynamic interval structure can support intervals growing, so long as they are not part of the independent set. So, before merging the structures, we insert the interval $[d'_1-\frac{1}{6},d'_1+\frac{2}{6}]$ into $S_1$. This has the effect that any intervals that might be elongated are contained in the newly inserted interval and due to the $k$-valid property are not part of the independent set. It is at this point that we view the intervals as being elongated.
Then the merge operation is carried out on the structure. Then $[d'_2+\frac{1}{6},d'_2+\frac{2}{6}]$ is removed from the resultant structure.

For splitting a similar phenomenon occurs, where if we are to split an path into two paths with depths in the ranges $[d_1,d'_1]$ and $[d_2,d'_2]$, during the split any intervals which spanned these two ranges will be store in the first dynamic interval structure and will have their right endpoints clipped to $d'_1$. In Observation~\ref{obs:valid_leftmost} we have shown that this can be done.

\subsection{Hypercubes and Beyond} \label{s:hyper}

We note that while we have presented everything for squares, everything holds for hypercubes in higher dimension. 

Two $2^d$ factors are lost in the approximation for a total loss of $2^{2d}$. The first comes from the chance that a hypercube is centered.
Quadtrees naturally extend to higher dimensions, as to the notions of a monotone subpath. In dimension $d$, there will be $2^d$ different monotone subpaths so by taking the best 
of them (or an approximation of the best) will lose a factor of $2^d$ rather than the 4 of Fact~\ref{f:max4}.

As for the runtime, the searching structure will be $2d$ dimensional instead of $4$ dimensional, thus queries in this structure will cost $O(\log^{2d} n)$ instead of $O(\log^4 n)$. As we binary search in that structure, this brings the final runtime up to $O(\log^{2d+1} n)$ instead of $O(\log^5 n)$.

Alas, out methods do not extend obviously to general rectangles or circles. Our initial trick of throwing away all squares that are not centered relative to the quadtree works well for any fat object, but will fail for general rectangles.
For circles or other non-rectangular fat objects, Lemma~\ref{l:monotone}, as illustrated in Figure~\ref{l:monotone} relies very much on the fact if an object contains two points in opposite quadrants, it must contain the origin; this fact holds only for axis-aligned rectangles. Thus causes our arguments to break down for circles, and other objects of possible interest, including non-axis-aligned squares.

\clearpage
\bibliographystyle{plain}
\bibliography{MIS}

\end{document}